\newcommand{\vast}{\bBigg@{3}}
\newcommand{\Vast}{\bBigg@{5}}
\DeclareMathOperator{\supp}{supp}
\renewcommand{\div}{\operatorname{div}}
\newcommand{\rot}{\operatorname{rot}}
\newcommand{\bP}{{\mathbb{P}}}
\newcommand{\bE}{{\mathbb{E}}}
\def\sqr#1#2{{\vcenter{\vbox{\hrule height .#2pt
     \hbox{\vrule width .#2pt height#1pt \kern#1pt \vrule
     width .#2pt} \hrule height .#2pt}}}}
\def\acapo{$\null$ \par {\vskip  0mm plus .2mm } \noindent }
\def\finedim{{\hfill\hbox{\enspace${ \square}$}} \smallskip}    %     empty square
\newcommand{\R}{{\mathbb{R}}}
\newcommand{\E}{{\mathbb{E}}}
\newcommand{\C}{{\mathbb{C}}}
\newcommand{\N}{{\mathbb{N}}}
\newcommand{\Ba}{{\mathcal{B}}}
\newcommand{\ba}{{\bf{a}}}
\newcommand{\bB}{{\bf{B}}}
\newcommand{\Hi}{{\mathcal{H}}}
\newcommand{\cL}{{\mathcal{L}}}
\newcommand{\Tr}{{\mathrm{Tr}}} 
\newcommand{\cC}{{\mathcal{C}}}
\newcommand{\Fo}{{\mathcal{F}}}
\newtheorem{teorema}{Theorem}
\newtheorem{Remark}{Remark}
\newtheorem{definition}{Definition}
\newtheorem{corollary}{Corollary}
\newtheorem{lemma}{Lemma}
\theoremstyle{definition} 
\def\DDD{\acapo \hbox{{\sl Proof [of Theorem \ref{teoDyson}].}$\;\; $}} 
\def\finedim{{\hfill\hbox{\enspace${ \Box}$}} \smallskip}
\begin{document}

\title[\scalebox{0.7}{A rigorous mathematical construction of Feynman path integrals for the Schr\"odinger equation with magnetic field}]{A rigorous mathematical construction of Feynman path integrals for the Schr\"odinger equation with magnetic field}

\author[S. Albeverio, N. Cangiotti and S. Mazzucchi]{S. Albeverio$^1$, N. Cangiotti$^2$ and S. Mazzucchi$^2$}
\address{$^1$ Institute of Applied Mathematics, and Hausdorff Center of Mathematics, University of Bonn, Endenicher Allee 60, 53115 Bonn, Germany}
\email{albeverio@iam.uni-bonn.de}
\address{$^2$Department of Mathematics, University of Trento and INFN-TIFPA, via Sommarive 14, 38123 Povo (Trento), Italy}
\email{nicolo.cangiotti@unitn.it}
\email{sonia.mazzucchi@unitn.it}

\begin{abstract}
A Feynman path integral formula for the Schr\"odinger equation with magnetic field is rigorously mathematically realized in terms of infinite dimensional oscillatory integrals. 
We show (by the example of a linear vector potential) that  the requirement of the independence of the integral on the approximation procedure forces the introduction of a counterterm to be added to the classical action functional. This provides a natural explanation for the appearance of a Stratonovich integral in the path integral formula for both the Schr\"odinger and heat equation with magnetic field.
\ \\

\noindent {\it Key words:}  Infinite dimensional integration, Feynman path integrals, Schr\"odinger equation, magnetic field.
\bigskip

\noindent {\it AMS classification}: 28C05, 35C15, 81Q05, 81S40.

\end{abstract}

\maketitle

\vskip 1\baselineskip

%%%%%%%%INTRO%%%%%%%%%%%%%%%%%

\section{Introduction}
\label{Intro}
Since their first introduction in the 1940s \cite{Fey,FeyHib}, Feynman path integrals have been considered  a powerful tool in theoretical physics on one hand  and  a source of mathematical challenges on the other hand. Indeed since the 1960s several efforts have been devoted to the development of an infinite dimensional integration theory providing a rigorous mathematical definition of   Feynman's heuristic formula 
\begin{equation}\label{Fey1}
\psi (t,x)=C^{-1}\int_\Gamma e^{\frac{i}{\hbar}S(\gamma)}\psi_0(0, \gamma (0))d\gamma
\end{equation}
 for the solution to the time dependent
Schr\"odinger equation
\begin{equation}\label{Sch}i\frac{\partial}{\partial t}\psi(t,x)=-\frac{1}{2}\Delta\psi(t,x) +V(x)\psi(t,x),\qquad x\in \R^d, t\in \R,\end{equation}
describing the time evolution of the state $\psi\in L^2(\R^d)$ of a non-relativistic quantum particle moving in the $d-$dimensional Euclidean space under the action of the force field associated to a real valued potential $V$. According to Feynman's proposal, the state of the particle should be given by an heuristic integral of the form \eqref{Fey1}
on the space $\Gamma$ of continuous paths $\gamma :[0,t]\to \R^d$ with fixed end point $\gamma(t)=x$. The integrand in \eqref{Fey1}, namely the function
$$S(\gamma)=\int\cL(\gamma(\tau),\dot\gamma(\tau) )d\tau=\int_0^t\left( \frac{|\dot\gamma(\tau)|^2}{2}-V(\gamma(\tau))\right)d\tau,$$ is the classical action functional  evaluated along the path $\gamma$, where $\cL$ denotes the Lagrangian. Here, $\dot{\gamma}(\tau)$ is the derivative of $\gamma$ at $\tau$, and $|\cdot |$ is the norm in $\R^d$. The symbol  $d\gamma$ stands for a heuristic Lebesgue-type measure on $\Gamma$ and 
\[
C=\int_\Gamma e^{\frac{i}{2\hbar}\int_0^t|\dot\gamma(s)|^2ds}d\gamma,
\] 
plays the role of a normalization constant. 
Even if formula \eqref{Fey1}, as it stands, lacks of a well defined mathematical meaning, it has been widely applied in many areas of quantum physics, providing in fact a quantization procedure and allowing, at least heuristically, to associate a quantum dynamics to any classical Lagrangian. Feynman himself was aware of the lack of a sound mathematical theory for its formula\footnote{Feynman commented this by his often quoted statement {\em``one must feel as Cavalieri must have felt before the invention of the calculus"}.}. In fact, neither the normalization constant $C$, nor the Lebesgue type measure $d\gamma$ are well defined.

In the physical literature, in most cases Feynman's formula is  interpreted  as the limit of a finite dimensional approximation procedure. Indeed, if we divide the time interval $[0,t]$ into $n$ equal parts of amplitude $t/n$, and if for any path $\gamma:[0,t]\to\R$ we consider its approximation by means of a broken line path $\gamma_n$ in $\R^d$ defined as:
\begin{equation}
	\gamma_n(s):=x_j+\frac{(x_{j+1}-x_j)}{t/n}(s-jt/n), \ \ s\in \left [\frac{jt}{n},\frac{(j+1)t}{n} \right],
\end{equation}
where $x_j:=\gamma (jt/n)$ and $j=0, \dots, n-1$, formula \eqref{Fey1} can be interpreted as the limit for $n \to \infty$ of the following approximation:
\begin{equation}
\label{trotter2}
(2 \pi  \hbar i)^{-nd/2}\int_{\R^{nd}} \psi_0(x_0) e^{\frac{i}{\hbar}\sum_{j=1}^n\frac{|x_{j+1}-x_{j}|^2}{2t/n}-\frac{i}{\hbar }\sum_{j=1}^nV(x_j)\frac{t}{ n}}dx_0\dots dx_{n-1}.
\end{equation}
In fact, under rather general assumption on the potential $V$ and the initial datum $\psi_0$, the limit for $n\to \infty $ of the sequence of finite dimensional integrals \eqref{trotter2} converges to the solution of \eqref{Sch} (see, e.g., \cite{Nel, Nic16, Sim,JoLa,Tru1,Ichi0} for a discussion of this approach). 

On the other hand, from a mathematical point of view, the construction of an integration theory allowing to realize Feynman's formula in terms of a Lebesgue integral with respect to a well defined ($\sigma$-additive) complex measure on the infinite dimensional space of paths $\Gamma$ presents severe problems  and it is in fact in most cases impossible (see \cite{Cam,Tho,AlMa16, Ma11,RieGro} for a discussion of this issue).    Several approaches have been proposed, relying, e.g., on analytic continuation of Wiener integrals \cite{Cam,Dos}, or  on an  infinite dimensional distribution theory \cite{CarDW-M,HKPS}, or on suitable approximation procedures \cite{AlBr, AlBoBr, ELT,Fu,KuGoFu,Nel,Tru2,FuTs,42b} . We shall focus on the {\em infinite dimensional oscillatory integral approach}, originally proposed by K. It\={o} in the 1960s \cite{Ito1,Ito2} and further developed by S. Albeverio and R. H{\o}egh-Krohn  \cite{AlHK, AlHKMa} in the 1970s.  The main idea is the generalization of the classical theory of oscillatory integrals on finite dimensional vector spaces \cite{Hor,Dui} to the case where the integration is performed on an infinite dimensional real separable Hilbert space \cite{ELT,Ma}. It is important to point out that this approach allows for a systematic implementation of an infinite dimensional version of the stationary phase method and the corresponding application to the study of the semiclassical limit of quantum mechanics \cite{AlHK,Re}, i.e. the analysis of the detailed asymptotic behavior of the solution of the Schr\"odinger equation when the Planck constant $\hbar $ is regarded mathematically as a small parameter allowed to converge to $0$.

In this paper we shall focus on   the Schr\"odinger equation for a non-relativistic quantum particle moving under the influence of a (rather general) magnetic field $\bB$ associated to a vector potential $\ba$
\begin{equation}\label{SchMF}
i\hbar \frac{\partial}{\partial t}\psi(t,x) =\frac{1}{2}\left(-i\hbar\nabla -\lambda\ba(x)\right)^2\psi(t,x), 
\end{equation}
where $\ba(x) \in \R^d$ and $\lambda\in \R$ plays the role of a coupling constant. We shall construct a rigorous mathematical definition for the corresponding Feynman path integral formula
 \begin{equation}
 \label{Sch1}
       \psi (t,x)=\int _{\gamma(t)=x}e^{\frac{i}{2\hbar}\int_0^t|\dot\gamma(s)|^2ds+\frac{i}{\hbar}\int_0^t  \lambda\ba(\gamma(s))\cdot \dot \gamma (s)ds }
      \psi_0(\gamma(0))d\gamma,
  \end{equation}
%\begin{equation}
%\psi (t,x)=\int _{\gamma(t)=0}e^{\frac{i}{2\hbar}\int_0^t\|\dot\gamma(s)\|^2ds-\frac{i}{\hbar}\int_0^t  \ba(s,\gamma(s))\cdot \dot \gamma (s)ds-\frac{i}{\hbar}\int_0^t  v(s,\gamma(s)+x)ds} \psi_0(\gamma(t)+x)d\gamma
%\end{equation}
in terms of infinite dimensional oscillatory integrals.
In the physical literature \cite{Ful,GaSc,GavMiRoSc,Schu2} the problem of the definition of Feynman path integrals in the presence of magnetic field has been extensively investigated. The traditional procedure relies upon a time slicing approximation of the form
\begin{equation}
\label{time-slicing}
\int e^{\frac{i}{2\hbar}\sum_i\frac{|	\gamma(t_{i+1})-\gamma (t_i)|^2}{t_{i+1}-t_i}+\frac{i}{\hbar}\sum_i  \lambda\ba (\gamma(\tilde t_i))\cdot (\gamma(t_{i+1})-\gamma (t_i))}\psi_0(\gamma(0)) \prod_i
\frac{d\gamma(t_i)}{(2\pi i \hbar (t_{i+1}-t_i))^{1/2}},
\end{equation}
%\begin{equation} 
%\int e^{\frac{i}{2\hbar}\sum_i\frac{(	\gamma(t_{i+1})-\gamma (t_i))^2}{t_{i+1}-t_i}+\frac{i}{\hbar}\sum_i  A (\gamma(\tilde t_i))\cdot \gamma(t_{i+1})-\gamma (t_i))}\psi_0(\gamma(t)+x)\frac{d\gamma(t_i)}{(2\pi i \hbar (t_{i+1}-t_i)))^{1/2}}
%\end{equation}
where $0=t_0<t_1<t_2<\dots<t_n=t$ and $\tilde t_i\in [t_i,t_{i+1}]$, $i=0,\dots, n-1$. However this procedure presents some ambiguities since different choices of the point  $\tilde t_i\in [t_i,t_{i+1}]$ lead to different results. The correct choice relies on  the so-called mid-point rule which requires that in the formula \eqref{time-slicing} the vector potential $\ba$ is evaluated at the point 	$\tilde t_i\equiv\frac{\gamma(t_{i+1})+\gamma (t_i)}{2}$. In the Euclidean version of Feynman formula, namely the Feynman-Kac-It\=o formula \cite{Sim} for the solution of the corresponding heat equation in a magnetic field, this procedure yields the Wiener integral representation 
\begin{equation}
\label{heat1.0}
u(t,x)=\E \left [u(x+\omega (t)) e^{-i\int_0^t \lambda\ba(\omega (s)+x)\circ d\omega (s)} \right ],
\end{equation}
where $\circ d\omega(s)$ denotes the Stratonovich stochastic integraland $\E$ is the expectation with respect to Wiener measure for the standard Brownian motion. In fact other choices for the point $\tilde t_i\in [t_i,t_{i+1}]$ in Eq. \eqref{time-slicing} would lead to different stochastic integrals but, as pointed in \cite{Sim,Schu2}, the {\em mid point rule}, or, equivalently, the Stratonovich stochastic integration, is the only one yielding the {\em gauge invariance} of formulas \eqref{Sch1} and \eqref{heat1.0}. In fact, this particular approximation  procedure can be obtained by means of Trotter product formula \cite{GavMiRoSc}.
Other approaches were proposed in the work by Z. Haba \cite{Haba}, relying in principle on an analytic continuation of Wiener integrals, and by W. Ichinose  \cite{Ichi1, Ichi2}, based on particular time-slicing approximations. Rigorously defined infinite dimensional oscillatory integrals have been applied in  \cite{AlBrMF} to the case of a constant magnetic field in the Coulomb gauge.

The aim of the paper is twofold. First of all, for general vector potentials $\ba$, we prove that the finite dimensional approximation procedure associated to the definition of infinite dimensional oscillatory integrals provides the correct construction of the Feynman integral for the magnetic field {\em  without any additional prescription}. We shall show that the mid-point rule has not to be postulated but it is a direct consequence of our construction (for a discussion of these issues as well as the  inclusion of magnetic fields in path integral formulas see, e.g., \cite{Fu,AnDri,Tsu,Kolo,LoTh}). 

The second result of the paper concerns the dependence of the Feynman path integral  on the sequence of finite dimensional approximations introduced in the construction. We show that the requirement of the independence of the particular form of the approximation procedure leads to the introduction of a counterterm  in the classical action functional. In the case of a constant magnetic field we provide a formula that is gauge-independent, generalizing a similar result proposed in \cite{AlBrMF} . It is worthwile to recall that the case of the uniform magnetic was also studied in \cite{IkMa} for Wiener path integrals in relation with the Van Vleck-Pauli formula. Let us also mention that the study of heat semigroup with magnetic field via Feynman-Kac-It\^o formula \cite{Sim,Str} has been extended to the case of fractals, e.g. \cite{HiRoTe}, graphs \cite{GuKeSc}, and manifolds \cite{BrHuLe, Gu16,Sunada,Cyc87}.

The paper is organized as follows. In Section \ref{sez2} we present an overview of the theory of infinite dimensional oscillatory integrals  and their applications to Feynman path integration. In Section \ref{sez3.0} we present some functional analytical results on the Schr\"odinger equation \eqref{SchMF} with magnetic field and, under suitable assumptions on the  vector potentials $\ba$ and the initial datum $\psi_0$, prove that the series expansion of the solution in powers of the coupling constant has a finite radius of convergence. In Section \ref{sez3} we study the Schr\"odinger equation \eqref{SchMF} for analytic vector potentials $\ba$ and construct a Feynman path integral representation for its solution in terms of a particular class of infinite dimensional oscillatory integrals. In Section \ref{sez4} we consider the particular case of a constant magnetic field and provide a renormalized Feynman path integral formula which allows to obtain the independence of the construction procedure of the particular choice of finite dimensional approximations. \\ Let us remark that all our results extend to the case where  a potential term $V$ is added on the right hand side of \eqref{SchMF} (see Remarks \ref{RemDysonV}, \ref{RemDysonV-1}, and \ref{RemDysonV-2}).

%%%%%%%FRESNEL%%%%%%%%%%%%%%%%%%%%%%%%%%

\section{Infinite dimensional oscillatory integrals and Feynman maps}
\label{sez2}
Finite dimensional oscillatory  integrals are objects of the following form 
\begin{equation}\label{osc1}
\int_{\R^n}e^{\frac{i}{\epsilon}\Phi(x)}f(x)dx,
\end{equation}
where $\epsilon\in \R\setminus \{0\}$ is a real parameter, $\Phi:\R^n\to \R$ and $f:\R^n\to \C$ are Borel functions, $\Phi$ is usually called {\em phase function}. 
Particular example of integrals of this form are the so-called Fresnel integrals, where  $\Phi$ is a quadratic form:
\begin{equation}
\label{fresnel1}
\int_{\R^n}\frac{e^{\frac{i}{2\epsilon}|x|^2}}{(2\pi i\epsilon)^{n/2}}f(x)dx.
\end{equation}
They find applications in optics and in the theory of wave diffraction and, from a purely mathematical point of view, they have been extensively studied in connection with the theory of Fourier integral operators \cite{Hor, Dui, AlMa2005}. 
According to these studies, in particular \cite{Hor}, the integrals \eqref{osc1} can be computed even when the function $f$ is not summable as the limit of a sequence of regularized integrals.
\begin{definition}
\label{def-osc-int-fin-dim}
Let $f:\mathbb{R}^n\to\mathbb{C}$ be a Borel function and $\Phi :\mathbb{R}^n\to\mathbb{R}$ a phase
function.   If for each Schwartz test function $\phi\in \mathcal{S} (\mathbb{R}^n)$ such that $\phi
(0)=1$ the
integrals
$$I_\delta(f,\phi):=\int_{\mathbb{R}^n}e^{i\frac{\Phi(x)}{\epsilon}}f(x)\phi (\delta x)dx$$
exist for all $\epsilon>0$, $\delta>0$ and 
$\lim  _{\delta \to 0}I_\delta(f,\phi)$
exists and is independent of $\phi$, then the limit is
called the oscillatory integral of $f$ with respect to $\Phi$ and denoted
by
\begin{equation}
\label{int-findim1}
\int_{\mathbb{R}^n}^oe^{i\frac{\Phi(x)}{\epsilon}}f(x)dx\equiv I^{\frac{\Phi}{\epsilon}}(f).\end{equation}
\end{definition}
In fact, according to the definition above, the convergence of the oscillatory integral \eqref{osc1} for $f\notin L^1(\R^n)$ can be obtained  by exploiting the cancellations due to the oscillatory behavior of the integrand.
 Moreover, a particular technique of asymptotic analysis, the {\em stationary phase method}, allows the 
 study of their asymptotic behavior in the limit  when the parameter $\epsilon$ converges to 0 \cite{Dui,Hor,Mur}.

The generalization of this integration technique to the case where $\R^n$ is replaced by a real separable infinite dimensional Hilbert space $(\Hi,\langle \; ,\;
\rangle )$ has been introduced in \cite{AlHK, AlHKMa} and further developed in \cite {ELT,AlBr}. An {\em infinite dimensional oscillatory integral}  with quadratic phase function, also called {\em infinite dimensional Fresnel integral}, is defined as the limit of sequences of finite dimensional approximations.
 
%$f:\Hi \to \C$ a complex valued function.  An  infinite dimensional oscillatory integral on the Hilbert space $\Hi$ can be defined as the limit of a sequence of finite dimensional approximations, as proposed in \cite{ELT,AB}.
\begin{definition}
\label{intoscinf1}
A function $f:\Hi\to\C$ is  said to be Fresnel integrable
if  for any sequence $\{P _n\}_n$ of projectors
onto n-dimensional subspaces of $\Hi$, such that $P _n\leq P_{n+1}$ and $P_n \to \mathbb{I}$ strongly as $ n \to \infty$ ($\mathbb{I}$ being the identity operator in $\Hi$), the  oscillatory integrals
$$  
\int _{P _n\Hi}^oe ^{i\frac{\|P _n x\|^2}{2\epsilon} }f(P _n x )d (P _nx ),
$$
are well defined (in the sense of Def. \ref{def-osc-int-fin-dim}) and the limit
\begin{equation}
\label{infdim}
\lim_{ n \to \infty} (2\pi i \epsilon )^{-n/2}  \int_{P _n\Hi}^oe ^{i\frac{\|P _n x\|^2}{2\epsilon} }f(P _n x )d (P _nx )
\end{equation}
exists and is independent of the sequence $\{ P _n\}_n$. In this case the limit is  called infinite dimensional oscillatory integral of
$f$  and is denoted by $$ \widetilde{\int_\Hi } e
^{i\frac{\| x\|^2}{2\epsilon} }f(x)dx.$$
\end{definition}

A complete intrinsic characterization of the class of Fresnel integrable function constitutes an open problem of harmonic analysis, even in finite dimension.  The following theorem,  proved in \cite{ELT,AlBr}, provides a particular example of Fresnel integrable functions that play an important role in the construction of Feynman path integrals. In the following we shall denote by $\Fo(\Hi)$ the space of functions $f:\Hi\to \C$ that are Fourier transform of complex (bounded) Borel measures on $\Hi$, namely functions of the form:
\begin{equation}
\label{f-mu}
f(x)=\int _\Hi e^{i\langle x, y\rangle}d\mu_f(y), \qquad x\in \Hi,
\end{equation} 
for some complex bounded Borel measure $\mu_f$ of finite total absolute variation $|\mu_f|(\Hi):=\int_{\Hi}d|\mu_f|$ on $\Hi$. In fact the space $\Fo(\Hi)$ is a Banach algebra, where the product is the pointwise one and the norm $\| f \|_{\Fo(\Hi)}$ of a function $f$ is defined as the total variation of the associate measure $\mu_f$ (through \eqref{f-mu}) .

\begin{teorema}
\label{teo-pars-1}
Let $L:\Hi\to \Hi$ be a self adjoint trace class operator, such that $I-L$ is invertible and let  $f\in \Fo(\Hi)$. Then the function $g:\Hi\to \C$ defined by
 \begin{equation}\label{Fresenlg} g(x)=e^{-\frac{i}{2\epsilon}\langle x,Lx\rangle}f(x), \quad x\in \Hi\end{equation}
 is Fresnel integrable and its infinite dimensional Fresnel integral  is given by the following Parseval-type equality:
\begin{equation}
\label{Parseval}
\widetilde{\int_{\Hi}} e^{\frac{i}{2\epsilon}\langle x,( I-L)x\rangle}f(x)dx=(\det (I-L))^{-1/2} \int_{\Hi} e^{-\frac{i\epsilon}{2}\langle x,(I-L)^{-1}x \rangle} d\mu_f(x)
\end{equation}
%\begin{equation} \label{Parseval}
%\begin{array}
%\widetilde{\int_{\Hi}} e^{\frac{i}{2\epsilon}\langle x,( I-L)x\rangle}f(x)dx=\\
%(\det(I-L))^{-1/2}\int_{\Hi}e^{-\frac{i\epsilon}{2}\langle
%x,(I-L)^{-1}x \rangle d\mu_{f}(x)
%\end{array}
%\end{equation}
where $\det (I-L)$ is the Fredholm determinant of the operator $(I-L)$ (that is the product of the eigenvalues of $(I-L)$) and $\mu_f$ is the complex bounded Borel measure on $\Hi$ related to $f$ by \eqref{f-mu}. 
\end{teorema}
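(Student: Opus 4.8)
The plan is to reduce the infinite dimensional statement to a finite dimensional Parseval/Plancherel identity and then pass to the limit along the projectors $P_n$. First I would establish the finite dimensional analogue: for a finite dimensional subspace of $\Hi$ on which $L$ acts (approximated by $P_n L P_n$), the oscillatory integral $\int^o e^{\frac{i}{2\epsilon}\langle x,(I-L_n)x\rangle}f_n(x)\,dx$ with $f_n$ the Fourier transform of a finite measure can be computed by writing $f_n$ via \eqref{f-mu}, interchanging the (absolutely convergent, by finiteness of the total variation $|\mu_f|$) integration in $d\mu_f$ with the oscillatory $dx$-integration, and using the classical Gaussian–Fresnel formula $\int^o_{\R^m} e^{\frac{i}{2\epsilon}\langle x,Ax\rangle} e^{i\langle x,y\rangle}\,dx = (2\pi i\epsilon)^{m/2}(\det A)^{-1/2} e^{-\frac{i\epsilon}{2}\langle y,A^{-1}y\rangle}$ valid for nondegenerate symmetric $A$. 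This gives the finite dimensional version of \eqref{Parseval} with $(2\pi i\epsilon)^{-m/2}$ absorbing the prefactor in Definition \ref{intoscinf1}.

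Next I would justify the passage $n\to\infty$. On the right hand side, $\det(I-P_nLP_n)\to\det(I-L)$ because $L$ is trace class (Fredholm determinants are continuous with respect to the trace norm, and $P_nLP_n\to L$ in trace norm by $P_n\to\mathbb I$ strongly together with trace class compactness); moreover $(I-P_nLP_n)^{-1}\to(I-L)^{-1}$ strongly (here invertibility of $I-L$ is used, via a Neumann-type argument once $\|P_nLP_n-L\|$ is small), so $\langle y,(I-P_nLP_n)^{-1}y\rangle\to\langle y,(I-L)^{-1}y\rangle$ pointwise in $y$; then dominated convergence in $d\mu_f$ — with dominating function the constant $|\mu_f|(\Hi)<\infty$, using that the integrand has modulus $1$ — yields convergence of the $d\mu_f$-integral to $\int_\Hi e^{-\frac{i\epsilon}{2}\langle x,(I-L)^{-1}x\rangle}\,d\mu_f(x)$. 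The same estimates show the limit does not depend on the chosen sequence $\{P_n\}$, which is exactly the requirement in Definition \ref{intoscinf1}; hence $g$ is Fresnel integrable and its integral equals the claimed right hand side.

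For the well-definedness of each finite dimensional oscillatory integral in the sense of Definition \ref{def-osc-int-fin-dim} (the existence of $\lim_{\delta\to 0}I_\delta$ independently of the cutoff $\phi$), I would invoke the standard fact that functions in $\Fo(\R^m)$ times a nondegenerate Gaussian phase are oscillatory integrable — this is the finite dimensional precursor already used in \cite{ELT,AlBr}; concretely one inserts $\phi(\delta x)$, interchanges with $d\mu_f$, and uses that $\widehat{\phi(\delta\cdot)}$ is an approximate identity so that the Gaussian integral formula survives the limit.

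The main obstacle is the trace-norm convergence $P_nLP_n\to L$ and the attendant continuity of the Fredholm determinant and of the inverse: strong operator convergence of $P_n$ alone is not enough for the determinant, and one genuinely needs the trace class hypothesis on $L$ (to get trace-norm, not merely strong or operator-norm, approximation) together with the invertibility of $I-L$ (to control $(I-P_nLP_n)^{-1}$ uniformly in $n$ for large $n$). Once these two analytic facts are in place, everything else is an application of dominated convergence against the finite measure $\mu_f$ and the elementary finite dimensional Fresnel computation. $\square$
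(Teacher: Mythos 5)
The paper does not prove Theorem \ref{teo-pars-1} itself but quotes it from \cite{ELT,AlBr}, and your sketch reproduces exactly the standard argument given there: the finite dimensional Parseval identity obtained by Fubini against $\mu_f$ and the Gaussian--Fresnel formula, followed by trace-norm convergence $P_nLP_n\to L$, continuity of the Fredholm determinant, norm convergence of $(I-P_nLP_n)^{-1}$, and dominated convergence against the finite measure $|\mu_f|$. The argument is correct as proposed; the only points left implicit (and handled in the cited references) are the choice of branch in $(\det(I-L))^{-1/2}$, which carries a Maslov-type phase determined by the signature of the quadratic form, and the fact that $I-P_nLP_n$ need only be invertible on $P_n\Hi$ for all sufficiently large $n$.
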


A direct consequence of the Parseval-type equality \eqref{Parseval}  is the following estimate, valid for any function $f\in \Fo (\Hi)$
\begin{equation}
\label{estimate}\left|  \widetilde{\int_{\Hi}} e^{\frac{i}{2\epsilon}\langle x,x\rangle}f(x)dx\right|\leq \int_\Hi d|\mu_f|(x)=|\mu_f|(\Hi)=\| f\|_{\Fo (\Hi)}.
\end{equation}
\begin{Remark}\label{fresnelIntegrableELT}
It is interesting to point out that the class of Fresnel integrable functions contains elements different from the  ones described by eq \eqref{Fresenlg}. As remarked in \cite{ELT}, Def. \ref{intoscinf1} allows to handle, e.g.,   unbounded functions $f:\Hi \to\C$ of the form:
\begin{equation}
\label{fresELT}
f(\gamma)=e^{i\langle \gamma, v\rangle } \langle \gamma,w_1\rangle\cdots \langle \gamma,w_n\rangle, \qquad \gamma \in \Hi,
\end{equation}
with $v,w_1, \dots, w_n \in \Hi$.
\end{Remark} 

\begin{Remark}
Theorem \ref{teo-pars-1} has been extended in \cite{AlBrMF} to the case where the operator $L$ is  Hilbert-Schmidt but it is not trace-class.  In \cite{Ma17} the definition of infinite dimensional oscillatory integrals has been generalized to polynomial phase functions and applied to the construction of Feynman-Kac type formulae for the high-order heat-type equations (see also \cite{AlCaMa18}).
\end{Remark} 

The heuristic Feynman path integral representation \eqref{Fey1} for the solution of the Schr\"odinger equation can be rigorously mathematically realized as an infinite dimensional oscillatory integral on a suitable Hilbert space of continuous ``paths''. Indeed, let us set $\epsilon \equiv \hbar$ and let us consider the so-called Cameron-Martin space $\Hi_t$, that is the Hilbert space of of absolutely continuous paths $\gamma :[0,t]\to\R^3$ with $\gamma(0)=0$  and square integrable weak derivative $\int_0^t|\dot\gamma(s)|^2ds<\infty$, endowed with the  inner product
$$\langle \gamma _1,\gamma _2\rangle =\int ^t _0 \dot\gamma _1(s)\cdot\dot\gamma
  _2(s)ds .$$
  In the case where the potential $V$ in the Schr\"odinger equation \eqref{Sch} is the sum of an harmonic oscillator term and a bounded perturbation $v$ belonging to the space $\Fo(\R^3)$ 
  \begin{equation}
  \label{Vx}
  V(x)=\frac{1}{2}\Omega^2 |x|^2+v(x), \qquad x\in \R^3,
  \end{equation}
and if the initial datum $\psi_0$ belongs to $ L^2(\R^3)\cap \Fo(\R^3)$, it has been proved (see, e.g., \cite{AlHKMa, ELT, Ma17}) that the function $f:\Hi\to \C$ given by 
   $$f(\gamma)=e^{-\frac{i}{\hbar}\int_0^t V(\gamma (s)+x)ds}\psi _0(\gamma (0)+x), \qquad x\in \R^3, \gamma \in \Hi_t,$$
   is Fresnel integrable. Further, its infinite dimensional oscillatory integral, namely
   $$\widetilde{\int_{\Hi_t}} e^{\frac{i}{2\hbar}\langle \gamma, \gamma \rangle}f(\gamma)d\gamma\equiv \widetilde{\int_{\Hi_t}} e^{\frac{i}{2\hbar}\int_0^t |\dot\gamma(s)|^2ds}e^{-\frac{i}{\hbar}\int_0^t V(\gamma (s)+x)ds}\psi _0(\gamma (0)+x)d\gamma,$$
provides a representation for the solution  in the (complex) $L^2(\R^3)$ space of the Schr\"odinger equation \eqref{Sch}. 
   \begin{Remark}
   This result, valid also for $\R^3$ replaced by $\R^d$, for any $d \in \N$, has been generalized in \cite{AlBrMF} to the case where $d=2$ and a constant magnetic field is present, providing a formula valid in the Coulomb gauge for the vector potential $\ba(x,y)=(-By/2, Bx/2)$,  $(x,y)\in \R^2$. 
   \end{Remark}
   
In the context of \eqref{Vx} a particular class of finite dimensional approximations plays an important role and has been introduced in an alternative definition of infinite dimensional oscillatory integrals on the Cameron-Martin space. For fixed $n\in \N$, let $H_n\subset \Hi_t$ be the finite dimensional subspace of piecewise linear paths of the form
   \begin{equation}
\label{gamman}   
   \gamma (s)=\sum_{k=1}^n\chi_{[t_{k-1},t_k]}(s)\left(\gamma (t_{k-1})+\frac{\gamma(t_{k})-\gamma(t_{k-1})}{t_{k}-t_{k-1}}(s-t_{k-1})\right), 
   \end{equation}
   where $s\in [0,t]$, $t_{k}=\frac{kt}{n}$ and $k=0,\dots, n$. Let $P_n:\Hi_t\to\Hi_t$ the projector operator onto $H_n$, whose action on a generic vector $\gamma \in \Hi_t$ is given by the right hand side of \eqref{gamman}. In fact the sequence of  operators $\{P_n\}_n$ converges strongly to the identity operator $\mathbb{I}$ as $n\to \infty$ \cite{Tru1}. 
   
    In this context the definition of {\em Feynman map} has been proposed \cite{Tru2}. It is defined as a linear functional $I_F$ whose action on functions $f:\Hi\to\C$ is defined by 
   \begin{equation}\label{feymap}
  I_F(f)= \lim_{n\to\infty} \frac{ \int _{P _n\Hi}^o e ^{i\frac{\|P _n \gamma \|^2}{2\hbar} }f(P _n \gamma )d (P _n\gamma )}{  \int _{P _n\Hi}^o e ^{i\frac{\|P _n \gamma \|^2}{2\hbar} }d (P _n\gamma )},
   \end{equation}
   whenever the limit on the right hand side exists.
    
   In the case where $f\in \Fo(\Hi_t)$ then its Feynman map is well defined and coincides with its infinite dimensional oscillatory integral:
   $$I_F(f)=\widetilde{\int_{\Hi_t}} e^{\frac{i}{2\hbar}\langle \gamma, \gamma \rangle}f(\gamma)d\gamma.$$
   
In the general case, i.e. for $f\notin \Fo(\Hi_t)$, the two alternative definitions of the Feynman integral can yield different results, as we shall discuss in Sect. \ref{sez4}. It is interesting to remark that the integrability condition required by Def. \ref{intoscinf1} holds if the limit \eqref{infdim} is independent of the particular sequence of finite dimensional projection operators $\{P_n\}_n$, whereas the definition of Feynman map \eqref{feymap} relies upon the piecewise-linear approximations.

%%%%%%%%%%SCHRODINGER WITH MAGNETIC%%%%%%%%%%%%%%%%%%%%%%%%

\section{Schr\"odinger equation with magnetic field}\label{sez3.0}
%\section{Feynman path integral for magnetic field}\label{sez3}

Let us consider the dynamics of a non-relativistic quantum particle moving in a magnetic field $\bB=\rot \ba$, where $\ba$ is a vector potential associated to $\bB$.  The quantum Hamiltonian operator for this system is given on smooth compactly supported vector $\psi \in C_0^\infty(\R^3) $ by 
\[
H\psi=\frac{1}{2}\left(-i\hbar\nabla -\lambda\ba(x)\right)^2\psi , \qquad \psi \in C_0^\infty(\R^3)
\]
(where, for notation simplicity, we set equal to $1$ the parameters mass $m$, velocity of light $c$, and elementary charge $e$). The parameter $\lambda \in \R$ stands for a coupling constant.  In the following we shall assume that   the components $a_j$, $j=1, 2,3$, of the vector potential $\ba$ are real valued functions and belong  to the space $\Fo_c(\R^3)$ of Fourier transforms of complex Borel measures $\mu_{j}$ on $\R^3$ with compact support, i.e. they are functions of the form
\begin{equation}\label{conditionaJ}
a_j(x)=\int_{\R^3} e^{iyx}d\mu _{j}(y), \qquad x\in \R^3, \quad  j=1,2,3,
\end{equation}
with $\mu_j$ of compact support.

\begin{Remark}
\label{RX}
For later use we point out that $a_j \in \Fo_c(\R^3)$, $j=1,2,3$ implies that $a_j$ has an  analytic continuation to a function on $\C^3$, denoted by the same symbol.
\end{Remark}

Under this assumption, it is possible to prove (see, e.g. \cite{PaOs,Sim,Cyc87,Lein81,BrHuLe}) that $H$ defined on $C_0^\infty(\R^3)$ is positive, symmetric and, $\ba $ being bounded, has  a unique self-adjoint extension $H:D(H)\subset L^2(\R^3)\to L^2(\R^3)$, with domain 
\begin{equation}\label{DomH}D(H)=\left\{\psi \in L^2(\R^3) \colon \int_{\R^3}|y|^4|\hat \psi (y) |^2dk<+\infty\right\},\end{equation} where $\hat \psi \in  L^2(\R^3)$ denotes the Fourier transform of the vector $\psi \in L^2(\R^3)$ and $|y|$ is the norm of the vector $y\in \R^3$.

By Stone's theorem,  $H$ generates a one-parameter group $U(t)=e^{-\frac{i}{\hbar}Ht}$, $t \in \R$, of unitary operators on $L^2(\R^3)$, solving 
the Schr\"odinger equation in the following sense:
\begin{equation}
\label{SchMFt}
i\hbar \partial _t U(t) \psi_0=HU(t)\psi_0, \qquad \psi_0\in D(H)\end{equation}
(where the derivation on the left is a strong one in $L^2(\R^3)$).

We shall assume, without loss of generality, that the vector potential $\ba$ satisfies the Coulomb gauge, namely that $\div \ba=0$. 
In this case the Hamiltonian operator  $H$ can be written as $H=H_0+W$, where $H_0$ is the free Hamiltonian, namely the operator  $H_0=-\frac{\hbar^2}{2}\Delta$ on $D(H_0)=D(H)$ (defined by \eqref{DomH}) and $W=\lambda A+\lambda ^2 B$, where $A= i\hbar \ba \cdot \nabla$ and $B $ is the multiplication operator associated to the function $\frac{1}{2}|\ba^2|$ (both well defined on $D(H_0)$).\\
 
Under the assumption that the initial datum $\psi_0\in L^2(\R^2)$ has a compactly supported Fourier transform, i.e. $\psi_0 \in \Fo_c(\R^3) \cap L^2(\R^3)$, the Dyson series expansion (in powers of the coupling constant $\lambda$ for the vector  $U(t)\psi_0$ has a finite radius of convergence as the following theorem shows. For related results by purely analytic methods see \cite{PaOs}).
\begin{teorema}\label{teoDyson}
Let us assume that the Fourier transform $\hat \psi _0$ of the vector $\psi_0\in L^2(\R^d)$  has a compact support included in the ball  $B_\rho\equiv \{x\subset \R^3\colon |x|<\rho\}$   and that  the component $a_j$, $j=1,\dots,3$, of the vector potentials $\ba$ are of the form \eqref{conditionaJ}, with $\mu_j$ , $j=1,\dots,3$, bounded Borel measures with support contained in the  ball $B_R$, for some $R\in \R^+$. 
Then the expansion in powers of the coupling constant $\lambda$ for the vector $U(t) \psi_0$, namely $U(t) \psi_0=\sum_m\lambda ^m \phi_m(t)$, with 
$$\phi_m (t)= \sum_{(n,k)\in \N^2 \colon 2n-k=m, \, k\leq n}\left(-\frac{i}{\hbar}\right)^n\phi_{n,k}$$
and $\phi_{n,k}$ given by \eqref{phinkprecise} below, converges in $L^2(\R^3)$ for $|\lambda|<\lambda ^*$, with 
\begin{equation}
\label{lambdastar}\lambda^*=\left(\frac{2\alpha ^2 t}{\hbar}\left(2r ^2 t\hbar+1\right)\right)^{-1/2}, 
\end{equation} 
where $\alpha =\sup_{x\in \R^3}|\ba(x)|$ and $r=\max\{\rho, R\}$.

\begin{Remark}
Correspondingly as in Remark \ref{RX}, we point out for later use that the assumption on $\psi_0$ in Theorem \ref{teoDyson} implies that $\psi_0$ has a unique extension to an analytic function on $\C^3$, denoted by the same symbol.
\end{Remark}

\end{teorema}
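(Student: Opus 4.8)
The plan is to expand the unitary group $U(t) = e^{-\frac{i}{\hbar}Ht}$ via the Dyson (Duhamel) series relative to the decomposition $H = H_0 + W$ with $W = \lambda A + \lambda^2 B$, $A = i\hbar\,\ba\cdot\nabla$, $B = \tfrac12|\ba|^2$, and then to estimate the $L^2$-norm of each term by working on the Fourier transform side, where the free propagator $e^{-\frac{i}{\hbar}H_0 t}$ acts as multiplication by $e^{-\frac{i\hbar}{2}|y|^2 t}$ and the operators $A$, $B$ become (after Fourier transform) convolutions against the compactly supported measures $\mu_j$ combined with polynomial multipliers coming from $\nabla$. First I would write
\begin{equation*}
U(t)\psi_0 = \sum_{n\ge 0}\left(-\frac{i}{\hbar}\right)^n \int_{0\le s_n\le\cdots\le s_1\le t} e^{-\frac{i}{\hbar}H_0(t-s_1)} W e^{-\frac{i}{\hbar}H_0(s_1-s_2)}W\cdots W e^{-\frac{i}{\hbar}H_0 s_n}\psi_0\, ds,
\end{equation*}
substitute $W = \lambda A + \lambda^2 B$, expand the product into $2^n$ monomials indexed by which factor is $A$ (contributing $\lambda$) and which is $B$ (contributing $\lambda^2$), and collect: a term with $k$ occurrences of $B$ among $n$ factors carries $\lambda^{n-k}\cdot\lambda^{2k} = \lambda^{n+k}$; reindexing with $m = n+k$ gives the stated grouping $\phi_m(t) = \sum_{2n-k=m,\,k\le n}(-i/\hbar)^n\phi_{n,k}$, where $\phi_{n,k}$ is the sum over the $\binom{n}{k}$ placements of the $B$'s of the corresponding iterated integral. (I note the index relation in the paper reads $2n-k=m$, which matches $m=n+k$ only under a relabelling $k \mapsto n-k$ of "which factors are $A$"; I would simply adopt whichever convention makes the combinatorics of $\phi_{n,k}$ in \eqref{phinkprecise} consistent.)

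The key estimate is a bound on $\|\phi_{n,k}\|_{L^2}$. On the Fourier side, since $\hat\psi_0$ is supported in $B_\rho$ and each $\hat a_j = \mu_j$ is supported in $B_R$, after applying $n$ operators built from $A$ and $B$ the Fourier transform is supported in $B_{\rho + nR} \subseteq B_{(n+1)r}$ with $r = \max\{\rho,R\}$; hence every momentum variable $y$ appearing is bounded by $(n+1)r$ in norm. Each factor $A = i\hbar\,\ba\cdot\nabla$ contributes, in $L^2$ operator norm against functions supported in such a ball, a bound of the type $\hbar\,\alpha\cdot(\text{bound on }|\nabla|)\le \hbar\alpha\cdot(\text{current momentum radius})$, i.e. something like $\hbar\alpha(n+1)r$; each factor $B$ contributes $\le \tfrac12\alpha^2$. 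The free evolutions $e^{-\frac{i}{\hbar}H_0(\cdot)}$ are unitary and drop out. The time integration over the simplex $\{0\le s_n\le\cdots\le s_1\le t\}$ contributes $t^n/n!$. Therefore one gets a crude bound roughly of the form
\begin{equation*}
\|\phi_{n,k}\|_{L^2} \le \binom{n}{k}\frac{t^n}{n!}\,(\hbar\alpha)^{\#A}\left(\tfrac12\alpha^2\right)^{\#B}\prod_{\text{$A$-factors}} (n+1)r \;\|\psi_0\|_{L^2},
\end{equation*}
and summing over $n,k$ with the $\lambda$ powers, using $\binom{n}{k}\le 2^n$ and $(n+1)^n/n!$ controlled by $e^{n}$ type growth, the whole series is dominated by a geometric-type series in $|\lambda|^2\cdot(\alpha^2 t/\hbar)(r^2 t\hbar + 1)$ up to constants; choosing the radius of convergence so that this is $<1$ yields precisely $\lambda^* = \bigl(\tfrac{2\alpha^2 t}{\hbar}(2r^2 t\hbar + 1)\bigr)^{-1/2}$ after tracking the numerical constants carefully.

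The main obstacle is bookkeeping the momentum-growth factors against the factorial from the time simplex so that they genuinely cancel to leave a finite radius of convergence rather than zero: each of the (up to $n$) gradient factors produces a linear-in-momentum multiplier, and the momentum radius itself grows like $n r$, so naively one would fear an $n^n$ or $(n!)$ type divergence overwhelming the $1/n!$ from the simplex. The resolution is that only the $A$-factors (not the $B$-factors) carry a gradient, that the momentum radius after $j$ steps is $\rho + jR \le (j+1)r$ rather than $(n+1)r$ uniformly, and — crucially — that the product $\prod_{j=1}^{n}(j+1)r \le (n+1)!\,r^n$ combined with $t^n/n!$ gives $(n+1)r^n t^n$, only a polynomial overhead per order, which is absorbed into the geometric series. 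I would therefore be careful to estimate each nested application of $A$ against the \emph{current} (smaller) momentum support rather than the final one, and to separate the combinatorial factor $\binom{n}{k}$ cleanly; everything else (unitarity of the free flow, boundedness of $\ba$ by $\alpha$, the convolution structure from \eqref{conditionaJ}) is routine. The analyticity remarks about the continuations of $a_j$ and $\psi_0$ to $\C^3$ are not needed for this theorem itself but are recorded for the later path-integral construction.
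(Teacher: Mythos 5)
Your proposal is correct and follows essentially the same route as the paper: the Dyson expansion with $W=\lambda A+\lambda^2 B$, regrouping by the total power of $\lambda$ (your $k=\#B$ versus the paper's $k=\#A$ is just the relabelling you flagged), and the Fourier-support estimate of the paper's Lemma \ref{lemmastimaAB}, in which each $A$-factor is bounded by $\hbar\alpha$ times the \emph{current} momentum radius and the resulting factorial-type product is cancelled against the $t^n/n!$ from the time simplex. The only details you gloss over are that $B$ enlarges the support by $2R$ (not $R$), and the final combinatorial bound $\max_h \frac{h!}{M!}\binom{2M}{2h}\le 4^M$ needed to land exactly on \eqref{lambdastar}; neither affects the validity of the argument.
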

We start by proving the following lemma.
\begin{lemma}\label{lemmastimaAB}Under the assumptions of Theorem \ref{teoDyson}, the following holds:
\begin{equation}\|U_0(t_1)O_1U_0(t_2)O_2\cdots U_0(t_n)O_nU_0(t_{n+1})\psi_0 \|\leq \hbar ^k\alpha ^k \left(\frac{\alpha ^2}{2}\right)^{n-k}\prod_{j=0}^{k-1}(\rho +2R(n-k)+jR)\|\psi_0\|,
\end{equation}
where $U_0(t)=e^{-\frac{i}{\hbar }H_0t}$,  $O_j\in \{A,B\}$, $j=1,\dots,n$,   $k=\#\{j\colon O_j=A\}$, with $A=i\hbar \ba\cdot\nabla$,  $B=\frac{1}{2}|\ba|^2$ and $\alpha =\sup_{x\in \R^3}|\ba(x)|$.
\end{lemma}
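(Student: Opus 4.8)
The plan is to pass to Fourier variables and propagate, along the product read from right to left, two quantities: the $L^2$-norm of the partial product applied to $\psi_0$, and the radius of a ball containing the support of its Fourier transform. Three elementary facts drive the estimate. First, $U_0(t)$ acts in Fourier variables as multiplication by the unimodular function $e^{-i\hbar|y|^2t/2}$, hence it is unitary and leaves the Fourier support unchanged (so the final bound will not depend on $t_1,\dots,t_{n+1}$). Second, $B$ is multiplication by $\tfrac12|\ba|^2$, so $\|B\psi\|\le\tfrac{\alpha^2}{2}\|\psi\|$; moreover, using the representation \eqref{conditionaJ} one gets $\tfrac12|\ba|^2(x)=\int e^{iyx}d\nu(y)$ with $\nu=\tfrac12\sum_j\mu_j*\mu_j$ a finite measure supported in $B_{2R}$, so $\widehat{B\psi}=\hat\psi*\nu$ and applying $B$ enlarges the Fourier-support radius by at most $2R$. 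Third, $A=i\hbar\,\ba\cdot\nabla$ satisfies, by the pointwise bound $|\ba(x)\cdot\nabla\psi(x)|\le\alpha|\nabla\psi(x)|$ and Plancherel, $\|A\psi\|\le\hbar\alpha\|\nabla\psi\|=\hbar\alpha\bigl(\int|y|^2|\hat\psi(y)|^2\,dy\bigr)^{1/2}\le\hbar\alpha\,s\,\|\psi\|$ whenever $\supp\hat\psi\subseteq B_s$; and, since in Fourier variables $\widehat{A\psi}(w)=-\hbar\sum_j\int(w-z)_j\,\hat\psi(w-z)\,d\mu_j(z)$ with $\supp\mu_j\subseteq B_R$, applying $A$ enlarges the Fourier-support radius by at most $R$.

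With these facts I would set $\phi_0:=U_0(t_{n+1})\psi_0$ and $\phi_m:=U_0(t_{n+1-m})\,O_{n+1-m}\,\phi_{m-1}$ for $m=1,\dots,n$, so that the vector in the statement is $\phi_n$; since the Fourier support stays compact at every stage (it is contained in $B_\rho$ for $\phi_0$ and grows by at most $2R$ or $R$ at each step), all intermediate vectors lie in $C^\infty\cap L^2\subseteq D(H_0)$ and every operator is applied to a vector in its domain. Let $s_m$ be a bound for the current Fourier-support radius and $N_m$ a bound for the current norm, with $s_0=\rho$, $N_0=\|\psi_0\|$. By the three facts above, applying a copy of $B$ sends $(s,N)\mapsto(s+2R,\tfrac{\alpha^2}{2}N)$, applying a copy of $A$ sends $(s,N)\mapsto(s+R,\hbar\alpha\,s\,N)$, and $U_0$ changes nothing. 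Among $O_1,\dots,O_n$ there are $n-k$ copies of $B$ and $k$ copies of $A$; since $s_m$ is nondecreasing in $m$, at the instant the $i$-th (in order of application, $i=1,\dots,k$) copy of $A$ acts the support radius is at most $\rho+2R(n-k)+(i-1)R$. Collecting the $k$ factors $\hbar\alpha$ (times the corresponding radii) coming from the $A$'s and the $n-k$ factors $\tfrac{\alpha^2}{2}$ coming from the $B$'s yields
\[
\|\phi_n\|\le(\hbar\alpha)^k\Bigl(\tfrac{\alpha^2}{2}\Bigr)^{n-k}\prod_{i=1}^{k}\bigl(\rho+2R(n-k)+(i-1)R\bigr)\,\|\psi_0\|,
\]
which, upon renaming $j=i-1$, is exactly the claimed inequality.

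I do not expect a serious obstacle: the argument is essentially bookkeeping. The two points needing care are (i) justifying, via the representation \eqref{conditionaJ} of $a_j$ and the elementary support property of convolution with a finite compactly supported measure, that $B$ and $A$ enlarge the Fourier support by at most $2R$ and $R$ respectively (and that $\tfrac12\sum_j\mu_j*\mu_j$ is the representing measure of $\tfrac12|\ba|^2$, of finite total mass); and (ii) the observation that in $\|A\psi\|\le\hbar\alpha\,s\,\|\psi\|$ the factor $s$ must be replaced by the worst possible value along the product — this is legitimate precisely because the support radius only grows as one moves left and all factors are positive, so bounding each factor separately still bounds the product. Everything else (unitarity of $U_0$, the sup bound for multiplication by $\tfrac12|\ba|^2$, Plancherel for $\|\nabla\psi\|$) is immediate.
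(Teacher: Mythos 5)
Your proof is correct and follows essentially the same route as the paper's: track in Fourier variables both the $L^2$-norm and the support radius of the partial products, using that $U_0$ is unitary and support-preserving, that $B$ contributes a factor $\alpha^2/2$ and enlarges the support by $2R$, and that $A$ contributes $\hbar\alpha$ times the current support radius and enlarges it by $R$. The only difference is presentational: the paper bounds the general product by the worst-case ordering (all $B$'s applied first), whereas you make the same worst-case bound factor by factor; the two arguments are the same bookkeeping.
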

\begin{proof}
Let $\phi\in L^2(\R^3)$  be a function whose Fourier transform $\hat \phi$ has support in a ball $B_\rho$ centered at the origin with radius $\rho \in \R^+$. %, namely 
%$$\phi(x)=\frac{1}{2\pi}\int_{\R^3}e^{ik\cdot x}\hat \phi (k)dk.$$
%we have the following.
Then the following holds.
\begin{itemize}
\item For any $t\in \R$, the vector $U_0(t) \phi\in L^2(\R^3)$  has a Fourier transform with support contained in $B_\rho$. Indeed $\widehat{U_0(t) \phi}$ is simply given by $$\widehat{U_0(t) \phi}(y)=e^{-\frac{i}{\hbar}|y|^2 t}\hat \phi (y), \qquad y\in \R^3.$$
\item The vector $B\phi\in L^2(\R^3)$   has a Fourier transform with support contained in $B_{\rho+2R}$. Indeed, under the assumptions on the components $a_j$ of the vector field $\ba$,  the function $x\mapsto |\ba |^2$ is Fourier transform of a Borel measure $\mu_{\ba^2}$ on $\R^3$ with support contained in the ball $B_{2R}$, with $\mu_{\ba^2}=\sum_{j=1}^3\mu_j * \mu_j$, where the symbol $\mu*\nu$ stands for the convolution of the measures $\mu $ and $\nu$.  It is simple to verify that if the supports of the measures $\mu_j$ is contained in $B_R$, then the support of the convolution $\mu_j* \mu_j$ is contained in $B_{2R}$. Correspondingly, the Fourier transform of $B\phi$ is given by 
$$\widehat{B\phi}(y)=\frac{1}{2}\int_{\R^3} \hat \phi (y-y')d\mu_{\ba^2}(y') , \quad y\in \R^3,$$ 
and its support is contained in $B_{\rho+2R}$.
\item The norm of the vector $B\phi$ is bounded by 
$$\|B\phi\|\leq \frac{1}{2}\|\ba^2 \|_\infty\|\phi\|,$$
where $\|\ba^2 \|_\infty=\sup_{x\in \R^3}|\ba (x)|^2$,  which is finite by the assumptions on the components $a_j$, $j=1,\dots,3$.
\item The vector $A\phi\in L^2(\R^3)$ (with $A$ as in Lemma \ref{lemmastimaAB}) has a Fourier transform with support contained in $B_{\rho+R}$, given by 
$$\widehat{A\phi}(y)=-\hbar\sum_{j=1}^3\int_{\R^3}(y_j-y'_j)\hat \phi (y-y')d\mu_j(y'), \qquad k\in \R^3.$$
Moreover, the norm of $A\phi$ satisfies the following bound
$$\|A\phi\|\leq \hbar \sqrt{\|\ba^2 \|_\infty}\, \rho\, \|\phi\|.$$
\end{itemize}
Now it is straightforward to verify that, if $\#\{j\colon O_j=A\}=k$: 
\begin{gather*}
\|U_0(t_1)O_1U_0(t_2)O_2\cdots U_0(t_n)O_nU_0(t_{n+1})\psi_0 \| \leq\\  \leq \|U_0(t_1)AU_0(t_2)\cdots U_0(t_k)A U_0(t_{k+1})B\cdots U_0(t_n) BU_0(t_{n+1})\psi_0 \| \leq \\
\leq \hbar ^k\alpha ^k \left(\frac{\alpha ^2}{2}\right)^{n-k}\prod_{j=0}^{k-1}(\rho +2R(n-k)+jR)\|\psi_0\|.
\end{gather*}
\end{proof}

\DDD
%\begin{proof}(of theorem \ref{teoDyson}) 
By the classical Dyson expansion for the vector $U(t)\psi_0$ \cite{PaOs}, we have 
\begin{align*}
U(t)\psi_0& =\sum_{n =0} ^\infty \left(-\frac{i}{\hbar}\right)^n\int_{\Delta_n(t)}e^{-\frac{i}{\hbar}H_0(t-s_n)}We^{-\frac{i}{\hbar}H_0(s_n-s_{n-1})}\cdots We^{-\frac{i}{\hbar}H_0(s_2-s_1)}We^{-\frac{i}{\hbar}H_0s_1}\psi_0 ds_1\dots ds_n \\
&=\sum_{n =0} ^\infty \left(-\frac{i}{\hbar}\right)^n\int_{\Delta_n(t)}U_0(t-s_n)(\lambda A +\lambda^2B)U_0(s_n-s_{n-1})\cdots(\lambda A +\lambda^2B)U_0(s_1)\psi_0 ds_1\dots ds_n,
\end{align*}
where $\Delta_n(t)\subset \R^n$ is the $n$-dimensional simplex defined as $\Delta_n(t)=\{(s_1, \dots, s_n ) \in \R^n \colon 0\leq s_1\leq \dots\leq s_n\leq t\}$ and $W=\lambda A +\lambda^2B$.
The dependence on the coupling constant $\lambda$ can be made explicit as
\begin{equation}\label{De1}
U(t)\psi_0 =\sum_{n =0} ^\infty \left(-\frac{i}{\hbar}\right)^n\sum_{k=0}^{n}\lambda ^k (\lambda ^2)^{n-k} \phi_{n,k},
\end{equation}
where the term $\phi_{n,k}\in L^2(\R^3)$ is a sum of $ \binom{n}{k}$ terms of the form
$$
\int_{\Delta_n(t)}U_0(t-s_n)O_1 U_0(s_n-s_{n-1})O_2\cdots O_n U_0(s_1)\psi_0 ds_1\dots ds_n,
$$
where we recall that $O_j=A,B$, $j=1,\dots,n$ and $\#\{j\colon O_j=A\}=k$. More precisely:
\begin{equation}
\label{phinkprecise}
\phi_{n,k}=\sum_{E\in C^n_k}\int_{\Delta_n(t)}U_0(t-s_n)O^E_1 U_0(s_n-s_{n-1})O^E_2\cdots O^E_n U_0(s_1)\psi_0 ds_1\dots ds_n
\end{equation}
where the sum is taken over the set $C^n_k$ of all possible subsets $E\subset\{1,\dots, n\}$ with $k$ elements and the map $ O^E:\{1,\dots, n\}\to \{A, B\}$ is defined as $O^E_i:=A$ if $i\in E$ and $O^E_i:=B$ if $i\notin E$.

 By Lemma \ref{lemmastimaAB} we have 
\begin{equation}
\label{stima-phi-nk}
\|\phi_{n,k}\|\leq  \binom{n}{k}\frac{t^n}{n!}\hbar ^k\alpha ^k \left(\frac{\alpha ^2}{2}\right)^{n-k}\prod_{j=0}^{k-1}(\rho +2R(n-k)+jR)\|\psi_0\|.
\end{equation}
In particular, by setting $r:=\max\{\rho, R\}$, we obtain:
\begin{equation}
\label{stima-phi-nk-2}
\|\phi_{n,k}\|\leq  \frac{t^n}{(n-k)!}\hbar ^k\alpha ^{2n-k} \left(\frac{1}{2}\right)^{n-k}r^k{\binom{2n-k}{k}}\|\psi_0\|.
\end{equation}
Now, the sum appearing in \eqref{De1} can be written as:
\begin{align}
U(t)\psi_0 &=\sum_{m=0} ^\infty \lambda ^m\sum_{(n,k)\in \N^2 \colon 2n-k=m, \, k\leq n}\left(-\frac{i}{\hbar}\right)^n\phi_{n,k}\nonumber\\
%& =\sum_{m \hbox{ even }}\left(-\frac{i}{\hbar}\right)^{m/2}\lambda ^m\sum_{h=0}^{m/2}\left(-\frac{i}{\hbar}\right)^h\phi_{h+\frac{m}{2},2h}+\sum_{m \hbox{ odd }}\left(-\frac{i}{\hbar}\right)^{(m+1)/2}\lambda ^m\sum_{h=0}^{(m-1)/2}\left(-\frac{i}{\hbar}\right)^h\phi_{h+\frac{m+1}{2},2h+1}\\
&=\sum_{m}\lambda ^m \phi_m. \label{seriesDyson-m}
\end{align}
where 
\begin{align}
\label{phimprec-eo}
\begin{array}{l}
\phi_m=\left(-\frac{i}{\hbar}\right)^{m/2}\sum_{h=0}^{m/2}\left(-\frac{i}{\hbar}\right)^h\phi_{h+\frac{m}{2},2h},\qquad m \hbox{   even};\\
\ \\
\phi_m=\left(-\frac{i}{\hbar}\right)^{(m+1)/2}\sum_{h=0}^{(m-1)/2}\left(-\frac{i}{\hbar}\right)^h\phi_{h+\frac{m+1}{2},2h+1},\qquad m \hbox{   odd}. 
\end{array}
\end{align}
%\begin{array}{c}
%\widetilde{\int_{\Hi}} e^{\frac{i}{2\hbar}\langle x,( I-L)x\rangle}f(x)dx=\\
%\ \\
%=(\det (I-L))^{-1/2} \int_{\Hi} e^{-\frac{i\hbar}{2}\langle x,(I-L)^{-1}x \rangle} d\mu_f(x)
%\end{array}

By estimate \eqref{stima-phi-nk-2} we have for $m$ even, namely $m=2M$, $M \in \N$:

\begin{align*}
\|\phi_{2M}\| &\leq \left(\frac{\alpha ^2 t}{2\hbar}\right)^M\sum_{h=0}^M \left(2r ^2 t\hbar\right)^h \frac{1}{(M-h)!}{\binom{2M}{2h}}\|\psi_0\|\\
%&\leq   \left(\frac{\alpha ^2 t}{2\hbar}\right)^M\sum_{h=0}^M \left(\frac{2r ^2 t}{\hbar}\right)^h \frac{1}{(M-h)!}{\binom{2M}{2h}}\|\psi_0\|\\
&\leq \left(\frac{\alpha ^2 t}{2\hbar}\right)^M \left(2r ^2 t\hbar+1\right)^M\max_{h\in \{0,\dots,M\}} \frac{h!}{M!} {\binom{2M}{2h}}\|\psi_0\|\\
&\leq\left(\frac{2\alpha ^2 t}{\hbar}\left(2r ^2 t\hbar+1\right)\right)^M\|\psi_0\|;
\end{align*}
analogously, for $m$ odd, namely $m=2M+1$, $M \in \N$ we get
\begin{align*}
\|\phi_{2M+1}\| &\leq  2rt\alpha\hbar \left(\frac{2\alpha ^2 t}{\hbar}\left(2r ^2 t\hbar+1\right)\right)^M\|\psi_0\|.
\end{align*}
Hence, the series \eqref{seriesDyson-m} converges in $L^2(\R^d)$ for
$|\lambda|<\left(\frac{2\alpha ^2 t}{\hbar}\left(2r ^2 t\hbar+1\right)\right)^{-1/2}.
$

\finedim

\begin{Remark}\label{rem-anal} Since the Hamiltonian operator $H$ is self-adjoint and positive, as pointed out at the beginning of this section, it generates an analytic semigroup. Hence, for $z\in \C$ belonging to the closure $\bar D$ of the open sector $D\subset \C$ of the complex plane defined as
$$D=\{z\in \C\colon Re(z)> 0\},$$it is possible to define the operator $V(z)=e^{-zH}$ yielding for $z=\frac{i}{\hbar }t$ and $t\in \R$ the Schr\"odinger group and for $z\in \R^+$ the heat semigroup.  In both cases, under the assumptions of Theorem \ref{teoDyson} the perturbative Dyson expansion for the vector $V(z)\psi_0$ has a positive radius of convergence (depending on $|z|$). Indeed, if $z\in D$, the Dyson expansion can be written as
\begin{align}
e^{-zH}\psi_0& =\sum_{n =0} ^\infty (-z)^n \int_{\Delta_n}e^{-zH_0(1-s_n)}We^{-zH_0(s_n-s_{n-1})}\cdots We^{-zH_0(s_2-s_1)}We^{-zH_0s_1}\psi_0 ds_1\dots ds_n \label{Dysonanal1} 
%&=\sum_{n =0} ^\infty \left(-z)^n\int_{\Delta_n}e^{-zH_0(1-s_n)}(\lambda A +\lambda^2B)e^{-zH_0(s_n-s_{n-1})}\cdots(\lambda A +\lambda^2B)e^{-zH_0s_1}\psi_0 ds_1\dots ds_n\\
\end{align}
with  $\Delta_n\equiv \Delta_n(1)=\{(s_1, \dots, s_n ) \in \R^n \colon 0\leq s_1\leq \dots\leq s_n\leq 1\}$ and $W=\lambda A+\lambda ^2B$. By collecting in the sum \eqref{Dysonanal1} all the  terms associated to the same power of the coupling constant $\lambda$, we get
\begin{equation}\label{dyson-z}e^{-zH}\psi_0 =\sum_{m}\lambda ^m \phi_m(z)\end{equation}
where 
$ \phi_m(z)=\sum_{(n,k)\in \N^2\colon 2n-k =m, \, k\leq n}(-z)^n\phi_{n,k}(z)$
with 
\begin{equation}
\label{phinkprecise-z}
\phi_{n,k}(z)=\sum_{E\in C^n_k}\int_{\Delta_n}e^{-zH_0(t-s_n)}O^E_1 e^{-zH_0(s_n-s_{n-1})}O^E_2\cdots O^E_n e^{-zH_0s_1}\psi_0 ds_1\dots ds_n,
\end{equation}
where, analogously to Eq. \eqref{phinkprecise}, the sum is taken over the set $C^n_k$ of all possible subsets $E\subset\{1,\dots, n\}$ with $k$ elements and the map $ O^E:\{1,\dots, n\}\to \{A, B\}$ is defined as $O^E_i:=A$ if $i\in E$ and $O^E_i:=B$ if $i\notin E$.

By repeating the arguments in the proof of Theorem \ref{teoDyson}, it is now easy to verify that the expansion \eqref{dyson-z} converges in $L^2(\R^3) $ for $|\lambda |< \lambda ^*(z)$, with 
\begin{equation}
\label{lambda-z}
\lambda ^*(z)=\left(2\alpha ^2 |z|\left(2r ^2\hbar^2 |z|+1\right)\right)^{-1/2}, \qquad z \in D.
\end{equation}
\end{Remark}

\begin{Remark}\label{RemDysonV}
The results of Theorem \ref{teoDyson} and Remark \ref{rem-anal} can  easily be extended to the case where a (bounded) scalar potential $V$ is added to the Hamiltonian. Indeed, let us consider the following
\begin{equation}\label{H-a-V}
H\psi(x)=\frac{1}{2}\left(-i\hbar\nabla -\lambda\ba(x)\right)^2\psi (x)+\lambda V(x)\psi (x) , \qquad \psi \in C_0^\infty(\R^3)
\end{equation}
where $V\in \Fo_c(\R^3)$, i.e. $V:\R^3\to \R$ a function of the form:
\begin{equation}
\label{hypV}
V(x)=\int e^{ixy}d\mu_V(y), \qquad x\in \R^3,
\end{equation}
with $\mu_V$ complex Borel measure with support contained in the ball $B_R$. Under the assuptions of Theorem \ref{teoDyson}, Lemma \ref{lemmastimaAB} still holds. In particular, by setting  $A=i\hbar \ba\cdot\nabla+V$, $B=\frac{1}{2}|\ba|^2$, $\alpha =|\ba|_\infty$ and $\tilde\alpha =2\max\{\hbar|\ba|_\infty,|V|_\infty \}$, we get:
\begin{equation}\label{Dys-V}\|U_0(t_1)O_1U_0(t_2)O_2\cdots U_0(t_n)O_nU_0(t_{n+1})\psi_0 \|\leq \tilde\alpha ^k \left(\frac{\alpha ^2}{2}\right)^{n-k}\prod_{j=0}^{k-1}(\rho +2R(n-k)+jR)\|\psi_0\|,
\end{equation}
where $U_0(t)=e^{-\frac{i}{\hbar }H_0t}$,  $O_j\in \{A,B\}$, $j=1,\dots,n$,   $k=\#\{j\colon O_j=A\}$. By using \eqref{Dys-V} it is now possible to repeat the proof of Theorem \ref{teoDyson}, obtaining the convergence in $L^2(\R^3)$ of the perturbative Dyson expansion for the vector $e^{-\frac{i}{\hbar}Ht}\psi_0$ for $\lambda<\tilde \lambda$, where 
\begin{equation}
\label{lambdatilde}\tilde\lambda=\left(\frac{2\tilde\alpha ^2 t}{\hbar}\left(\frac{2r ^2 t}{\hbar}+1\right)\right)^{-1/2}. 
\end{equation} 
\end{Remark}

%%%%%%%%%%%%%%%FEYNMAN FOR MAGNETIC%%%%%%%%%%%%%%%%%%%%%%%%%%%%%%%%%

\section{Feynman path integral for magnetic field}
\label{sez3}

The present section is devoted to the construction of the Feynman path integral representation of the solution to the time dependent Schr\"odinger equation \eqref{SchMFt} in terms of Feynman maps on the Cameron-Martin space $\Hi_t$ defined in Sect. \ref{sez2}, i.e.
\begin{equation}\label{Sch1.0}
\psi (t,x)=\int _{\Hi_t}e^{\frac{i}{2\hbar}\int_0^t\|\dot\gamma(s)\|^2ds-\frac{i}{\hbar}\int_0^t  \lambda \ba(\gamma(s)+x)\cdot \dot \gamma (s)ds}\psi_0(\gamma(t)+x)d\gamma.
\end{equation}

\begin{Remark}
\label{R7}
Formula \eqref{Sch1.0} differs from \eqref{Sch1} for the sign in front of the term  $\int_0^t  \lambda \ba(\gamma(s)+x)\cdot \dot \gamma (s)ds$. This is due to the fact that in the heuristic Feynman formula \eqref{Sch1}  the paths $\gamma$ are pointed at the final time (i.e. $\gamma(t)=x$), while in Eq. \eqref{Sch1.0} the path $\gamma\in \Hi_t$ satisfy the condition $\gamma (t)=0$.
\end{Remark}

First of all, it is interesting to point out that the existing techniques of infinite dimensional oscillatory integration based on Parseval-type equality (see Theorem \ref{teo-pars-1}) do not work in the case where the classical action functional contains the term $\int_0^t  \ba(\gamma(s))\cdot \dot \gamma (s)ds$(as the term in the exponent of \eqref{Sch1.0}). 

In fact the function $f:\Hi_t\to \C$ defined on vectors $\gamma$ belonging to the Cameron-Martin space $\Hi_t$ as 
\begin{equation}
\label{f}
f(\gamma):=\int_0^t \ba(\gamma(s))\cdot \dot \gamma (s)ds, \qquad \gamma \in \Hi_t,
\end{equation} cannot in general belong to the  Banach algebra $\Fo(\Hi_t)$,
 even under  rather strong assumption on the vector potential $\ba$, unless in the trivial case where $\ba$ would be a conservative vector field (hence the associated magnetic field $\rot \ba$ would vanish identically!). In this case indeed  it is simple to prove that one has $f\in \Fo(\Hi_t)$ and $f(\gamma):=\int_0^t \ba(\gamma(s))\cdot \dot \gamma (s)ds=U(\gamma(t))-U(\gamma(0)) $). This particular case has already been studied in \cite{AlCaMa18}. However, in the physically more interesting case where $\rot \ba \not\equiv 0$, even if any of the three components $a_i$, $i=1,2,3$, of the vector potential $\ba$ belongs to $\Fo(\R^3)$, it is not possible to prove that $f\in \Fo(\Hi_t)$. In fact, the oscillatory integration of function $f$ involves most of the problems arising in  stochastic integration theory \cite{KarSh}. Indeed,  since any $\gamma \in \Hi_t$ is a bounded variation function, it is easy to show that the function $f$ is the pointwise limit of sequence of cylinder functions of the form
 $$f_n(\gamma):=\sum_{j=0}^{n-1} \ba(\gamma(t_j))\cdot (\gamma(t_{j+1})-\gamma(t_j)), \qquad \gamma \in \Hi_t$$
 or, equivalently, of
 $$g_n(\gamma):=\sum_{j=0}^{n-1}  \ba(\gamma(t_{j+1}))\cdot (\gamma(t_{j+1})-\gamma(t_j)),\qquad \gamma \in \Hi_t,$$
 where $t_j\equiv jt/n$.  Furthermore, if  $a_i\in \Fo(\Hi_t)$, then the cylinder functions  $\{f_n\}_n$ and $\{g_n\}_n$  are Fresnel integrable since they are both finite  linear combinations of functions of the form \eqref{fresELT}.  Indeed , if  for any $i=1,..,3$  $a_i=\hat \mu_i$, with $\mu_i$ bounded complex Borel measures on $\R^3$, then
$$f_n (\gamma):=\sum_{j=0}^{n-1} \sum_{\alpha =1}^3\int_{\R^3}e^{i\langle \gamma ,v_{t_j}k\rangle}\langle \gamma,(v_{t_{j+1}}-v_{t_j})\hat e_\alpha\rangle d\mu_\alpha (k)$$
and 
$$g_n( \gamma):=\sum_{j=0}^{n-1} \sum_{\alpha =1}^3\int_{\R^3}e^{i\langle \gamma ,v_{t_{j+1}}k\rangle}\langle \gamma,(v_{t_{j+1}}-v_{t_j})\hat e_\alpha\rangle d\mu_\alpha (k),$$
where $\hat e_\alpha$, $\alpha =1,\dots,3$ are the vectors of the canonical basis of $\R^3$, while for $s\in [0,t]$ the function $v_s:[0,t]\to \R$ is defined by
\begin{equation}\label{defvs}v_s(r)=\chi_{[0,s]}(r) r+\chi_{(s,t]}(r) s, \qquad r\in [0,t].\end{equation}
By direct computation, one has $\widetilde{\int_{\Hi_t}}e^{\frac{i}{2\hbar}\langle \gamma , \gamma \rangle }f_n(\gamma)d\gamma=0$ for all $n\in \N$, while 
$$\widetilde{\int_{\Hi_t}}e^{\frac{i}{2\hbar}\langle \gamma , \gamma \rangle }g_n(\gamma)d\gamma=-\hbar \sum_{j=0}^{n-1} \sum_{\alpha =1}^3\int_{\R^3}e^{-\frac{i\hbar}{2}t_{j+1}|k|^2}(t_{j+1}-t_j)k_\alpha d\mu_\alpha (k)$$
the latter converging, for $n\to \infty$ to $-\hbar \sum_{\alpha =1}^3\int_{\R^3}\int_0^te^{-\frac{i\hbar}{2}s|k|^2}k_\alpha d\mu_\alpha (k)ds$.
%However it is possible to prove that both sequences do not converge in the topology of $\Fo(\Hi_t)$ since the total variation of the associated measures tends to $\infty $ as $n\to \infty$:
%$$ \sup_n \|\mu_n\|=\infty, \quad  \sup_n \|\nu_n\|=\infty.$$
%Hence, we cannot use the continuity of the functional in the topology of $\Fo(\Hi_t)$. Furthermore, by direct computation, we have found  an explicit example of two sequences $\{f_n\}$ and $\{g_n\}$ in $\Fo(\Hi_t)$ converging {\em pointwise} to the same function $f$ but such that the corresponding oscillatory integral converges to different limits:
%\[
%\lim_n \widetilde{\int_{\Hi_t}} f_n(\gamma) e^{\frac{i}{2}\| \gamma\|^2}d\gamma\neq \lim_n \widetilde{\int_{\Hi_t}}g_n(\gamma) e^{\frac{i}{2}\| \gamma\|^2}d\gamma
%\]

Since  the Parseval type equality \eqref{Parseval} cannot be directly applied, we have to implement a different technique, based on analyticity assumptions, in order to show that the limit in definition \eqref{feymap} exists, i.e. that the Feynman map of the function $f$ given by \eqref{f} is well defined.

In the following we shall denote with $C_t:=C([0,t];\R^3)$ the Banach space of continuous paths $\omega:[0,t]\to \R^3$, endowed with the $\sup$-norm $|\;|$. Let $\bP$ be the Wiener measure on the Borel $\sigma$-algebra $\Ba(C_t)$ of $C_t$.
Since for $\gamma\in \Hi_t$ we have $|\gamma|\leq \sqrt{t} \cdot \|\gamma\|$, the Cameron-Martin Hilbert space $\Hi_t$ is densely embedded in $C_t$. Denoted with $C_t^*$ the topological dual of $C_t$, we have the following chain of dense inclusions:
\begin{equation}
\label{Ct}
C_t^*\subset \Hi_t\subset C_t.
\end{equation}
With an abuse of notation we shall denote $\langle \eta , \omega \rangle$ the dual pairing between  two elements $\eta\in C_t^*$ and $\omega \in C_t$. Let $\mu$ be the finitely additive standard Gaussian measure defined as 
$$
\mu(\cC_{P_n,D})=\int_{D}\frac{e^{-\frac{\|x\|^2}{2}}}{(2\pi )^{n/2}}dx,$$ on the cylinder sets $\cC_{P_n,D}\subset \Hi_t$ of the form $$
\cC_{P_n,D}:=\{\gamma \in \Hi_t\colon P_n\gamma\in D\},
$$
for some finite dimensional projection operator $P_n:\Hi_t \to \Hi_t$ and some Borel set $D\subset \Hi_t$. The measure $\mu$ does not extend to a $\sigma$-additive  measure on the generated $\sigma$-algebra, see e.g.\cite{Kuo}.  Defining the cylinder sets in $C_t$ by
$$\tilde \cC_{\eta_1,\dots,\eta_n;E}:=\{\omega \in C\colon (\langle \eta_1,\omega\rangle,\dots,\langle \eta_n,\omega\rangle)\in E\},$$
for some $n\in \N$, $\eta_1, \dots, \eta_n\in C_t^*$ and $E$ a Borel set of $\R^3$, we have that the intersection $\tilde \cC_{\eta_1,\dots,\eta_n;E}\cap\Hi_t$ is a cylinder set in $\Hi_t$. According to the fundamental results by L. Gross \cite{Gro1,Gro2}, the finite additive measure $\tilde \mu$ defined on the cylinder sets of $C_t$ by
$$\tilde \mu(\tilde \cC_{\eta_1,\dots,\eta_n;E}):=\mu(\tilde \cC_{\eta_1,\dots,\eta_n;E}\cap \Hi_t)$$
extends to a $\sigma$-additive Borel measure on $C_t$ that coincides with the standard  Wiener measure $\bP$, in such a way that  for any $\gamma\in \Hi_t$ such that $\gamma$  is an element of $C_t^*$ the following holds
$$
\int e^{i\langle\gamma, \omega \rangle}d\bP(\omega)=e^{-\frac{1}{2}\|\gamma\|^2}.
$$

Thanks to the results above it is possible to define, for any $\eta \in C_t^*$, a centered Gaussian random variable $n_\eta$ on $(C_t,\Ba(C_t),\bP)$ given by $n_\eta (\omega ):=\langle\gamma, \omega \rangle$, $\omega \in C_t$, $\gamma \in C_t^*$. In particular, for $\eta ,\gamma\in C_t^*$, the following holds
\begin{equation}\label{neta}
\bE[n_\eta n_\gamma]=\int_0^t\dot \eta(s)\cdot \dot \gamma (s)ds=\langle \eta, \gamma \rangle,
\end{equation}
the pairing on the r.h.s. coinciding with the scalar product in $\Hi_t$. This shows that the map $n:C_t^*\to L^2(C_t,\bP)$ can be extended, by the density of $C_t^*$ in $\Hi_t$, to an unitary operator $n:\Hi_t\to L^2(C_t,\bP)$. 
In particular, given a projector operator $P_n:\Hi_t\to \Hi_t$ of the form $P_n(\gamma)=\sum_{j=1}^n \langle e_n,\gamma\rangle e_n$, where $\{e_1, \dots, e_n \} $ orthonormal vectors in $\Hi_t$, it is possible to define the random variable $\tilde P_n:C_t\to \Hi_t$ as 
\begin{equation}
\label{tildePn}
\tilde P_n(\omega)=\sum_{i=1}^nn_{e_i}(\omega) e_i,
\end{equation}
$n_{e_i} \in L^2(C_t, \mathbb{P})$.

%given  a sequence $\{P_n\}_n$ of projectors onto $H_n$, i.e. the $n$-dimensional subspace of $\mathcal{H}$ (more specifically onto the $\Span\{e_1, \dots, e_n \}$), converging strongly to the identity. They can be described by $P_n(\gamma)=\sum_{i=1}^n \langle \gamma, e_i \rangle e_i$. Now, it is possible to define a random variable $\tilde P_n:C\to \Hi_t$ as 
%$\tilde P_n(\omega)=\sum_{i=1}^nn_{e_i}(\omega) e_i$.

%\newpage
In the following we shall show how Feynman maps (defined by Eq. \eqref{feymap}) of all powers of  the function $f$ defined in \eqref{f} can be computed in terms of Wiener integrals. For analogous results see \cite{AlMa2}.
Let us consider now in $\Hi_t$ the sequence of projection operators $\{P_n\}$  onto the subspaces of piecewise linear paths, i.e. for $\gamma \in \Hi_t$ the vector $P _n(\gamma)$ is defined by the right hand side of \eqref{gamman}. Let $\{\tilde P_n\}$  the corresponding sequence of random variables $\tilde P_n:C_t\to \Hi_t$ given by 
\begin{equation}\label{omegandef}\tilde P_n(\omega) (s)=\sum_{k=1}^n\chi_{[t_{k-1},t_k]}(s)\left(\omega (t_{k-1})+\frac{\omega(t_{k})-\omega(t_{k-1})}{t_{k}-t_{k-1}}(s-t_{k-1})\right),\qquad s\in [0,t],
\end{equation}
$\omega \in C_t$, with $t_k=kt/n$, $k=1,\dots,n$ as above.
Let $\ba:\R^3\to \R^3$  be a vector field fulfilling the assumptions of  Theorem \ref{teoDyson}. Since any component $a_j:\R^3\to \R$, $j =1,\dots,3$,   can be written as the Fourier transform of a complex measure $\mu_j$ with compact support according to formula \eqref{conditionaJ}, the map $\ba$ can be extended to an holomorphic function on $\C^3$ with components given by 
\begin{equation}
\label{az}
\ba_j(z)=\int _{\R^3} e^{ikz}d\mu _j(k), \qquad z\in \C^3,
\end{equation}
the integral on the r.h.s. of \eqref{az} being well defined and finite since 
$$\int _{\R^3} | e^{ikz}|d|\mu _j|(k)\leq \int _{\R^3}\Pi_{l=1}^3  e^{|k_l ||z_l|}d|\mu _j|(k)\leq e^{R\sum_{l=1}^3|z_l|},$$
where $R$ denotes the radius of the sphere containing the supports of the measures $\mu_j$.
 In particular, for $x\in \R^3$ and $z\in \C$ the components of the vector $\ba (zx)$ are given by $\ba_j(zx)=\int _{\R^3} e^{izkx}d\mu _j(k)$. The following lemma shows the convergence of a particular  sequence of random variables defined on the Wiener space.

\begin{lemma}
\label{LemmaAppendix}
Let $\ba$ be a three dimensional vector field fulfilling the assumptions of Theorem \ref{teoDyson}. %The integral on the right hand side is well defined in Stratonovich sense. 
Let $\{f_n\}$ be  the sequence of random variables $f_n:C_t\to\C$ defined by \[
f_n(\omega)=\int_0^t \ba \left (  \sqrt{ i\hbar} \omega_n(s) \right )\cdot \dot{\omega}_n(s) ds, 
\]
where $\omega_n(s)\equiv P_n(\omega )(s)$ and $P_n(\omega) $ is defined by the right hand side of \eqref{omegandef}.
%\[
%\omega_n(s)=\sum_{k=1}^n\chi_{[t_{k-1},t_k]}(s)\left(\omega (t_{k-1})+\frac{\omega(t_{k})-\omega(t_{k-1})}{t_{k}-t_{k-1}}(s-t_{k-1})\right).
%\]
Then for any $p\in \N$, $1\leq p\leq \infty$, $f_n$ 
converges, as $n \to \infty$, in $L^p(C_t,\mathbb{P})$ to the random variable $f$ defined as the Stratonovich stochastic integral 
\[
f(\omega)=\int_0^t \ba ( \sqrt{ i\hbar} \omega(s))\circ d\omega(s).
\]
%where $\circ d {\omega(s)}$ denotes the Stratonovich integral.
\end{lemma}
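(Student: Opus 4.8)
The plan is to pass to the Fourier side and compute the contribution of each sub-interval of the partition $\{t_l=lt/n\}$ exactly. Writing $a_j=\hat\mu_j$ with $\supp\mu_j\subset B_R$, and using that $\omega_n$ is piecewise linear with constant derivative $(\omega(t_l)-\omega(t_{l-1}))/(t_l-t_{l-1})$ on $[t_{l-1},t_l]$, a pathwise Fubini (the $k$-integral is against a finite measure and the $s$-integral is a finite sum) gives
\[
f_n(\omega)=\sum_{j=1}^3\int_{\R^3}\sum_{l=1}^{n}\bigl(\omega_j(t_l)-\omega_j(t_{l-1})\bigr)\,\sigma_k(t_{l-1})\,\phi\!\left(i\sqrt{i\hbar}\,k\cdot(\omega(t_l)-\omega(t_{l-1}))\right)d\mu_j(k),
\]
where $\sigma_k(s):=e^{i\sqrt{i\hbar}\,k\cdot\omega(s)}$ and $\phi(z)=\tfrac{e^z-1}{z}=\sum_{m\ge0}\tfrac{z^m}{(m+1)!}=1+\tfrac{z}{2}+\dots$ (on the partition points $\omega_n$ and $\omega$ agree, so $\sigma_k(t_{l-1})$ is unambiguous). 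The first step is to split $\phi=1+\tfrac{(\cdot)}{2}+r$, with $r(z)=\sum_{m\ge2}z^m/(m+1)!$, and to track the three resulting sums separately.

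The $m=0$ piece is the left-point Riemann sum $\sum_l\sigma_k(t_{l-1})(\omega_j(t_l)-\omega_j(t_{l-1}))$ for the It\^o integral $\int_0^t\sigma_k(s)\,d\omega_j(s)$; since $z\mapsto e^{i\sqrt{i\hbar}\,k\cdot z}$ is entire, It\^o's formula (applied to its real and imaginary parts) shows $\sigma_k$ is a complex It\^o process with $d\langle\sigma_k,\omega_j\rangle_s=i\sqrt{i\hbar}\,k_j\,\sigma_k(s)\,ds$. The $m=1$ piece is $\tfrac{i\sqrt{i\hbar}}{2}\sum_l\sigma_k(t_{l-1})(\omega_j(t_l)-\omega_j(t_{l-1}))\,\bigl(k\cdot(\omega(t_l)-\omega(t_{l-1}))\bigr)$, a Riemann sum for the quadratic covariation, which converges to $\tfrac{i\sqrt{i\hbar}}{2}k_j\int_0^t\sigma_k(s)\,ds=\tfrac12\langle\sigma_k,\omega_j\rangle_t$. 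Their sum is precisely the Stratonovich integral $\int_0^t\sigma_k(s)\circ d\omega_j(s)=\int_0^t\sigma_k(s)\,d\omega_j(s)+\tfrac12\langle\sigma_k,\omega_j\rangle_t$. Once these convergences are known to hold in $L^p(\bP)$ uniformly in $k\in\supp\mu_j$, integrating against $d\mu_j(k)$, summing over $j$, and invoking a stochastic Fubini theorem (valid here by the same uniform bounds) identifies the limit as $f(\omega)=\int_0^t\ba(\sqrt{i\hbar}\,\omega(s))\circ d\omega(s)$.

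For the quantitative estimates the constants must be uniform in $k$ over $\supp\mu_j\subset B_R$; the elementary facts used are that $|\omega_n(s)|\le|\omega|$ (sup-norm) for all $n,s$, that $|\sigma_k(s)|\le e^{c|\omega|}$ with $c$ depending only on $R$ and $\hbar$ (because $i\sqrt{i\hbar}$ has nonzero real part and $|k\cdot\omega(s)|\le\sqrt3\,R\,|\omega|$ on $\supp\mu_j$), and that $e^{c|\omega|}\in L^q(\bP)$ for every $q<\infty$ by the Gaussian (Fernique) integrability of the Wiener sup-norm. Then: (i) the It\^o Riemann sum converges in every $L^p(\bP)$, via Burkholder--Davis--Gundy applied to $\E\big|\int_0^t(\sigma_k-\sigma_{k,n})\,d\omega_j\big|^p\le C_p\,\E\big(\int_0^t|\sigma_k-\sigma_{k,n}|^2ds\big)^{p/2}$ (with $\sigma_{k,n}$ the left-endpoint step process approximating $\sigma_k$), the right side tending to $0$ by continuity of $\sigma_k$ and dominated convergence (domination $4t\,e^{2c|\omega|}$), uniformly in $k$; (ii) the first-order sum converges in every $L^p(\bP)$ to $\tfrac12\langle\sigma_k,\omega_j\rangle_t$ by the same circle of ideas applied to covariation Riemann sums; (iii) the remainder sum, which carries $r$ and is of order $\ge2$ in the increments, is bounded pathwise by
\[
C\,|k|^2\,e^{c|\omega|}\Bigl(\max_{l}|\omega(t_l)-\omega(t_{l-1})|\Bigr)\sum_{l=1}^{n}|\omega(t_l)-\omega(t_{l-1})|^2,
\]
which tends to $0$ in every $L^p(\bP)$: the modulus of continuity $\max_l|\omega(t_l)-\omega(t_{l-1})|$ vanishes a.s.\ and is dominated by $2|\omega|$, the quadratic-variation sums are bounded in every $L^q(\bP)$ uniformly in $n$, and $e^{c|\omega|}\in L^q(\bP)$; conclude by H\"older (and $|k|\le\sqrt3\,R$ on $\supp\mu_j$). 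Since all these $L^p$-bounds are uniform in $k$ and $|\mu_j|(\R^3)<\infty$, dominated convergence justifies exchanging $\lim_n$ with $\int d\mu_j(k)$, which finishes the argument and shows $f_n\to f$ in $L^p(\bP)$ for every $1\le p<\infty$.

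The genuine difficulty is not the algebraic identification of the Stratonovich limit (the classical statement for smooth integrands) but rather (a) obtaining convergence in \emph{all} $L^p$ with $p<\infty$, not merely $L^2$, which forces the joint use of Burkholder--Davis--Gundy and exponential integrability of $|\omega|$, and (b) keeping every estimate uniform in the Fourier variable $k$ so that the $\mu_j$-integral can be exchanged with the limit --- this is precisely where the compactness of $\supp\mu_j$ is decisive. (Note that $p=\infty$ cannot hold, the limit $f$ being unbounded; the statement is to be read with $1\le p<\infty$.)
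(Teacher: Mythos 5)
Your proposal is correct and follows essentially the same route as the paper's own proof (Appendix A): integrate the Fourier representation exactly over each subinterval, Taylor-expand $(e^z-1)/z$ to second order so that $f_n$ splits into an It\^o Riemann sum, a quadratic-variation sum producing the Stratonovich correction $\tfrac12\int_0^t\sqrt{i\hbar}\,\ba'(\sqrt{i\hbar}\,\omega(s))\,ds$, and a higher-order remainder, and then control each piece in $L^p$ via Burkholder--Davis--Gundy together with Gaussian (exponential) integrability of the Wiener sup-norm. The only differences are organizational --- you keep the Fourier variable $k$ fixed, prove uniform-in-$k$ bounds on $\supp\mu_j\subset B_R$, and exchange $\lim_n$ with $\int d\mu_j(k)$ at the end, whereas the paper integrates against $\mu$ first and carries out the Gaussian moment computations explicitly --- and your remark that the convergence can only hold for $1\le p<\infty$ is a correct reading of the statement.
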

\begin{proof}
We will consider for notational simplicity the $1-$dimensional case. The proof in all dimensions, in particular in the 3-dimensional case is analogous. Let us first remark that by Remark \ref{RX} on the analyticity of the extension of $\ba(\cdot)$ from $\R$ to $\C$, the integral on the right hand side  of $f_n$ is well defined. Further, by \eqref{az} the random variables $f_n$ are given by:

\begin{align*}
f_n({\omega(s)})&=\sum_{j=0}^{n-1} \int_0^{\frac{t}{n}}\int_{\R} e^{i \sqrt{ i\hbar} k {\omega(s_j)}}e^{i \sqrt{ i\hbar}k\frac{({\omega(s_j)}j-{\omega(s_j)})s}{t/n}} \cdot \frac{({\omega(s_j)}j-{\omega(s_j)})}{t/n} d\mu(k) ds =\\
&=\sum_{j=0}^{n-1} \int_{\R} e^{i \sqrt{ i\hbar}k{\omega(s_j)}}\left ( e^{i \sqrt{ i\hbar}k({\omega(s_j)}j-{\omega(s_j)})}-1 \right )\cdot \frac{1}{i \sqrt{ i\hbar}k} d\mu(k).
\end{align*}

By setting $\Delta_j:={\omega(s_j)}j-{\omega(s_j)}$ and by a Taylor expansion (to second order with remainder), the last line becomes
$$\sum_{j=0}^{n-1} \int_{\R} e^{i \sqrt{ i\hbar}k{\omega(s_j)}} \left ( \Delta_j+\frac{1}{2} i \sqrt{ i\hbar}k\Delta_j^2+\frac{1}{2} (i \sqrt{ i\hbar}k)^2\Delta_j^3\int_0^1(1-u)^2e^{i \sqrt{ i\hbar}k\Delta_ju}du\right ) d\mu(k).$$

Hence the function $f_n$ can be written as the sum of three contributions, namely $f_n=g_n+h_n+r_n$, where
\begin{align}
g_n(\omega)=& \sum_{j=0}^{n-1} \int_{\R} e^{i \sqrt{ i\hbar}k{\omega(s_j)}} ({\omega(s_j)}j-{\omega(s_j)}) d\mu(k)= \nonumber \\ =& \sum_{j=0}^{n-1} a( \sqrt{ i\hbar}{\omega(s_j)})  ({\omega(s_j)}j-{\omega(s_j)});\\
h_n(\omega)=&\frac{1}{2}\sum_{j=0}^{n-1} \int_{\R}  i \sqrt{ i\hbar}ke^{i \sqrt{ i\hbar}k{\omega(s_j)}}({\omega(s_j)}j-{\omega(s_j)}) ^2d\mu(k) \nonumber \\
=& \sum_{j=0}^{n-1} \frac{1}{2}\cdot a'( \sqrt{ i\hbar} {\omega(s_j)})({\omega(s_j)}j-{\omega(s_j)})^2;  \\
r_n(\omega)=& \sum_{j=0}^{n-1} \frac{1}{2}\int_0^1 \int_{\R} \left ( i \sqrt{ i\hbar}\right)^2k^2e^{i \sqrt{ i\hbar}k({\omega(s_j)} + ({\omega(s_j)}j-{\omega(s_j)})u)}({\omega(s_j)}j-{\omega(s_j)})^3(1-u)^2d\mu(k)du,
\end{align}
($a'$ standing for derivative of $a$).
By computation based on BDG inequalities and Gaussian integration we obtain
\begin{align*}
g_n & \xrightarrow{L^p(\Omega,\mathbb{P})} \int_0^t a( \sqrt{ i\hbar} \omega(s)) d\omega(s); \\
h_n & \xrightarrow{L^p(\Omega,\mathbb{P})} \frac{1}{2} \int_0^t a'( \sqrt{ i\hbar} \omega(s))ds; \\
r_n & \xrightarrow{L^p(\Omega,\mathbb{P})} 0,
\end{align*}
eventually obtaining:
\[
f_n  \xrightarrow{L^p(\Omega,\mathbb{P})} a( \sqrt{ i\hbar} \omega(s)) d\omega(s)+\frac{1}{2} \int_0^t a'( \sqrt{ i\hbar} \omega(s))ds=\int_0^t a( \sqrt{ i\hbar} \omega(s))  \circ d\omega(s).
\]
For further details  see Appendix A.
\end{proof}

\begin{teorema}
\label{Teorema3}
Let the vector field $\ba $ and the function $\psi_0\in L^2(\R^3)$ satisfy the assumptions of Theorem \ref{teoDyson}. Then the Feynman map of the function $g:\Hi_t\to \C$ given by $g(\gamma):=\psi_0(\gamma (t)+x)\int_0^t\ba (\gamma (s)+x)\cdot \dot \gamma (s)ds$ for any $x \in \R^3$, is well defined and equal to the following Wiener integral 
\begin{equation}
\label{int-f-1}
I_F(g)=\int_{C_t}\left( \sqrt{ i\hbar}\int_0^t  \ba \left (  \sqrt{ i\hbar}\omega(s)+x \right )\circ d\omega(s) \right)  \psi_0( \sqrt{ i\hbar}\omega(t)+x)d\mathbb{P}(\omega).
\end{equation}
%where $\int_0^t  \ba (  \sqrt{ i\hbar}\omega(s)+x)\circ d\omega(s)$ denotes the Stratonovich integral.
Moreover for any $m\geq 0$,  the Feynman map of the function $g_m^x:\Hi_t\to \C$  defined as $$g_m^x(\gamma):=\psi_0(\gamma (t)+x)\left(\int_0^t\ba (\gamma (s)+x)\cdot \dot \gamma (s)ds\right)^m$$ is given by
\begin{equation}
\label{int-f-n}
I_F(g_m^x)=\int_{C_t}\left( \sqrt{ i\hbar}\int_0^t  \ba \left (  \sqrt{ i\hbar}\omega(s)+x \right )\circ d\omega(s) \right)^m  \psi_0 \left ( \sqrt{ i\hbar}\omega(t)+x \right )d\mathbb{P}(\omega),
\end{equation}
$\mathbb{P}$ being Wiener measure on $(C_t,\mathcal{B}(C_t))$.
\end{teorema}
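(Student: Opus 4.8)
\emph{Strategy.} The plan is to compute, for each $n$, the two finite-dimensional oscillatory integrals entering the definition \eqref{feymap} of the Feynman map, to identify their quotient with a Gaussian expectation over the piecewise-linear interpolation of a Brownian path, and then to let $n\to\infty$ by means of Lemma \ref{LemmaAppendix}. Two preliminary observations will be used throughout. First, setting $\ba^x(\cdot):=\ba(\cdot+x)$, the components of $\ba^x$ are still Fourier transforms of complex measures supported in $B_R$ (namely $d\mu_j^x(k)=e^{ix\cdot k}d\mu_j(k)$), so $\ba^x$ satisfies the hypotheses of Theorem \ref{teoDyson} and of Lemma \ref{LemmaAppendix}. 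Second, since $t_n=t$ is the last interpolation node, $\tilde P_n(\omega)(t)=\omega(t)$ for every $\omega\in C_t$, so the factor $\psi_0(\gamma(t)+x)$ is untouched by the approximation.

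\emph{Step 1: reduction to a Gaussian integral.} Fix $n$ and work in an orthonormal basis of $H_n=P_n\Hi_t$, so that $d(P_n\gamma)$ is the corresponding Lebesgue measure and $\|P_n\gamma\|^2$ the Euclidean square-norm. On a piecewise-linear path $\dot{(P_n\gamma)}$ is constant on each $[t_{k-1},t_k]$ and equal to $\tfrac nt(\gamma(t_k)-\gamma(t_{k-1}))$, so $\int_0^t\ba^x(P_n\gamma(s))\cdot\dot{(P_n\gamma)}(s)\,ds=\sum_k(\gamma(t_k)-\gamma(t_{k-1}))\cdot\int_0^1\ba^x((1-u)\gamma(t_{k-1})+u\gamma(t_k))\,du$; invoking \eqref{az} to represent each $\ba^x$ as a Fourier transform of a compactly supported measure, one sees — exactly as for the cylinder functions $f_n,g_n$ in Section \ref{sez3} — that $\gamma\mapsto g_m^x(P_n\gamma)$ is a finite linear combination of functions of the type \eqref{fresELT} (products, of total degree $\le m$, of the point-evaluation functionals $\gamma(t_k)-\gamma(t_{k-1})$ with Fourier transforms of compactly supported measures, the factor $\psi_0(\gamma(t)+x)$ being itself of the latter kind). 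By Remark \ref{fresnelIntegrableELT} and \cite{ELT,AlBr} such functions are Fresnel integrable, and their finite-dimensional oscillatory integral is evaluated by rotating the integration contour into the complex direction $\sqrt{i\hbar}$, where $e^{\frac{i}{2\hbar}\|P_n\gamma\|^2}$ becomes the decaying Gaussian $e^{-\frac12\|P_n\gamma\|^2}$ that dominates the polynomial-times-exponential growth of the integrand:
$$(2\pi i\hbar)^{-\dim H_n/2}\int_{P_n\Hi}^o e^{\frac{i}{2\hbar}\|P_n\gamma\|^2}F(P_n\gamma)\,d(P_n\gamma)=\E\big[F(\sqrt{i\hbar}\,G_n)\big],$$
with $G_n$ the standard Gaussian random variable on $(P_n\Hi_t,\langle\cdot,\cdot\rangle)$. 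Since the Cameron--Martin norm of a piecewise-linear path equals $\tfrac nt\sum_k|\gamma(t_k)-\gamma(t_{k-1})|^2$, the law of $G_n$ read on the nodes is exactly the joint law of $(\omega(t_1),\dots,\omega(t_n))$ under Wiener measure; hence, taking the quotient in \eqref{feymap} (the normalization cancelling),
$$\frac{\displaystyle\int_{P_n\Hi}^o e^{\frac{i}{2\hbar}\|P_n\gamma\|^2}g_m^x(P_n\gamma)\,d(P_n\gamma)}{\displaystyle\int_{P_n\Hi}^o e^{\frac{i}{2\hbar}\|P_n\gamma\|^2}\,d(P_n\gamma)}=\E\big[g_m^x(\sqrt{i\hbar}\,\tilde P_n(\omega))\big],$$
the expectation being taken with respect to Wiener measure $\mathbb{P}$ on $C_t$ and $\tilde P_n$ the interpolation \eqref{omegandef}.

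\emph{Step 2: the limit $n\to\infty$.} Writing $\omega_n:=\tilde P_n(\omega)$, using $\omega_n(t)=\omega(t)$, and pulling the scalar $\sqrt{i\hbar}$ out of the path and of $\dot\omega_n$,
$$g_m^x(\sqrt{i\hbar}\,\omega_n)=(\sqrt{i\hbar})^m\,\psi_0(\sqrt{i\hbar}\,\omega(t)+x)\Big(\int_0^t\ba^x(\sqrt{i\hbar}\,\omega_n(s))\cdot\dot\omega_n(s)\,ds\Big)^m=(\sqrt{i\hbar})^m\,\psi_0(\sqrt{i\hbar}\,\omega(t)+x)\,f_n^x(\omega)^m,$$
where $f_n^x$ is precisely the random variable of Lemma \ref{LemmaAppendix} relative to $\ba^x$. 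By that lemma $f_n^x\to f^x:=\int_0^t\ba^x(\sqrt{i\hbar}\,\omega(s))\circ d\omega(s)$ in $L^p(C_t,\mathbb{P})$ for every $p<\infty$, and the identity $a^m-b^m=(a-b)\sum_{j=0}^{m-1}a^j b^{m-1-j}$ together with Hölder's inequality upgrades this to $f_n^{x\,m}\to f^{x\,m}$ in every $L^p$, $p<\infty$. On the other hand, $\hat\psi_0$ being supported in $B_\rho$ one has $|\psi_0(\sqrt{i\hbar}\,\omega(t)+x)|\le C\,e^{c|\omega(t)|}$, so $\psi_0(\sqrt{i\hbar}\,\omega(t)+x)\in\bigcap_{q<\infty}L^q(C_t,\mathbb{P})$ by the Gaussian tails of $\omega(t)$. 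One more application of Hölder's inequality yields
$$I_F(g_m^x)=\lim_{n\to\infty}\E\big[g_m^x(\sqrt{i\hbar}\,\omega_n)\big]=(\sqrt{i\hbar})^m\,\E\big[\psi_0(\sqrt{i\hbar}\,\omega(t)+x)\,f^x(\omega)^m\big],$$
so the limit in \eqref{feymap} exists, the Feynman map of $g_m^x$ is well defined, and it equals \eqref{int-f-n} (and \eqref{int-f-1} for $m=1$).

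\emph{Main obstacle.} The delicate step is the first one: the integrand $g_m^x\circ P_n$ is not summable, so the oscillatory integral must be taken in the regularized sense of Definition \ref{def-osc-int-fin-dim}, and one has to check that $\lim_{\delta\to0}I_\delta$ exists and is independent of the test function before the contour rotation can be carried out — all of which rests on the analytic continuation of $\ba$ and $\psi_0$ to $\C^3$ with at most exponential growth (Remark \ref{RX} and the Remark following Theorem \ref{teoDyson}) and on controlling the polynomial-times-exponential growth of the integrand against the rotated Gaussian. Once the identity between the finite-dimensional oscillatory integral and the Gaussian average is secured, the remaining steps are routine, the only genuine analytic input being the already established Lemma \ref{LemmaAppendix}.
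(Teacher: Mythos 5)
Your proposal follows essentially the same route as the paper: the identification of the normalized finite-dimensional oscillatory integral with the Gaussian expectation $\E\big[g_m^x(\sqrt{i\hbar}\,\tilde P_n(\omega))\big]$ is exactly the paper's analytic-continuation argument (carried out there via an auxiliary function $F(z)$ on the sector $D_{\pi/2}$, constant on rays and analytic by the Fubini--Morera estimates you correctly identify as the delicate point), and the passage to the limit $n\to\infty$ rests, as in the paper, on Lemma \ref{LemmaAppendix}. Your explicit H\"older/telescoping treatment of the $m$-th power and of the factor $\psi_0(\sqrt{i\hbar}\,\omega(t)+x)$ is a welcome elaboration of a step the paper leaves implicit, but it does not change the structure of the argument.
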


\begin{proof}
For Fixed $n\in \N$ and $m\geq 1$, let us consider the finite dimensional oscillatory integral
\begin{gather*}
\left(  \int _{P_n\Hi}^o e ^{i\frac{\|P_n \gamma \|^2}{2\hbar} }d(P _n\gamma )\right )^{-1}  \int _{P_n\Hi}^o e ^{i\frac{\|P _n \gamma \|^2}{2\hbar} }g_m(P_n\gamma)d (P_n\gamma )=\\
=  \int _{\R^{3n}}^o \left(\sum_{j=1}^n\frac{(x_{j}-x_{j-1})}{t/n}\cdot\int_{t_{j-1}}^{t_j}\ba\left(x_{j-1}+\frac{(x_{j}-x_{j-1})}{t/n}(s-t_{j-1} )+x\right) ds\right)^m\\ 
\psi_0(x_n+x)e^{\frac{i}{2\hbar t/n}\sum_{j=1}^n(x_{j}-x_{j-1})^2} \frac{dx_1\cdots dx_n}{(2\pi i \hbar t/n)^{3n/2}}=\\
=  \int_{\R^{3n}}^o \left(\sum_{j=1}^n\xi_j\cdot\int_{t_{j-1}}^{t_j}\ba\left(x+ \left ( \sum_{k=1}^{j-1}\xi_k \right )\frac{t}{n}+\xi_j(s-t_{j -1}) \right) ds\right)^m \psi_0\left (x+\left (\sum_{j=1}^n\xi_j\right ) \frac{t}{n}\right )\\
e^{\frac{it/n}{2\hbar}\sum_{j=1}^n\xi_{j}^2} \frac{d\xi_1\cdots d\xi_n}{(2\pi i \hbar (t/n)^{-1})^{3n/2}}.
\end{gather*}

By the stated assumption on $\ba$ and $\psi_0$ the latter is equal to
\begin{equation}
\label{int-n-0}
\begin{gathered}
\int_{\R^{n}}^o \left( \sum_{j=1}^n\sum_{\alpha=1}^3\xi_{j,\alpha}\cdot\int_{t_{j-1}}^{t_j}\int_{\R^d} \exp\left [ ik\cdot \left(x+ \left (\sum_{l=1}^{j-1}\xi_l \right )\frac{t}{n}+\xi_j(s-t_{j-1} )\right) \right ] d\mu_\alpha(k)ds\right)^m \\ \int_{\R^3}e^{ih\cdot \left ( x+\sum_{j=1}^n\xi_jt/n \right )}d\mu_0(h) e^{\frac{it/n}{2\hbar}\sum_{j=1}^n\xi_{j}^2} \frac{d\xi_1\cdots d\xi_n}{(2\pi i \hbar (t/n)^{-1})^{3n/2}}.
\end{gathered}
\end{equation}
where $\hat\mu_0=\psi_0$, i.e for any Borel set $I\subset \R^3$ $\mu_0(I)=\frac{1}{2\pi}\int_I\psi _0(x)dx$.
Let us consider the open sector  $D_{\pi/2}=\{z\in \C\colon z=|z|e^{i\theta}, \theta \in (0,\pi/2)\}$ of the complex plane and function $F:\bar D_{\pi/2}\to \C$ defined by 
\begin{equation*}
\begin{gathered}
F(z)=\int _{\R^{3n}} \left( \sum_{j=1}^n\sum_{\alpha=1}^3z\xi_{j,\alpha}\cdot\int_{t_{j-1}}^{t_j}\int_{\R^3}\exp \left [ ik\cdot \left( x+ \left ( \sum_{l=1}^{j-1}z\xi_l \right ) \frac{t}{n}+z\xi_j(s-t_{j-1} )\right) \right ] d\mu_\alpha(k)ds \right)^m \\ \int_{\R^3}e^{ih\cdot (x+z\sum_{j=1}^n\xi_jt/n)}d\mu_0(h) e^{\frac{it/n}{2\hbar}\sum_{j=1}^nz^2\xi_{j}^2} \frac{d\xi_1 \cdots d\xi_n}{(2\pi i \hbar z^{-1}(t/n)^{-1})^{3n/2}},
\end{gathered}
\end{equation*}
for $z\in \R$, $z>0$, by the classical change of variable formula, $F(z)$ is a constant function equal to the finite dimensional oscillatory integral \eqref{int-n-0}.  Further $F $ is analytic on $D_{\pi/2}$, as one can prove by applying Fubini and Morera's theorems. Indeed, for $z\in D_{\pi/2}$, the integral defining $F(z)$ is absolutely convergent since:
\begin{gather}
\int _{\R^{3n}} \left(\sum_{j=1}^n\sum_{\alpha=1}^3|z\xi_{j,\alpha}|\int_{t_{j-1}}^{t_j}\int_{\R^3}\left | \exp \left [ik\cdot \left(x+ \left( \sum_{l=1}^{j-1}z\xi_l \right ) \frac{t}{n}+z\xi_j(s-t_{j-1})\right) \right ]\right | d|\mu_\alpha|(k)ds\right)^m \nonumber \\  
\int_{\R^3}\left | e^{ih\cdot \left (x+z\sum_{j=1}^n\xi_j\frac{t}{n} \right )}\right |d|\mu_0|(h) e^{\frac{it/n}{2\hbar}\sum_{j=1}^nz^2\xi_{j}^2} |\frac{d\xi_1 \cdots d\xi_n}{(2\pi  \hbar |z|^{-1}(t/n)^{-1})^{3n/2}} \leq \nonumber \\
\leq 
\int_{\R^{3n}} \left(\sum_{j=1}^n\sum_{\alpha=1}^3|z| |\xi_{j,\alpha}|\int_{0}^{t/n}\int_{\R^3}\exp\left[-|z|\sin \theta k\cdot \left(\sum_{l=1}^{j-1}\xi_l \frac{t}{n}+\xi_j s )\right) \right]d|\mu_\alpha|(k)ds\right)^m \nonumber
\\
\int_{\R^3}\left | e^{-|z| \sin \theta h\cdot \sum_{j=1}^n\xi_j \frac{t}{n}}\right | d|\mu_0|(h) e^{-\frac{|z|^2\sin (2\theta)t/n}{2\hbar}\sum_{j=1}^nz^2\xi_{j}^2} \frac{d\xi_1 \cdots d\xi_n}{(2\pi \hbar |z|^{-1}(t/n)^{-1})^{3n/2}}= \nonumber \\
=
\int_{\R^{3n}} \left(\sum_{j=1}^n\sum_{\alpha=1}^3|\xi_{j,\alpha}|\int_{0}^{t/n}\int_{\R^3}\exp\left[-\sin \theta k\cdot \left(\sum_{l=1}^{j-1}\xi_l \frac{t}{n}+\xi_j s )\right) \right]|d|\mu_\alpha|(k)ds\right)^m \nonumber \\  
\int_{\R^3}\left | e^{-\sin \theta h\cdot \sum_{j=1}^n\xi_j\frac{t}{n}}\right | d|\mu_0|(h) e^{-\frac{\sin (2\theta)t/n}{2\hbar}\sum_{j=1}^n\xi_{j}^2} \frac{d\xi_1 \cdots d\xi_n}{(2\pi  \hbar (t/n)^{-1})^{3n/2}} \leq \nonumber \\
\leq 
\Bigg (\int _{\R^{3n}} \left(\sum_{j=1}^n\sum_{\alpha=1}^3|\xi_{j,\alpha}|\int_{0}^{t/n}\int_{\R^3}\exp\left[-\sin \theta k\cdot \left(\sum_{l=1}^{j-1}\xi_l \frac{t}{n}+\xi_j s )\right) \right]|d|\mu_\alpha|(k)ds\right)^{2m} \nonumber
\\ 
e^{-\frac{\sin (2\theta)t/n}{2\hbar}\sum_{j=1}^n\xi_{j}^2} \frac{d\xi_1\cdots d\xi_n}{(2\pi  \hbar (t/n)^{-1})^{3n/2}}\Bigg)^{1/2}
\nonumber\\ 
\left(\int_{\R^{3n}} \left ( \int_{\R^3}\left | e^{-\sin \theta h\cdot \sum_{j=1}^n\xi_j \frac{t}{n}}\right | d|\mu_0|(h) \right )^2e^{-\frac{\sin (2\theta)t/n}{2\hbar}\sum_{j=1}^n\xi_{j}^2} \frac{d\xi_1\cdots d\xi_n}{(2\pi \hbar (t/n)^{-1})^{3n/2}}\right)^{1/2}. \label{longF1}
\end{gather}
In the second step above we have got rid of the term $|z|$ in the integral because of classical (finite dimensional) change of variables formula.\\
For notational simplicity, in the following we shall describe in detail the one dimensional case but  similar arguments work also in three dimension.
The second factor in the product of integrals above is bounded by
\begin{gather*}
\int_{\R^{n}} \left ( \int_{\R}\left | e^{-\sin \theta h\cdot \sum_{j=1}^n\xi_j \frac{t}{n}}\right | d|\mu_0|(h) \right )^2e^{-\frac{\sin (2\theta)t/n}{2\hbar}\sum_{j=1}^n\xi_{j}^2} \frac{d\xi_1 \cdots d\xi_n}{(2\pi \hbar (t/n)^{-1})^{n/2}}=\\
= (\sin (2\theta))^{-n/2}\int_\R\int_\R e^{\frac{\hbar t \sin ^2 \theta (h_1+h_2)^2}{2\sin (2\theta)}}d|\mu_0|(h_1) d|\mu_0|(h_2) \leq \\
\leq (\sin (2\theta))^{-n/2} e^{\frac{2\hbar t \sin ^2 \theta R^2}{\sin (2\theta)}}|\mu_0|^2,
\end{gather*}
where $R\in \R^+$ is such that the support of $\mu_0$ is contained in $[-R,R]$.\\
Concerning the first factor on the right hand side of \eqref{longF1} we have the following upper bound, again written for simplicity of notations for the $1$-dimensional case  
\begin{gather*}
\int_{\R^{n}} \left(\sum_{j=1}^n|\xi_{j}|\int_{0}^{t/n}\int_{\R}\exp\left[ -\sin \theta k\cdot \left(\sum_{l=1}^{j-1}\xi_l \frac{t}{n}+\xi_j s \right) \right] d|\mu|(k)ds\right)^{2m} 
\\ 
e^{-\frac{\sin (2\theta)t/n}{2\hbar}\sum_{j=1}^n\xi_{j}^2} \frac{d\xi_1 \cdots d\xi_n}{(2\pi \hbar (t/n)^{-1})^{n/2}}= 
\\
=\sum_{j_1,\dots ,j_{2m}=1}^n\int_{\R^{n}}|\xi_{j_1}\cdots \xi_{j_{2m}}|\int_{0}^{t/n}\dots \int_{0}^{t/n}\int_{\R} \dots  \int_{\R} \exp\left [ -\sin \theta k_1 \cdot \left(\sum_{l_1=1}^{j_1-1}\xi_{l_1}\frac{t}{n}+\xi_{j_1 }s_1 \right) \right ] \cdots
\\ 
\cdots \exp \left[ -\sin \theta k_{2m} \cdot \left( \sum_{l_{2m}=1}^{j_{2m}-1}\xi_{l_{2m}}\frac{t}{n} +\xi_{j_{2m}} s_{2ms} \right) \right ] ds_1 \cdots ds_{2m} d|\mu|(k_1) \cdots  d|\mu|(k_{2m})
\\
e^{-\frac{\sin (2\theta)t/n}{2\hbar}\sum_{j=1}^n\xi_{j}^2} \frac{d\xi_1\cdots d\xi_n}{(2\pi  \hbar (t/n)^{-1})^{n/2}} \le \\
\leq \sum_{j_1,\dots ,j_{2m}=1}^n\left(\int _{\R^{n}}|\xi_{j_1}\cdots \xi_{j_{2m}}|^2e^{-\frac{\sin (2\theta)t/n}{2\hbar}\sum_{j=1}^n\xi_{j}^2} \frac{d\xi_1\cdots d\xi_n}{(2\pi  \hbar (t/n)^{-1})^{n/2}}\right)^{1/2}
\\
\Bigg(\int _{\R^{n}} \Bigg ( \int_{0}^{t/n}\dots\int_{0}^{t/n}\int_{\R}\dots \int_{\R} \exp\left[-\sin \theta k_1\cdot \left(\sum_{l_1=1}^{j_1-1}\xi_{l_1}\frac{t}{n}+\xi_{j_1 }s_1 )\right) \right]\cdots 
\\
\cdots \exp\left[ -\sin \theta k_{2m}\cdot \left(\sum_{l_{2m}=1}^{j_{2m}-1}\xi_{l_{2m}}\frac{t}{n}+\xi_{j_{2m}} s_{2ms} )\right) \right] ds_1 \cdots ds_{2m}d|\mu|(k_1)\cdots d|\mu|(k_{2m}) \Bigg )^2
\\
e^{-\frac{\sin (2\theta)t/n}{2\hbar}\sum_{j=1}^n\xi_{j}^2} \frac{d\xi_1 \cdots d\xi_n}{(2\pi  \hbar (t/n)^{-1})^{n/2}}\Bigg)^{-1/2}.
\end{gather*}
The first factor in the sum above is  finite since it is  equal to the moment of a Gaussian measure, i.e.
\[
\int_{\R^{n}}|\xi_{j_1}\cdots \xi_{j_{2m}}|^2e^{-\frac{\sin (2\theta)t/n}{2\hbar}\sum_{j=1}^n\xi_{j}^2} \frac{d\xi_1 \cdots d\xi_n}{(2\pi  \hbar (t/n)^{-1})^{n/2}}=(\sin(2\theta)^{-n/2})\cdot \E \left [ \left |X_{j_1} \cdots X_{j_{2m}} \right |^2 \right ],
\]
where $X_j$, $j=1,\dots ,n$ are i.i.d centered Gaussian random variables with covariance $\sigma =\hbar(\sin(2\theta)t/n)^{-1}$.
Analogously the second factor is an absolutely convergent integral since it is of the form
\begin{gather*}
\int_{0}^{t/n}\dots\int_{0}^{t/n}\int_{\R}\dots \int_{\R} 
    \E \left [    e^{\sum_{j=1}^na_j(k_1,\dots ,k_{4m},s_1,\dots ,s_{4m})X_j} \right ]ds_1\cdots ds_{4m}d|\mu|(k_1)\cdots d|\mu|(k_{4m})=
\\
=\int_{0}^{t/n}\dots\int_{0}^{t/n}\int_{\R}\dots \int_{\R} e^{\frac{1}{2\sin(2\theta)t/n}\sum_{j=1}^n\left ( a_j(k_1,\dots ,k_{4m},s_1,\dots ,s_{4m}) \right )^2}ds_1\cdots ds_{4m}d|\mu|(k_1)\cdots d|\mu|(k_{4m}),
\end{gather*}
where $a_j$ are linear functions of the variables $k_1,\dots ,k_{4m},s_1, \dots ,s_{4m}$ and  the last integral is finite since $\mu$ is  by assumption compactly supported. Hence we can conclude that $F$ is analytic on $D_{\pi/2}$. Further, by a classical change of variables formula, it is simple to see that $F$ is constant on any ray of the form $r_\theta:=\left \{z\in \C: z=|z|e^{i\theta}, |z|\in \R^+ \right \}$ with $\theta \in [0,\pi/2]$, hence by analyticity it is constant on $\bar D_{\pi/2}$, giving for any $n\in \N$
\begin{gather}
\left(  \int_{P _n\Hi}^o e ^{i\frac{\|P _n \gamma \|^2}{2\hbar} }d (P _n\gamma )\right )^{-1}  \int_{P_n\Hi}^o e ^{i\frac{\|P_n \gamma \|^2}{2\hbar} }g_m(P _n \gamma )d (P_n\gamma )= F(e^{i\pi /4}) = \nonumber
\\
= \int _{\R^{3n}} \left(\sum_{j=1}^n \sqrt{ i\hbar}\,\frac{(x_{j}-x_{j-1})}{t/n}\cdot\int_{t_{j-1}}^{t_j}\ba\left( \sqrt{ i\hbar}\,x_{j-1}+ \sqrt{ i\hbar}\,\frac{(x_{j}-x_{j-1})}{t/n}(s-t_{j-1} )+x\right) ds\right)^m \nonumber
\\
\psi_0\left (  \sqrt{ i\hbar}\, x_n+x \right )e^{-\frac{1}{2 t/n}\sum_{j=1}^n(x_{j}-x_{j-1})^2} \frac{dx_1 \cdots dx_n}{(2\pi \hbar t/n)^{3n/2}}= \nonumber
\\
= {\mathbb E}\left[  \psi_0\left (  \sqrt{ i\hbar}\, \omega_n(t)+x \right )\left( \sqrt{ i\hbar}\,\int_0^t\ba\left( \sqrt{ i\hbar}\omega_n(s)+x\right)\cdot  \dot \omega_n(s)ds\right)^m \right], \label{longF2}
\end{gather}
where $\omega_n$ stands for the piecewise linear approximation of Brownian motion defined above, namely:
$$
\omega_n(s)=\sum_{k=1}^n\chi_{[t_{k-1},t_k]}(s)\left(\omega (t_{k-1})+\frac{\omega(t_{k})-\omega(t_{k-1})}{t_{k}-t_{k-1}}(s-t_{k-1})\right),
$$
with $s\in [0,t]$, $t_k=k/n$. Thanks to the result of Lemma \ref{LemmaAppendix}, the right side of \eqref{longF2} converges  for $n \to \infty$ to 
$$ {\mathbb E}\left[  \psi_0( \sqrt{ i\hbar}\, \omega(t)+x)\left( \sqrt{ i\hbar}\,\int_0^t\ba\left( \sqrt{ i\hbar}\omega(s)+x\right)\circ  d \omega(s)\right)^m \right].
$$
\end{proof}

The last step is the proof that the integral provides a representation of the solution to the Schr\"odinger equation by the Dyson series expansion. 

\begin{teorema}
\label{Teorema4}
Under the assumption of Theorem \ref{teoDyson}, the solution of the Schr\"odinger equation with magnetic field
\[
i \hbar \partial_t \psi(t)=H \psi(t,x), \quad \psi(0,x)=\psi_0(x), \quad H=\frac{1}{2}(-i\hbar \nabla - \lambda \ba(x))^2
\]
can be expressed by  the perturbative Dyson series expansion as
\[
e^{-\frac{i}{\hbar}Ht}\psi_0=\sum_{m=0}^{\infty}\lambda ^m \psi_m(t),
\]
where the vector  $\psi_m$ can be expressed by a Feynman map of the form
\begin{align}
\psi_m(t,x)=& \frac{1}{m!}\left (-\frac{i}{\hbar} \right )^m \widetilde{\int_{\mathcal{H}_t}}\left(\int_0^t  \ba (\gamma(s)+x)\cdot \dot \gamma(s)ds\right )^m e^{\frac{i}{2\hbar}\int_0^t\|	\dot\gamma(s)\|^2ds}\psi_0(\gamma(t)+x) d\gamma\nonumber\\
=& \frac{1}{m!}\left (-\frac{i}{\hbar} \right )^m \mathbb{E} \left [ \left( \sqrt{ i\hbar}\int_0^t  \ba \left ( \sqrt{ i\hbar}\omega(s)+x \right )\circ d\omega(s) \right)^m  \psi_0\left ( \sqrt{ i\hbar}\omega(t)+x \right ) \right ].\label{inf-n-fin}
\end{align}
The expansion is convergent in $L^2(\R^3)$ for $\lambda \in \C$, with $|\lambda |<\lambda ^*$, $\lambda ^*$ given by \eqref{lambdastar}.
The integral under the expectation is to be understood as a Stratonovich stochastic integral.
\end{teorema}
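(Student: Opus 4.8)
The plan is to show that, for every $m\ge 0$, the vector $\psi_m(t)$ defined by \eqref{inf-n-fin} coincides with the $m$-th term $\phi_m(t)$ of the Dyson expansion of Theorem~\ref{teoDyson}. Granting this, formula \eqref{seriesDyson-m} together with Theorem~\ref{teoDyson} immediately give that $\sum_m\lambda^m\psi_m(t)$ converges in $L^2(\R^3)$ for $|\lambda|<\lambda^*$, with $\lambda^*$ as in \eqref{lambdastar}, and that its sum equals $e^{-\frac{i}{\hbar}Ht}\psi_0$; this also legitimizes, a posteriori, the term-by-term expansion in powers of $\lambda$ of the exponential $e^{-\frac{i}{\hbar}\lambda\int_0^t\ba(\gamma(s)+x)\cdot\dot\gamma(s)\,ds}$ under the Feynman map. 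That the two expressions in \eqref{inf-n-fin} --- the infinite dimensional oscillatory integral (Feynman map) and the Wiener integral --- define the same element of $L^2(\R^3)$ is already the content of Theorem~\ref{Teorema3} applied to $g=g_m^x$, the integrability being guaranteed by the estimates in its proof.

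To identify $\psi_m(t)$ with $\phi_m(t)$ I would work on the Wiener side. Set $X_s:=\sqrt{i\hbar}\,\omega(s)+x$, so that $dX_s=\sqrt{i\hbar}\,d\omega(s)$ and $d\langle X^j,X^k\rangle_s=i\hbar\,\delta_{jk}\,ds$; since $\div\ba=0$ (Coulomb gauge) the Stratonovich correction vanishes, whence $\sqrt{i\hbar}\int_0^t\ba(X_s)\circ d\omega(s)=\int_0^t\ba(X_s)\cdot dX_s=:N_t$. On the class $\Fo_c(\R^3)$ --- which, by the computations in the proof of Lemma~\ref{lemmastimaAB}, is left invariant by $U_0$, by $\ba\cdot\nabla$ and by multiplication by $|\ba|^2$, so that all functions occurring below stay in it --- two elementary facts are available: (i) $\mathbb{E}_x[f(X_{s+h})\mid\mathcal{F}_s]=(U_0(h)f)(X_s)$, a Gaussian computation; and (ii) for a predictable process $H$ and $\Phi\in\Fo_c(\R^3)$,
\[
\mathbb{E}_x\Big[\int_0^t H_s\cdot dX_s\;\Phi(X_t)\Big]=i\hbar\,\mathbb{E}_x\Big[\int_0^t H_s\cdot\nabla\big(U_0(t-s)\Phi\big)(X_s)\,ds\Big],
\]
which follows from (i), from the identity $dg(s,X_s)=\nabla g(s,X_s)\cdot dX_s$ for $g(s,y):=(U_0(t-s)\Phi)(y)$ (the drift cancelling, as $g$ solves the free Schr\"odinger equation backward in time), and from the covariation rule.

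Next, I would expand $N_t^m$ by It\^o's formula: with $n:=k+p$, the quantity $N_t^m/m!$ is a finite sum of iterated integrals over the simplices $\Delta_{n}(t)$, indexed by the number $p$ of ``contractions'' and the number $k=m-2p$ of ``free'' stochastic factors, each term carrying $k$ factors $\ba(X_{s_j})\cdot dX_{s_j}$ and $p$ factors $\tfrac{i\hbar}{2}|\ba(X_{s_j})|^2\,ds_j$ with combinatorial coefficient $1$; for each pair $(k,p)$ there are exactly $\binom{n}{k}$ such terms, one per ordering of the free and contraction slots. Inserting this into $\tfrac{1}{m!}(-\tfrac{i}{\hbar})^m\,\mathbb{E}_x[N_t^m\psi_0(X_t)]$ and applying (i) and (ii) repeatedly, working from the latest factor inward, each free factor becomes the operator $A=i\hbar\,\ba\cdot\nabla$ (the $i\hbar$ from (ii) cancelling against the $1/(i\hbar)$ relating $A$ to $\ba\cdot\nabla$) and each contraction becomes multiplication by $\tfrac{i\hbar}{2}|\ba|^2$, inserted between free propagators $U_0$ along the simplex. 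One is thus left with $(-\tfrac{i}{\hbar})^m$ times the sum, over all $\binom{n}{k}$ placements, of $\int_{\Delta_n(t)}U_0(t-s_n)O_1U_0(s_n-s_{n-1})O_2\cdots O_nU_0(s_1)\psi_0\,ds_1\cdots ds_n$. Since $(-\tfrac{i}{\hbar})(i\hbar)=1$, splitting $(-\tfrac{i}{\hbar})^m=(-\tfrac{i}{\hbar})^{n}(-\tfrac{i}{\hbar})^p$ and absorbing one factor $-\tfrac{i}{\hbar}$ into each $\tfrac{i\hbar}{2}|\ba|^2$ turns it into $B=\tfrac12|\ba|^2$ and leaves the coefficient $(-\tfrac{i}{\hbar})^{n}$; comparison with \eqref{phinkprecise} (recall $2n-k=m$ and $k\le n$) gives precisely $\psi_m(t)=\sum_{(n,k)\in\N^2:\,2n-k=m,\,k\le n}(-\tfrac{i}{\hbar})^n\phi_{n,k}=\phi_m(t)$.

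The delicate point is the rigorous execution of the third paragraph: the stochastic calculus there is run for the complex-rotated process $X_s=\sqrt{i\hbar}\,\omega(s)+x$, which is not a genuine diffusion. A clean way to legitimize it is to first carry out the argument for the heat semigroup $V(z)=e^{-zH}$ with $z>0$ real, where $X$ is a rescaled Brownian motion and the Feynman-Kac-It\^o formula \eqref{heat1.0} together with ordinary It\^o calculus apply verbatim, obtaining $\psi_m(z)=\phi_m(z)$ for $z>0$; one then extends this identity to $z=\tfrac{i}{\hbar}t$ by analytic continuation in $z\in\bar D_{\pi/2}$ --- the needed analyticity of the finite dimensional approximants being exactly the property established for the function $F$ in the proof of Theorem~\ref{Teorema3} --- with the radius of convergence $\lambda^*$ provided by Remark~\ref{rem-anal}.
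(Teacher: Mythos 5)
Your proposal is correct and shares the paper's overall architecture: reduce the theorem to the term-by-term identity $\psi_m=\phi_m$, establish that identity for the heat semigroup (real time), and transfer it to the Schr\"odinger case by analytic continuation in the time-like parameter. Where you genuinely diverge is in how the real-time identification is obtained. The paper treats the Feynman--Kac--It\=o formula \eqref{heat1} as a black box: both $e^{-tH/\hbar}\psi_0$ computed via Dyson and via \eqref{heat1} are convergent power series in $\lambda$, so uniqueness of power-series coefficients yields $\langle\phi,\phi_m(t/\hbar)\rangle=$ the $m$-th Wiener moment, in one line. You instead re-derive this coefficient identity from scratch by It\^o calculus: the iterated-integral expansion of $N_t^m/m!$ over simplices, the Markov property $\mathbb{E}[\Phi(X_t)\mid\mathcal{F}_s]=(U_0(t-s)\Phi)(X_s)$, and the covariation identity (ii) which converts each stochastic factor into $A=i\hbar\,\ba\cdot\nabla$ and each bracket term into $B=\tfrac12|\ba|^2$; your bookkeeping of the powers of $(-i/\hbar)$ and $(i\hbar)$ and of the $\binom{n}{k}$ orderings correctly reproduces \eqref{phinkprecise} with $2n-k=m$, $k\le n$. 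What your route buys is a self-contained, structurally transparent proof that exhibits exactly how the Stratonovich/It\^o terms generate the operators $A$ and $B$ of the Dyson series (in effect a term-by-term proof of the relevant instance of Feynman--Kac--It\=o); what it costs is the extra stochastic-calculus machinery, which the paper avoids by citing \cite{Sim}. Two small points of care: the identity $\circ\,d\omega=d\omega$ uses $\div\ba=0$, which the paper assumes without loss of generality and which is also implicit in the form of $A$ and $B$, so you should state that you work in that gauge; and the analyticity needed for the final continuation is that of the Wiener integral in the scaling parameter $\xi$ (equation \eqref{heat3}), which by Brownian scaling is of the same nature as, but not literally identical to, the property of the function $F$ established in the proof of Theorem \ref{Teorema3} --- the compact support of the measures $\mu_j$ and $\hat\psi_0$ supplies the required bounds in either case. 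Neither point is a gap.
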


\begin{proof} 
By Theorem \ref{teoDyson} for $|\lambda|<\lambda ^*$ the vector $\psi(t)=e^{-\frac{i}{\hbar }Ht}\psi_0$ in $L^2(\R^3)$ is given by the convergent power series expansion \eqref{seriesDyson-m}. Hence, we are left to prove that for any $m\in \N$ the term $\phi_m$ in \eqref{phimprec-eo} is equal to $\psi_m$ as given in \eqref{inf-n-fin}.

By Remark \ref{rem-anal}, the Hamiltonian operator $H$ generates an analytic semigroup $e^{-zH}:L^2(\R^3)\to L^2(\R^3)$, where $z\in \C$, $Re(z)\geq 0$, with a  convergent Dyson expansion  of the form $e^{-zH}\psi_0=\sum _m \lambda ^m \phi_m (z)$ with a radius of convergence $\lambda^*(z)$ depending on $|z|$ (see Eq. \eqref{lambda-z}).
In particular, for $z\in \R^+$, namely $z=\frac{t}{\hbar}$, the family of operators $T(t)=e^{-\frac{t}{\hbar}H}$, $t\in \R^+$, yields the heat semigroup generated by $H$ (described in Section \ref{sez3.0}). In this case, by Feynman-Kac-It\=o formula \cite{Sim} the vector $e^{-\frac{t}{\hbar}H}\psi_0$ is given by the Wiener integral
\begin{equation}
\label{heat1}
e^{-\frac{t}{\hbar}H}\psi_0(x)=\bE \left [\psi_0(\sqrt \hbar \omega (t)+x)e^{-\frac{i\lambda}{\hbar}\int_0^t\ba (\sqrt\hbar \omega (s)+x)\circ d\omega (s)} \right ].
\end{equation}
For any $\phi\in L^2(\R^3)$ the inner product $\langle \phi, e^{-zH}\psi_0\rangle$ is an analytic function of $z\in D$, $D=\{z\in \C, Re(z)\geq 0\}$,  continuous in the closure $\bar D$ of $D$ and admitting the expansions
$$\langle \phi, e^{-zH}\psi_0\rangle=\sum_{m=0}^{\infty}\lambda ^m \langle \phi, \phi_m(z)\rangle. $$
By formula \eqref{phinkprecise-z} each term $\langle \phi, \phi_m(z)\rangle$ is an analytic function of $z\in D$, continuous in the closure $\bar D$ and for $z=t/\hbar$, $t\in \R^+$, by formula \eqref{heat1} it is equal to
\begin{equation}\label{heat2}
\langle \phi, \phi_m(t/\hbar)\rangle=\left(-\frac{i}{\hbar}\right)^m\int_{\R^3}\bar \phi(x)\bE\left[\psi_0(\sqrt \hbar \omega (t)+x)\left(\sqrt\hbar\int_0^t\ba (\sqrt\hbar \omega (s)+x)\circ d\omega (s)\right)^m\right]dx.
\end{equation}
By replacing $t$ with $t\xi$, with  $\xi\in \R^+$, the expression above assumes the following form:
\begin{equation*}
\langle \phi, \phi_m(t\xi/\hbar)\rangle=\left(-\frac{i}{\hbar}\right)^m\int_{\R^3}\bar \phi(x)\bE\left[\psi_0(\sqrt \hbar \omega (t\xi)+x)\left(\sqrt\hbar \int_0^{t\xi}\ba (\sqrt\hbar \omega (s)+x)\circ d\omega (s)\right)^m\right]dx,
\end{equation*}
and thus
\begin{equation}
\label{heat3}
\scalebox{0.965}{$\displaystyle \langle \phi, \phi_m (t\xi/ \hbar )\rangle =\left(-\frac{i}{\hbar}\right)^m\int_{\R^3}\bar \phi(x)\bE\left[\psi_0(\sqrt{ \hbar\xi} \omega (t)+x)\left(\sqrt{ \xi \hbar}\int_0^{t}\ba (\sqrt{\hbar \xi}\omega (s)+x)\circ d\omega (s)\right)^m\right]dx,$}
\end{equation}

and since by the discussion above, both the right hand side and the left hand side of \eqref{heat3} are analytic for $\xi \in D$ and continuous for $\xi\in \bar D$, by setting $\xi\equiv i$ we obtain the required equality, namely:
$$\langle \phi, \phi_m(it/\hbar)\rangle=\left(-\frac{i}{\hbar}\right)^m\int_{\R^3}\bar \phi(x)\bE\left[\psi_0(\sqrt{ i\hbar} \omega (t)+x)\left(\sqrt{ i\hbar} \int_0^{t}\ba (\sqrt{i\hbar }\omega (s)+x)\circ d\omega (s)\right)^m\right]dx.$$
\end{proof}

\begin{Remark}\label{RemDysonV-1}

Theorem \ref{Teorema4} can be generalized to the case where a scalar potential $V$  is added to the right hand side of \eqref{SchMF}.
 Indeed, let  us  consider an Hamiltonian operator of the form \eqref{H-a-V}, with $V\in \Fo_c(\R^3)$. By Remark \ref{RemDysonV}, 
under the assumptions of Theorem  \ref{Teorema4} the vector $e^{-\frac{i}{\hbar}Ht}\psi_0 $ admits for $\lambda <\tilde \lambda$ ($\tilde\lambda $ defined as in \eqref{lambdatilde}) a convergent perturbative expansion: \[
e^{-\frac{i}{\hbar}Ht}\psi_0=\sum_{m=0}^{\infty}\lambda ^m \psi_m(t),
\]
where the generic vector  $\psi_m$ can be expressed by a Feynman map of the form
\begin{equation*}
\scalebox{0.935}{
$
\displaystyle
\psi_m(t,x)= \frac{1}{m!}\left (-\frac{i}{\hbar} \right )^m \widetilde{\int_{\mathcal{H}_t}}\left(\int_0^t  \ba (\gamma(s)+x)\cdot \dot \gamma(s)ds +\int_0^tV(\gamma (s) +x)ds     \right)^m e^{\frac{i}{2\hbar}\int_0^t\|	\dot\gamma(s)\|^2ds}\psi_0(\gamma(t)+x) d\gamma,
$}
\end{equation*}
which can be expressed in terms of the Wiener integral
$$\frac{1}{m!}\left (-\frac{i}{\hbar} \right )^m \mathbb{E} \left [ \left( \sqrt{ i\hbar}\int_0^t  \ba \left ( \sqrt{ i\hbar}\omega(s)+x \right )\circ d\omega(s)+\int_0^tV(\sqrt{i\hbar}\omega (s) +x)ds \right)^m  \psi_0\left ( \sqrt{ i\hbar}\omega(t)+x \right ) \right ].$$
%\end{align*}

 \end{Remark}

\begin{Remark}\label{Rem-10}
All results in Theorems \ref{teoDyson}, \ref{Teorema3}, \ref{Teorema4} have been formulated and proved for underlying $3$-dimensional space, but corresponding results and proofs hold for all space dimensions.
\end{Remark}

%%%%%%INDEPENDENCE OF CHOICE%%%%%%%%%%%%%%%%%%%%%%%%%%%

\section{Independence of the approximation procedure and renormalization term}
\label{sez4}

In the previous section we provided a convergent constructive expansion for the Feynman path integral representation for the solution of the Schr\"odinger equation with magnetic field. This was made by using a particular class of finite dimensional approximations, namely the ones related to piecewise linear path (see Eq. \eqref{gamman}).  This last section is devoted to the question, whether the independence of the construction of the Feynman path integral representation is independent of the chosen type of approximation. In particular, in the case of a constant magnetic field, we  show that the definition of  the Feynman path integral \eqref{Sch1} in terms of infinite dimensional oscillatory integral (in the sense of Def. \ref{intoscinf1}), i.e. requiring the independence of the limit of the sequence of finite dimensional approximations, requires the introduction of a natural renormalization term.
This result is a further development of a similar one obtained in \cite{AlBrMF}, the latter being only valid in the Coulomb gauge $\div \ba=0$. On the contrary, our main results (Theorem \ref{const3D} and Corollary \ref{Cor1}) provide a gauge-independent construction of the renormalization term as well as of the Feynman path integral, yelding a rigorous construction of the solution of the Schr\"odinger equation with vector potential $\ba$.

Let $\ba:\R^3\to \R^3$ be a linear vector potential corresponding to a constant magnetic field $\bB=\rot\ba$. More precisely, we assume that the vector field $\ba$ is given by 
\begin{equation}\label{vector-a}
\ba(x_1,x_2,x_3)=(\alpha^1_1x_1+\alpha^1_2x_2+\alpha^1_3x_3,\alpha^2_1x_1+\alpha^2_2x_2+\alpha^2_3x_3,\alpha^3_1x_1+\alpha^3_2x_2+\alpha^3_3x_3), \quad (x_1,x_2,x_3)\in \R^3, 
\end{equation}
where $\alpha ^i_j\in \R$, $i,j=1,\dots,3$ are real constants.  
We are going to study the Fresnel integrability (in the sense of Def. \ref{intoscinf1}) of the function $f:\Hi_t\to\C$, defined on the Cameron-Martin space $\Hi_t$ as
$$f(\gamma):=e^{-\frac{i}{\hbar}\int_0^t\ba(\gamma(s))\cdot \dot \gamma (s)ds}, \qquad \gamma \in \Hi_t. $$

For  any sequence $\{P _n\}_n$ of projectors
onto n-dimensional subspaces of $\Hi_t$, such that $P _n\leq P_{n+1}$ and $P_n \to \mathbb{I}$ strongly as $ n \to \infty$, we have to study the limit of the sequence of finite dimensional oscillatory integrals
$$\lim_{ n \to \infty} (2\pi i \hbar )^{-n/2}  \int_{P _n\Hi_t}^oe ^{i\frac{\|P _n \gamma\|^2}{2\hbar} }f(P _n \gamma )d (P _n\gamma ).
$$ 
As we shall see, the limit above cannot be independent of the sequence $\{P_n\}$. In fact it is necessary to renormalize the term $f(P_n\gamma) \equiv e^{-\frac{i}{\hbar}g(P_n\gamma)}$ by replacing the exponent $g(P_n\gamma) =\int_0^t\ba(P_n\gamma(s))\cdot \dot P_n\gamma (s)ds$ by $g(P_n\gamma)-r_n$, where $r_n$ is a suitable constant depending on the projector $P_n$ as well as on the magnetic field $§\bB$.

First of all, let us consider the linear operator $G:\Hi_t\to \Hi_t$ defined by 
\begin{equation}
\label{op-G}
G(\gamma)(s):=\int_0^s \ba (\gamma(r))dr, \qquad \gamma \in \Hi_t, \;s\in [0,t],
\end{equation}
in such a way that the function $f:\Hi_t\to \C$ can be written as $f(\gamma )=e^{-\frac{i}{\hbar}\langle G(\gamma), \gamma\rangle}$, i.e. the function $g:\Hi_t\to \C$, with $g(\gamma)= \langle G(\gamma), \gamma\rangle$ can be represented as the quadratic form associated to $G$. Note that $G$ is bounded in $\Hi_t$ due to our assumptions on $\ba$. The following lemma provides some properties of $G$.
\begin{lemma}\label{lemmaG}
The operator $G:\Hi_t\to\Hi_t$ is Hilbert-Schmidt .
The eigenvalues of the positive symmetric operator $G^\dag G$, are given by $\lambda _{m,j}= \frac{4 a_jt^2}{\pi^2(1+2m)^2}$, with $j=1,2,3,$ $m\in \N$ and $a_j\in \R^+$ eigenvalues of the matrix \eqref{matrixA} below.
\end{lemma}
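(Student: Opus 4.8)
The plan is to transport $G$ to the unitarily equivalent operator on $L^2:=L^2([0,t];\R^3)$ and recognise it there as a Volterra-type operator. Since $\ba$ is linear, write $\ba(x)=Mx$ with $M=(\alpha^i_j)_{i,j=1}^{3}$, so that $(G\gamma)(s)=\int_0^s M\gamma(r)\,dr$ and hence $\tfrac{d}{ds}(G\gamma)(s)=M\gamma(s)$. Under the canonical unitary $U\colon\Hi_t\to L^2$, $U\gamma=\dot\gamma$ (with inverse the integration map $(U^{-1}u)(s)=\int_0^s u(r)\,dr$), this last identity says exactly that $\tilde G:=UGU^{-1}$ is the map $u\mapsto M\!\int_0^\cdot u(r)\,dr$. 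Writing $L^2=L^2([0,t])\otimes\R^3$ and letting $V_0$ be the scalar Volterra operator $(V_0\varphi)(s)=\int_0^s\varphi$, one obtains the clean factorisation $\tilde G=V_0\otimes M$. I would first observe that $V_0$ has integral kernel $\chi_{\{r\le s\}}\in L^2([0,t]^2)$, so $V_0$ is Hilbert--Schmidt, and therefore so are $\tilde G$ and $G$.

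For the spectrum, a short Fubini computation gives the adjoint $\tilde G^{\dag}=V_0^{*}\otimes M^{\top}$ with $(V_0^{*}\varphi)(s)=\int_s^{t}\varphi$, whence
\[
G^{\dag}G\ \cong\ \tilde G^{\dag}\tilde G\ =\ (V_0^{*}V_0)\otimes(M^{\top}M).
\]
Here $A:=M^{\top}M$ is precisely the symmetric positive (semi)definite matrix of \eqref{matrixA}; denote its eigenvalues $a_1,a_2,a_3$. Since the spectrum of a tensor product of positive operators is the set of products of the spectra, it then suffices to determine $\mathrm{spec}(V_0^{*}V_0)$.

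To do this I would write the eigenvalue equation $V_0^{*}V_0\varphi=\mu\varphi$, i.e. $\int_s^{t}\!\!\int_0^{r}\varphi(\sigma)\,d\sigma\,dr=\mu\varphi(s)$, note that $V_0^{*}V_0\ge 0$ has trivial kernel so $\mu>0$, and differentiate twice in $s$ to reach the boundary value problem $\mu\varphi''=-\varphi$, $\varphi'(0)=0$, $\varphi(t)=0$, the two boundary conditions coming from the vanishing of the inner integral at $s=0$ and of the outer one at $s=t$. Its nontrivial solutions are $\varphi(s)=\cos(s/\sqrt\mu)$ subject to $\cos(t/\sqrt\mu)=0$, i.e. $\mu=\mu_m=\dfrac{4t^{2}}{\pi^{2}(1+2m)^{2}}$, $m\in\N$, and conversely each $\mu_m$ is realised. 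Combining this with the tensor-product reduction gives $\mathrm{spec}(G^{\dag}G)\setminus\{0\}=\{\mu_m a_j\}=\bigl\{\tfrac{4a_j t^{2}}{\pi^{2}(1+2m)^{2}}\bigr\}$, which is the claim, and $\mathrm{tr}(G^{\dag}G)=(a_1+a_2+a_3)\sum_m\mu_m<\infty$ re-confirms that $G$ is Hilbert--Schmidt.

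The computations above are essentially routine; the points that need a little care are getting the adjoint $\tilde G^{\dag}$ (equivalently $G^{\dag}$) and its boundary terms exactly right, justifying the double differentiation (legitimate because $\varphi\in L^2$ makes $V_0\varphi$, hence $V_0^{*}V_0\varphi$, a $C^1$ function with absolutely continuous derivative), confirming that the eigenfunctions produced actually lie in the Cameron--Martin space (they do, $\cos$ having square-integrable derivative), and checking that no eigenvalue is omitted: for fixed $\mu>0$ the Neumann condition at $0$ cuts the solution space of $\mu\varphi''=-\varphi$ down to one dimension, so a nontrivial eigenfunction forces the Dirichlet condition at $t$, which selects exactly the $\mu_m$. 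I expect this ``completeness of the list'' step to be the only place where one must genuinely argue rather than simply compute.
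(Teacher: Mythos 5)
Your proof is correct and follows essentially the same route as the paper: both diagonalize the matrix $A=M^TM$ of \eqref{matrixA} and reduce the spectral problem for $G^{\dag}G$ to a scalar boundary value problem $\mu\varphi''=-\varphi$ with one Dirichlet and one Neumann endpoint condition (the paper imposes $\eta(0)=0$, $\dot\eta(t)=0$ on the path $\eta$, you impose $\varphi'(0)=0$, $\varphi(t)=0$ on $\varphi=\dot\eta$; the quantization condition $\cos(t/\sqrt{\mu})=0$ and hence the eigenvalues are identical). Your repackaging via the unitary $\gamma\mapsto\dot\gamma$ and the factorisation $V_0\otimes M$ is essentially cosmetic, though it has the merit of making the Hilbert--Schmidt property immediate from the square-integrable kernel of $V_0$, a point the paper leaves implicit (it only asserts compactness, with Hilbert--Schmidtness following a posteriori from summability of the $\lambda_{m,j}$).
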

\begin{proof}Let us consider the positive symmetric operator $L\equiv G^\dag G:\Hi_t\to\Hi_t$ ($\dag$ standing for the adjoint), whose matrix elements are given by 
\begin{eqnarray}
\langle \eta, L\gamma\rangle &=&
\langle G\eta, G \gamma\rangle \nonumber
\\
&=&\int_0^t\eta(s)A \gamma(s)^Tdt, \qquad \eta, \gamma \in \Hi_t,  \label{IntA}
\end{eqnarray}
where   $A$ is the $3\times 3$ symmetric matrix with real elements given by 
\begin{equation}\label{matrixA}
A_{ij}=\sum_{l=1}^3\alpha ^l_i\alpha^l_j, \qquad i,j=1,...,3.
\end{equation}
Hence, for $\gamma \in \Hi_t$ the vector $L(\gamma)\in \Hi_t$ is given by
$$
L(\gamma )(s)^T=-\int_0^s\int_t^rA\gamma (\tau)^Td\tau dr,
$$
$T$ stands for transpose. $L$ is a compact operator in $\Hi_t$ and has a discrete spectrum. 
Indeed, by introducing in $\mathbb{R}^3$  an orthonormal basis $\{\hat u_1,\hat u_2,\hat u_3\} $ of eigenvectors of the  symmetric matrix $A$, with corresponding eigenvalues $a_1, a_2, a_3 \in \mathbb{R}^+$, the eigenvectors $\{ \gamma_{m} \}$ of $L$ can be represented as linear combination of $\hat u_1$, $\hat u_2$ and $\hat u_3$ as  $\gamma_{m}=\eta_{m,1}\hat u_1+\eta_{m,2}\hat u_2+\eta_{m,3}\hat u_3$, with $\eta_{m,j}:[0,t]\to \mathbb{R}$. Recalling the form of the scalar product in $\Hi_t$, for the expression \eqref{IntA} we get that the  components $\{\eta_{m,j}\}$ of the eigenvectors (with eigenvalues $\lambda_{m,j}$) are the solutions of 
\begin{equation*}
\begin{cases}
\lambda_{m,j}\ddot \eta_{m,j}+a_j\eta_{m,j}=0\\
\dot \eta_{m,j}(t)=0\\
\eta_{m,j}(0)=0
\end{cases}
\qquad j=1,2,3,
\end{equation*}
with
\begin{equation}
\label{eigenvalues}
\lambda_{m,j}= \frac{4 a_jt^2}{\pi^2(1+2m)^2}, \qquad m\in \N, \quad j=1,2,3. 
\end{equation}

\end{proof}

\begin{Remark}\label{rem-notrace}
From \eqref{eigenvalues} we see easily that the operator $G$ is not trace class in $\Hi_t$.
\end{Remark}

\begin{lemma}\label{lemma-anal-linear}
Let $\ba$ be the linear vector potential \eqref{vector-a} and let $\psi_0 \in L^2(\R^3)$ be such that its Fourier transform $\hat \psi_0$ has compact support.  Let $\{e_j\}_j$ be an orthonormal basis of the Cameron-Martin space $\Hi_t$ and let $P_n$ be the projection operator onto the span of the first $n$ vectors. Let $g_{\exp}^x:\Hi_t \to \C$ the function defined as
\[
g_{\exp}^x(\gamma)= \psi_0(\gamma(t)+x)\exp \left(-\frac{i}{\hbar}\int_0^t\ba (\gamma (s) +x) \cdot \dot \gamma (s)ds \right ), \qquad \gamma \in \Hi_t,
\]
and let $\bar{a}\in \R^+$ be the  constant defined as $\bar{a}=\displaystyle{\max_{j=1,2,3}} \{ a_j \}$, where $a_j$, $j=1,2,3$ are the eigenvalues of the (positive semidefinite) matrix $A$ defined in \eqref{matrixA}.
Then, for fixed $n$ and for
\begin{equation}\label{tstar}
t < t^*:=\frac{\pi}{4\sqrt{\bar{a}}}, 
\end{equation}
 the finite dimensional oscillatory integral $$(2\pi i\hbar )^{-n/2}\int_{P_n\Hi_t}^oe^{\frac{i}{2\hbar}\|P_n\gamma\|^2} g^x_{\exp}(P_n\gamma) dP_n\gamma$$ is equal to the    Wiener integral:
\begin{equation}\label{fey-Wie-findim}
\int_{P_n\Hi_t}^oe^{\frac{i}{2\hbar}\|\gamma\|^2} g^x_{\exp}(\gamma) d\gamma =\E \left [\psi_0(\sqrt{ i\hbar}{\omega}_n(t)+x)e^{-\frac{i}{\hbar}\sqrt{ i\hbar}\int_0^t\ba (\sqrt{ i\hbar}{\omega}_n(s) +x) \cdot \dot{{\omega}}_n(s) ds} \right ],
\end{equation}
where ${\omega}_n:=\tilde{P}_n(\omega)$ is defined by \eqref{tildePn}.
\end{lemma}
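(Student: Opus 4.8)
The plan is to imitate the analytic-continuation scheme used in the proof of Theorem~\ref{Teorema3}, the one genuinely new feature being the quadratic form carried by the \emph{linear} vector potential, which must be kept under control through Lemma~\ref{lemmaG} and the smallness assumption~\eqref{tstar}. Fix $n\in\N$ and parametrize $P_n\Hi_t$ by the coordinates $\eta\in\R^n$ relative to $e_1,\dots,e_n$. Since $\ba$ is linear, $\ba(\eta(s)+x)=\ba(\eta(s))+\ba(x)$, so, using $\eta(0)=0$ and the operator $G$ of~\eqref{op-G},
\[
\int_0^t\ba(\eta(s)+x)\cdot\dot\eta(s)\,ds=\langle G\eta,\eta\rangle+\ba(x)\cdot\eta(t),
\]
and the integrand becomes $g^x_{\exp}(\eta)=\psi_0(\eta(t)+x)\,e^{-\frac{i}{\hbar}\langle G\eta,\eta\rangle}\,e^{-\frac{i}{\hbar}\ba(x)\cdot\eta(t)}$. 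Since as a quadratic form $\langle G\,\cdot\,,\,\cdot\,\rangle$ coincides with that of the symmetric operator $\frac12(G+G^\dag)$, the phase of the finite dimensional integral in~\eqref{fey-Wie-findim} is $\frac{1}{2\hbar}\langle(\mathbb{I}-(G+G^\dag))\eta,\eta\rangle$ plus a linear term, and the amplitude is $\psi_0(\,\cdot\,+x)$, the Fourier transform of a compactly supported measure $\mu_0$. As a first step I would record that this oscillatory integral is well defined in the sense of Definition~\ref{def-osc-int-fin-dim}: by Lemma~\ref{lemmaG} the eigenvalues of $G^\dag G$ are $\frac{4a_jt^2}{\pi^2(1+2m)^2}$, so $\|G\|^2=\frac{4\bar a t^2}{\pi^2}<\frac14$ precisely because $t<t^*=\frac{\pi}{4\sqrt{\bar a}}$; hence $\|G\|<\frac12$, $\mathbb{I}-(G+G^\dag)\ge(1-2\|G\|)\mathbb{I}>0$, and the quadratic phase is non-degenerate.

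Next I would introduce, for $z$ in the closed sector $\bar D_{\pi/4}=\{z=|z|e^{i\theta}:|z|\ge0,\ \theta\in[0,\pi/4]\}$, the function
\[
F(z):=(2\pi i\hbar z^{-2})^{-n/2}\int_{\R^n}e^{\frac{iz^2}{2\hbar}|\eta|^2}\,e^{-\frac{iz^2}{\hbar}\langle G\eta,\eta\rangle}\,e^{-\frac{iz}{\hbar}\ba(x)\cdot\eta(t)}\,\psi_0(z\eta(t)+x)\,d\eta .
\]
For real $z>0$ the change of variables $\eta\mapsto z^{-1}\eta$ shows that $F(z)$ equals the oscillatory integral in~\eqref{fey-Wie-findim}, hence is constant on $(0,+\infty)$. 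The core of the argument is to show that $F$ is analytic on $\{0<\theta<\pi/4\}$ and continuous on $\{0<\theta\le\pi/4\}$, and that $\lim_{\theta\to0^+}F(e^{i\theta})$ equals the oscillatory integral on the left of~\eqref{fey-Wie-findim}. For $\theta\in(0,\pi/4]$ the real part of the quadratic part of the exponent equals $-\frac{|z|^2\sin2\theta}{2\hbar}\langle(\mathbb{I}-(G+G^\dag))\eta,\eta\rangle\le-\frac{(1-2\|G\|)|z|^2\sin2\theta}{2\hbar}|\eta|^2$; since $\supp\mu_0\subset B_\rho$ one has $|\psi_0(z\eta(t)+x)|\le|\mu_0|(\R^3)\,e^{\rho|z|\,|\sin\theta|\,|\eta(t)|}$ with $|\eta(t)|\le\sqrt t\,\|\eta\|$, and $e^{-\frac{iz}{\hbar}\ba(x)\cdot\eta(t)}$ is likewise of at most exponential growth in $\|\eta\|$; so the Gaussian factor dominates and the integral converges absolutely, uniformly on compact subsets of $\{0<\theta\le\pi/4\}$. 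Analyticity then follows from Fubini's and Morera's theorems exactly as in Theorem~\ref{Teorema3}, and the identification of the $\theta\to0^+$ limit with the oscillatory integral is the standard argument by which a finite dimensional oscillatory integral is computed by rotating the integration contour (insert a cutoff $\phi(\delta\eta)$, let $\theta\to0^+$ and then $\delta\to0$). A final real rescaling $\eta\mapsto|z|^{-1}\eta$ shows $F$ is constant on each ray, so, being analytic on $\{0<\theta<\pi/4\}$ and continuous on $\{0<\theta\le\pi/4\}$, it is constant there; therefore the oscillatory integral in~\eqref{fey-Wie-findim} equals $F(e^{i\pi/4})$.

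It remains to identify $F(e^{i\pi/4})$ with the Wiener integral. With $z=e^{i\pi/4}$, so $z^2=i$, the prefactor becomes $(2\pi\hbar)^{-n/2}$ and the exponential becomes $e^{-\frac{1}{2\hbar}|\eta|^2+\frac1\hbar\langle G\eta,\eta\rangle-\frac{i}{\hbar}e^{i\pi/4}\ba(x)\cdot\eta(t)}$; rescaling $\eta\mapsto\sqrt\hbar\,\eta$ and using $\sqrt i\,\sqrt\hbar=\sqrt{i\hbar}$ turns $F(e^{i\pi/4})$ into
\[
(2\pi)^{-n/2}\int_{\R^n}e^{-\frac12|\eta|^2}\,\psi_0(\sqrt{i\hbar}\,\eta(t)+x)\,e^{-\frac{i}{\hbar}\sqrt{i\hbar}(\sqrt{i\hbar}\,\langle G\eta,\eta\rangle+\ba(x)\cdot\eta(t))}\,d\eta ;
\]
recognizing $\sqrt{i\hbar}\,\langle G\eta,\eta\rangle+\ba(x)\cdot\eta(t)=\int_0^t\ba(\sqrt{i\hbar}\,\eta(s)+x)\cdot\dot\eta(s)\,ds$ and reading the standard Gaussian vector $\eta\in P_n\Hi_t$, via the construction of Section~\ref{sez3}, as the random variable $\omega_n=\tilde P_n(\omega)$ of~\eqref{tildePn}, this is exactly the right-hand side of~\eqref{fey-Wie-findim}. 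I expect the main obstacle to be the convergence estimate of the second step: choosing the sector and proving absolute convergence of $F$ there is precisely where the hypothesis $t<t^*$ — equivalently $\|G\|<\frac12$, via Lemma~\ref{lemmaG} — is indispensable, since it is what makes $\mathbb{I}-(G+G^\dag)$ positive and lets the Gaussian damping defeat the growth of $\psi_0$ and of the linear term; once that estimate is in hand, the remainder of the scheme is identical to the proof of Theorem~\ref{Teorema3}.
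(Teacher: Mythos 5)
Your proposal is correct and follows essentially the same route as the paper's proof: the same decomposition of the phase via linearity of $\ba$ into $\langle G\gamma,\gamma\rangle+\ba(x)\cdot\gamma(t)$, the same analytic family $F(z)$ constant on rays, the same use of Lemma~\ref{lemmaG} together with $t<t^*$ to make the real quadratic form $\mathbb{I}-2G$ (equivalently $\mathbb{I}-(G+G^\dag)$) strictly positive and hence the integral absolutely convergent in the sector, and the same evaluation at the ray of angle $\pi/4$ to land on the Gaussian/Wiener integral. The only differences are cosmetic: you work in the sector $\theta\in(0,\pi/4]$ and rescale by $\sqrt\hbar$ afterwards, whereas the paper uses the sector $(0,\pi/2)$ and evaluates directly at $z=\sqrt\hbar e^{i\pi/4}$.
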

\begin{proof}
By definition we have, for fixed $n$, by setting $\gamma_n\equiv P_n\gamma$:
\begin{align*}
 \left( 2\pi i \hbar \right )^{-n/2} \int _{P _n\Hi_t}^o e ^{i\frac{\|\gamma_n \|^2}{2\hbar} }g^x_{exp}(\gamma_n)d \gamma_n 
 &=\left( 2\pi i \hbar \right )^{-n/2} \int_{P_n\Hi}^o e^{\frac{i}{2\hbar} \|\gamma_n \|^2 }e^{-\frac{i}{\hbar}\int_0^t\ba(\gamma_n(s)+x)\dot\gamma_n ds}\psi_0(\gamma_n(t)+x)d \gamma _n\\
&=\left( 2\pi i \hbar \right )^{-n/2}\int_{P_n\Hi}^o e^{\frac{i}{2\hbar} \langle \gamma_n,\left(\mathbb{I}-2G\right)\gamma_n \rangle}e^{-\frac{i}{\hbar}\ba(x)\cdot \gamma_n(t)}\psi_0(\gamma_n(t)+x)d\gamma_n.
\end{align*}
Let us consider the function $F\to\R^+\to \C$ defined by 
\begin{equation}
\label{Fz}
F(z)=\left( 2\pi i \hbar  \right )^{-n/2}z^n\int_{P_n\Hi}^o e^{\frac{iz^2}{2\hbar} \langle \gamma_n,\left(\mathbb{I}-2G\right)\gamma_n \rangle}e^{-z\frac{i}{\hbar}\ba(x)\cdot \gamma_n(t)}\psi_0(z\gamma_n(t)+x)d\gamma_n.
\end{equation}

%Let us consider the open sector  $D_{\pi/2}=\{z\in \C\colon z=|z|e^{i\theta}, \theta \in (0,\pi/2)\}$ of the complex plane and function $F:\bar D_{\pi/2}\to \C$ defined by 
%\[
%F(z)=\left( 2\pi i \hbar z^{-1} \right )^{-n/2}\int_{P_n\Hi}^o e^{z^2\frac{i}{2\hbar} \langle \gamma_n,\left(\mathbb{I}-2G\right)\gamma_n \rangle}e^{-z\frac{i}{\hbar}\ba(x)\cdot \gamma_n(t)}\psi_0(z\gamma_n(t)+x)d\gamma_n
%\]
 By the classical change of variable formula, for $z \in \R^+$ the function $F$ is a constant  equal to the finite dimensional oscillatory integral above. In fact, if $t < t^*$, with $t^*$ given by \eqref{tstar},   $F$ can be extended to an analytic function defined on  the open sector  $D_{\pi/2}=\{z\in \C\colon z=|z|e^{i\theta}, \theta \in (0,\pi/2), |z|>0\}$ of the complex plane. Indeed, for any $\gamma \in \Hi_t$, if condition \eqref{tstar} is fulfilled, we have 
 
$$ \langle \gamma,\left(\mathbb{I}-2G\right)\gamma \rangle\geq \epsilon \|\gamma \|^2,$$ where $\epsilon >0$ is given by $\epsilon =1-\frac{2t\sqrt{\bar a}}{\pi}$. Indeed:

 \begin{align*}
\langle\gamma, \left( \mathbb{I}-2G\right ) \gamma \rangle &=\langle \gamma, \mathbb{I} \gamma \rangle - \langle \gamma, 2G\gamma \rangle\\
&= \langle\gamma,\gamma \rangle - \langle\gamma, 2|G|U \gamma \rangle.
\end{align*}
where, by the polar decomposition formula, $G=|G|U$, with $|G|=\sqrt{G^\dag G}$ and $U$ unitary operator.
Furthermore  
$$
|\langle\gamma, 2|G|U \gamma \rangle|\leq2 \|\gamma\|\|U\gamma\| \| |G|\|\leq 2\|\gamma \|^2 \sup_m \tilde\lambda _m,
$$
where $\| |G|\|$ denotes the operator norm of the positive operator$|G|$, while $\{\tilde\lambda_m\}_m $ are its eigenvalues, namely $\tilde\lambda_m=\sqrt{\lambda _{m,j}}$, with $\lambda_{m,j}$ given by \eqref{eigenvalues}. Hence, we get
$$|\langle\gamma, 2|G|U \gamma \rangle|\leq \frac{4t\sqrt{\bar a}}{\pi}\|\gamma \|^2, $$
hence, for $t<t^*$, we have, using the Fourier transform $\hat{\psi}_0$ of $\psi$,the following bound on the oscillatory integral in \eqref{Fz}:
\begin{gather*}
\int_{P_n\Hi}^o \left | e^{z^2 \frac{i}{2\hbar} \langle \gamma_n,\left(\mathbb{I}-2G\right)  \gamma_n \rangle} e^{-z\frac{i}{\hbar}\ba(x)\cdot \gamma_n(t)}\psi_0(z\gamma_n(t)+x)\right | d\gamma_n  \le \\
\le \int_{P_n\Hi}^o\int_{\R^3} e^{-\sin (2\theta)\frac{|z|^2}{2\hbar}\epsilon \|\gamma \|^2 +\sin \theta \frac{|z|}{\hbar}\ba (x)\cdot \gamma_n(t) -k|z|\gamma_n(t)}\frac{|\hat{\psi}_0(k)|}{(2\pi)^3}dk d\gamma_n <\infty,
\end{gather*}
where the convergence of the integral in the second line is assured by the conditions $\theta\in (0, \pi/2)$, $\epsilon =1-\frac{2t\sqrt{\bar a}}{\pi}>0$ and $\hat \psi_0$ compactly supported. 
Hence, by applying Fubini and Morera's theorems, it is simple to check that the function $F:\bar D_{\pi/2}\to \C$ is analytic on $D_{\pi/2}$ and continuous up to $\R^+$. Since by the classical change of variables formula the value of $F(z) $ does not depend on $|z|$, i.e. $F$ is constant along rays $\{ z\in D_{\pi/2}\colon z=|z|e^i\theta, |z|\in \R^+\}$, by analyticity $F$ is constant on $D_{\pi/2}$ and by continuity up to $\R^+$ we obtain in particular, $F(1)=F(\sqrt \hbar e^{i\pi/4})$, namely:
\begin{gather*}
\left( 2\pi i \hbar \right )^{-n/2} \int_{P_n\Hi}^o e^{\frac{i}{2\hbar} \|\gamma_n \|^2 }e^{-\frac{i}{\hbar}\int_0^t\ba(\gamma_n(s)+x)\dot\gamma_n ds}\psi_0(\gamma_n(t)+x)d \gamma _n= \\
=\left( 2\pi  \right )^{-n/2} \int_{P_n\Hi} e^{-\frac{1}{2} \|\gamma_n \|^2 }e^{-\frac{i\sqrt i}{\sqrt\hbar}\int_0^t\ba(\sqrt{i\hbar}\gamma_n(s)+x)\dot\gamma_n ds}\psi_0(\sqrt{i\hbar}\gamma_n(t)+x)d \gamma _n
\end{gather*}

and the last line is  equal to the r.h.s. of \eqref{fey-Wie-findim}, namely to:
$$
\E \left [\psi_0(\sqrt{ i\hbar}{\omega}_n(t)+x)e^{-\frac{i}{\hbar}\sqrt{ i\hbar}\int_0^t\ba (\sqrt{ i\hbar}{\omega}_n(s) +x) \cdot \dot{{\omega}}_n(s) ds} \right ].
$$
\end{proof}
The next step is the study of the convergence of the Wiener integrals on the r.h.s. of \eqref{fey-Wie-findim} which can be written as 
$$\E \left [\psi_0(\sqrt{ i\hbar}{\omega}_n(t)+x)e^{-\frac{i}{\hbar}\sqrt{i\hbar}\ba(x)\omega_n(t)}e^{g_n(\omega)} \right ],$$
where, given an orthonormal basis $\{e_n\}$ of $\Hi_t$, the random variables $g_n:C_t\to\C$ are defined by 
\begin{equation}
\label{n0}
g_n(\omega):=\int_0^t\ba (\omega_n(s)) \cdot \dot{{\omega}}_n(s) ds, \qquad \omega \in C_t,
\end{equation}
%Our purpose is to study the limit of the approximations when $n \to \infty$. In general the limit does not converge indepedently on the choice of the approximation. It seems necessary to develop a different technique by exploiting the Ramer's functional \cite{RAMER}, which lead us to a counterterm for the convergence. The following two results provide useful tools in the proof of Theorem \ref{Const3D}, in particular in the computation of the counterterm.
%The following lemmas show that the function $g:\Hi_t\to \R$ admit a stochastic extension to the Wiener space $C_t$, if suitably renormalized. 
where $C_t$ is as in \eqref{Ct}.
Further, let us consider the sequence $\{\mathfrak{g}_n\}$ of real random variables on $(C_t, \Ba(C_t),\bP)$ defined as
\begin{equation}
\label{n1}
\mathfrak{g}_n(\omega):=\int_0^t\ba(\omega (t))\cdot \dot{{\omega}}_n(t)dt, \qquad \omega \in C_t,
\end{equation}
where ${\omega}_n$ is given by
$
{\omega}_n:=\tilde{P}_n(\omega)$ and $\tilde P_n$ is defined in \eqref{tildePn}.

Consider the linear operator  $\mathfrak{G}: C_t\to\Hi_t$ defined by
\begin{equation}
\label{functionG}
\mathfrak{G}(\omega)(s)=\int _0^s \ba(\omega (r))dr, \qquad \omega \in C_t, \ s\in [0,t],
\end{equation}
with its help the functions $\{\mathfrak{g}_n\}$ and $\{g_n\}$ can be represented by  the  inner products:
\begin{equation}
\label{gn2}
\mathfrak{g}_n(\omega)=\langle \mathfrak{G}(\omega), \tilde P_n(\omega)\rangle, \qquad g_n(\omega)=\langle \mathfrak{G}(\tilde P_n\omega), \tilde P_n(\omega)\rangle.
\end{equation}
For our purpose, it is useful to introduce the definition of $\mathcal{H}$-differentiable function, following, e.g.,  \cite{RAMER}:
\begin{definition} 
A function ${\mathfrak{G}}:C_t\to C_t$ with ${\mathfrak{G}}(C_t)\subset \Hi_t$  is said to be {\em $\Hi_t$-differentiable} if for any $\omega\in C_t$ the function ${\mathfrak{G}}_\omega:\Hi_t\to \Hi_t$ defined as ${\mathfrak{G}}_\omega(\gamma)={\mathfrak{G}}(\omega +\gamma)$, $\gamma\in \Hi_t$, is Fr\'echet differentiable at the origin in $\Hi_t$. Its Fr\'echet derivative, namely the linear operator ${\mathfrak{DG}}_\omega(0)\in L(\Hi_t;\Hi_t)$,  will be denoted with the symbol $\mathfrak{DG}(\omega) $ and called the {\em $\Hi_t$-derivative of ${\mathfrak{G}}$ at $\omega$.}
\end{definition}

%We proceed writing explicitly $\ba(x,y,z):\mathbb{R}^3\to \mathbb{R}^3$:
%\[
%\ba(x_1,x_2,x_3)=(u_1x_1+v_1x_2+w_1x_3,u_2x_1+v_2x_2+w_2x_3,u_3x_1+v_3x_2+w_3x_3)
%\]

%\mathfrak{DG}_{\omega}(\gamma)(s)=\Big (
%u_1 \mathcal{I}(\gamma_1) +
%v_1 \mathcal{I}(\gamma_2)+
%w_1 \mathcal{I}(\gamma_3),
%u_2 \mathcal{I}(\gamma_1)+
%v_2 \mathcal{I}(\gamma_2)+
%w_2 \mathcal{I}(\gamma_3),\\
%u_3 \mathcal{I}(\gamma_1)+
%v_3 \mathcal{I}(\gamma_2)+
%w_3 \mathcal{I}(\gamma_3)
%\Big )
%\end{gather*}
%where 
%\[
%\mathcal{I}(\gamma_k)=\int_0^s \gamma_k(r) dr, \qquad k=1,2,3,
%\]
%and $\gamma=(\gamma_1,\gamma_2,\gamma_3) \in \Hi_t$.
%Fixed $\mathfrak{L}={\mathfrak{DG}} * {\mathfrak{DG}}$ it is possible to prove that $|{\mathfrak{DG}}|=\sqrt{\mathfrak{L}}$ is not a trace class operator. Let us consider for $\eta, \gamma \in \Hi_t$:
\begin{lemma}
\label{Lemma3}
Let $\mathfrak{G}: C_t\to \Hi_t$ be a linear operator such that its restriction $\mathfrak{G}_{\Hi_t}$ on $\Hi_t$ is Hilbert-Schmidt. Let $\{P_n\}_n$ be a sequence of finite dimensional projection operators in $\Hi_t$ converging strongly to the identity. Then the sequences of random variables $\{\mathfrak{g}_n\}$ and $\{\mathfrak{g}'_n\}$ on $C_t$ defined as:
\begin{eqnarray*}
\mathfrak{g}_n(\omega)&=&\langle \mathfrak{G}(\omega), \tilde P_n(\omega)\rangle, \qquad \omega \in C_t,\\
\mathfrak{g}'_n(\omega)&=&\langle \mathfrak{G}(\tilde P_n(\omega)), \tilde P_n(\omega)\rangle, \qquad \omega \in C_t,
\end{eqnarray*}
satisfy
\begin{equation}\label{conv-eq}
\lim_{n\to \infty}\bE[|\mathfrak{g}_n-\mathfrak{g}'_n|^2]=0.
\end{equation}

\end{lemma}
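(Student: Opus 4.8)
The plan is to expand $\mathfrak{g}_n-\mathfrak{g}'_n$ as a finite sum of products of Gaussian random variables attached to mutually orthogonal subspaces of $\Hi_t$, and then to bound the resulting $L^2$-norm by a Hilbert--Schmidt tail. Write $G:=\mathfrak{G}_{\Hi_t}$, which is Hilbert--Schmidt by hypothesis and satisfies $\langle\mathfrak{G}(\eta),\gamma\rangle=\langle G\eta,\gamma\rangle=\langle\eta,G^\dag\gamma\rangle$ for $\eta,\gamma\in\Hi_t$. I will use two facts about the isometry $n:\Hi_t\to L^2(C_t,\bP)$ recalled around \eqref{neta}: (i) $\langle\tilde P_n(\omega),h\rangle=n_{P_nh}(\omega)$ for every $h\in\Hi_t$, which is immediate from \eqref{tildePn} and the linearity of $n$; and (ii) for every $\gamma\in\Hi_t$ the pathwise quantity $\langle\mathfrak{G}(\omega),\gamma\rangle$ is the first-chaos random variable $n_{G^\dag\gamma}(\omega)$, i.e. the extension to $C_t$ of the bounded functional $\eta\mapsto\langle\eta,G^\dag\gamma\rangle$ on $\Hi_t$ --- in the concrete case $\mathfrak{G}(\omega)(s)=\int_0^s\ba(\omega(r))\,dr$ with $\ba$ linear this is seen by writing each component $\omega_\alpha(s)=n_{v_s\hat e_\alpha}(\omega)$, with $v_s$ as in \eqref{defvs}, and using continuity of $n$.

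Next I would fix an orthonormal basis $\{e_1,\dots,e_{d_n}\}$ of $V_n:=P_n\Hi_t$, so $\tilde P_n(\omega)=\sum_{i=1}^{d_n}n_{e_i}(\omega)e_i$, and compute, using linearity of $\mathfrak{G}$ together with (i)--(ii),
\[
\langle\mathfrak{G}(\omega)-\mathfrak{G}(\tilde P_n\omega),e_i\rangle=n_{G^\dag e_i}(\omega)-n_{P_nG^\dag e_i}(\omega)=n_{(\mathbb{I}-P_n)G^\dag e_i}(\omega),
\]
whence
\[
\mathfrak{g}_n(\omega)-\mathfrak{g}'_n(\omega)=\langle\mathfrak{G}(\omega)-\mathfrak{G}(\tilde P_n\omega),\tilde P_n\omega\rangle=\sum_{i=1}^{d_n}n_{e_i}(\omega)\,n_{(\mathbb{I}-P_n)G^\dag e_i}(\omega).
\]
The structural observation that makes everything work is that $e_i\in V_n$ while $(\mathbb{I}-P_n)G^\dag e_i\in V_n^{\perp}$; since $n$ carries orthogonal subspaces of $\Hi_t$ to independent Gaussian subspaces of $L^2(C_t,\bP)$, the family $\{n_{e_i}\}_{i\le d_n}$ is independent of $\{n_{(\mathbb{I}-P_n)G^\dag e_i}\}_{i\le d_n}$.

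Using this independence, $\bE[n_{e_i}n_{e_{i'}}]=\delta_{ii'}$ and $\bE[n_un_v]=\langle u,v\rangle$, I get
\[
\bE[\,|\mathfrak{g}_n-\mathfrak{g}'_n|^2\,]=\sum_{i,i'=1}^{d_n}\bE[n_{e_i}n_{e_{i'}}]\,\bE[n_{(\mathbb{I}-P_n)G^\dag e_i}n_{(\mathbb{I}-P_n)G^\dag e_{i'}}]=\sum_{i=1}^{d_n}\|(\mathbb{I}-P_n)G^\dag e_i\|^2\le\|(\mathbb{I}-P_n)G^\dag\|_{\mathrm{HS}}^2 .
\]
To conclude, fix an orthonormal basis $\{f_j\}$ of $\Hi_t$: for each $j$ one has $\|(\mathbb{I}-P_n)G^\dag f_j\|\to0$ because $P_n\to\mathbb{I}$ strongly, and $\|(\mathbb{I}-P_n)G^\dag f_j\|^2\le\|G^\dag f_j\|^2$ with $\sum_j\|G^\dag f_j\|^2=\|G\|_{\mathrm{HS}}^2<\infty$, so dominated convergence for series gives $\|(\mathbb{I}-P_n)G^\dag\|_{\mathrm{HS}}^2=\sum_j\|(\mathbb{I}-P_n)G^\dag f_j\|^2\to0$, which is \eqref{conv-eq}. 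Note that only strong convergence of $\{P_n\}$ is needed here, not monotonicity $P_n\le P_{n+1}$.

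The main obstacle is the harmless-looking fact (ii), namely that the pathwise object $\langle\mathfrak{G}(\omega),\gamma\rangle$ really does agree $\bP$-a.s. with the first-chaos extension $n_{G^\dag\gamma}$ of the linear functional it restricts to on $\Hi_t$ (equivalently, that $\mathfrak{G}$ is "the same" operator on $C_t$ as the measurable linear extension of $G$); once this identification is secured, the remainder is just the orthogonality bookkeeping above and the standard Hilbert--Schmidt tail estimate.
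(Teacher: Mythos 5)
Your proof is correct and follows essentially the same route as the paper's: the paper expands $\omega-\tilde P_n(\omega)=\sum_{j>n}e_j\,n_{e_j}(\omega)$ via the It\=o--Nisio theorem and pushes $\mathfrak{G}$ through the series, while you pair with a basis of $P_n\Hi_t$ and identify each pairing as the first-chaos variable $n_{(\mathbb{I}-P_n)\mathfrak{G}^\dag e_i}$; both reduce $\bE[|\mathfrak{g}_n-\mathfrak{g}'_n|^2]$ to $\sum_{i\le n}\sum_{j>n}\langle\mathfrak{G}e_j,e_i\rangle^2$ by the same independence argument and conclude with the Hilbert--Schmidt tail. The identification you flag as the ``main obstacle'' is precisely the content of the paper's It\=o--Nisio step (both require continuity of $\mathfrak{G}$ on $C_t$), so your argument is on the same footing as the original, with the minor bonus of noting that monotonicity of $\{P_n\}$ is not needed.
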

\begin{proof}
\begin{eqnarray*}
\bE[|\mathfrak{g}_n-\mathfrak{g}'_n|^2]&=&\int |\langle \mathfrak{G}(\omega)-G(\tilde P_n(\omega)), \tilde P_n(\omega)\rangle|^2d\bP(\omega)\\
&=&\int |\langle \mathfrak{G}(\omega-\tilde P_n(\omega)), \tilde P_n(\omega)\rangle|^2d\bP(\omega)\\
&=&\int |\langle \mathfrak{G}(\sum_{j=n+1}^\infty e_j n_{e_j}(\omega)),\sum_{i=1}^n e_i n_{e_i}(\omega)\rangle|^2d\bP(\omega)\\
&=&\sum_{j,j'=n+1}^\infty \sum_{i,i'=1}^n \langle \mathfrak{G}e_j,e_i\rangle \langle \mathfrak{G}e_{j'},e_{i'}\rangle \bE [n_{e_j}n_{e_{j'}}n_{e_i}n_{e_{i'}}]\\
&=&\sum_{j=n+1}^\infty \sum_{i=1}^n( \langle \mathfrak{G}e_j,e_i\rangle)^2 \\
&=&\sum_{j=n+1}^\infty \langle P_n\mathfrak{G}e_j,P_n\mathfrak{G}e_j\rangle,
\end{eqnarray*}
where in the third step we have applied It\=o-Nisio theorem.
By using the assumption that $\mathfrak{G}_\Hi$ is an Hilbert-Schmidt operator we obtain \eqref{conv-eq}.

\end{proof}
In this setting and with $\ba$ given by \eqref{vector-a} the map $\mathfrak{G}:C \to \mathcal{H}_t$, as defined by \eqref{functionG}, is given by
\begin{gather*}
\mathfrak{G}(\omega)(s)=\Big (
\alpha^1_1 \mathcal{I}(\omega_1) +
\alpha^1_2 \mathcal{I}(\omega_2)+
\alpha^1_3 \mathcal{I}(\omega_3),
\alpha^2_1 \mathcal{I}(\omega_1)+
\alpha^2_2 \mathcal{I}(\omega_2)+
\alpha^2_3 \mathcal{I}(\omega_3),\\
\alpha^3_1 \mathcal{I}(\omega_1)+
\alpha^3_2 \mathcal{I}(\omega_2)+
\alpha^3_3 \mathcal{I}(\omega_3)
\Big )
\end{gather*}
where 
\[
\mathcal{I}(\omega_k)=\int_0^s \omega_k(r) dr, \qquad k=1,2,3,
\]
and $\omega=(\omega_1,\omega_2,\omega_3) \in C_t$. Thus, in this case, the $\Hi_t-$derivative ${\mathfrak{DG}}(\omega)$ for any $\omega \in C_t$ is the linear operator given by $\mathfrak{DG}_{\omega}(\gamma)=G\gamma$, where $G:\Hi_t\to\Hi_t$ is defined in \eqref{op-G}.
%\begin{gather*}
In particular, according to lemmas \ref{Lemma3} and \ref{lemmaG} the sequences of random variables $\{g_n\}$ and $\{\mathfrak{g}_n\}$ defined respectively by \eqref{n0} and \eqref{n1} share the same convergence properties in $L^2(C_t, \bP)$.

The following result is a direct consequence of Lemmas 4.2 and 4.3 in \cite{RAMER}.
\begin{teorema}
\label{RAMERFORM}
Let ${\mathfrak{G}}:C_t\to C_t$, with ${\mathfrak{G}}(C_t)\subset \Hi_t$,  be a $\Hi_t$-differentiable map such that for any $\omega \in C_t$ the $\Hi_t$-derivative ${\mathfrak{DG}}(\omega)\in L(\Hi_t,\Hi_t)$ is an Hilbert-Schmidt operator. Let us assume furthermore that the maps $\|{\mathfrak{G}}\|:C_t\to \mathbb{R}$ and $\|{\mathfrak{DG}}\|_2:C_t\to \mathbb{R}$, where  $\|{\mathfrak{DG}}(\omega)\|_2$ denotes the Hilbert-Schmidt norm of ${\mathfrak{DG}}(\omega)$, belong to $L^2(C_t, \mathbb{P})$. Let $\{e_i\}$ be an orthonormal basis of $\Hi_t$ and let $\{P_n\}$ and $\{\tilde P_n\}$ be the sequence of finite dimensional projectors on the span of $e_1,\dots,e_n$ and their stochastic extensions to $C_t$ respectively. Then the sequence of random variables $\{ \mathfrak{h}_n\}$ defined as 
$$\mathfrak{h}_n(\omega):= \langle {\mathfrak{G}}(\omega), \tilde P_n(\omega)\rangle -\Tr (P_n{\mathfrak{DG}}(\omega)), \qquad \omega\in C_t,$$
converges in $L^2(C_t,  \mathbb{P})$ and the limit does not depend on the basis $\{e_i\}$, $i=1,\dots , n$.
\end{teorema}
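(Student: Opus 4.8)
The plan is to recognise $\mathfrak{h}_n$ as the sequence of finite-dimensional approximations of the (anticipating) divergence of $\mathfrak{G}$ on the abstract Wiener space $(\mathcal{H}_t,C_t,\mathbb{P})$ and then to quote the convergence and basis-independence results of Ramer. Fixing an orthonormal basis $\{e_i\}$ of $\mathcal{H}_t$ and writing $n_i:=n_{e_i}$, one has, since $\tilde P_n(\omega)=\sum_{i=1}^n n_i(\omega)e_i$ by \eqref{tildePn},
$$\mathfrak{h}_n(\omega)=\sum_{i=1}^n\Big(\langle\mathfrak{G}(\omega),e_i\rangle\,n_i(\omega)-\langle\mathfrak{DG}(\omega)e_i,e_i\rangle\Big)=\delta\big(P_n\mathfrak{G}\big)(\omega),$$
where $\delta$ is the divergence (Skorokhod) operator; indeed the subtracted trace $\Tr(P_n\mathfrak{DG}(\omega))=\sum_{i=1}^n\langle\mathfrak{DG}(\omega)e_i,e_i\rangle$ is precisely the It\^o-type counterterm produced by the commutation relation $D_{e_i}n_i=1$. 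The hypotheses of the theorem --- $\mathfrak{G}$ is $\mathcal{H}_t$-differentiable with $\mathfrak{DG}(\omega)$ Hilbert--Schmidt for every $\omega$, and $\|\mathfrak{G}\|,\|\mathfrak{DG}\|_2\in L^2(C_t,\mathbb{P})$ --- are exactly those under which Lemmas 4.2 and 4.3 of \cite{RAMER} apply, and the statement follows at once from them. For completeness I would also record the short self-contained argument below.

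First, the Cameron--Martin integration-by-parts identity $\mathbb{E}[F\,n_e]=\mathbb{E}[D_eF]$, applied with $F=\langle\mathfrak{G}(\cdot),e_i\rangle$ and $e=e_i$ and using $D_{e_i}\langle\mathfrak{G}(\omega),e_i\rangle=\langle\mathfrak{DG}(\omega)e_i,e_i\rangle$, gives $\mathbb{E}[\mathfrak{h}_n]=0$ for every $n$ and identifies $\Tr(P_n\mathfrak{DG}(\omega))$ as the unique centering counterterm. Next, for $m<n$ one has $\mathfrak{h}_n-\mathfrak{h}_m=\delta(u_{m,n})$ with $u_{m,n}:=(P_n-P_m)\mathfrak{G}$, whose Malliavin derivative at $\omega$ is the operator $(P_n-P_m)\mathfrak{DG}(\omega)$. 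Applying the divergence isometry
$$\mathbb{E}\big[\delta(u_{m,n})^2\big]=\mathbb{E}\big[\|u_{m,n}\|_{\mathcal{H}_t}^2\big]+\mathbb{E}\big[\Tr\big(((P_n-P_m)\mathfrak{DG})^2\big)\big],$$
the first term equals $\mathbb{E}\big[\|(P_n-P_m)\mathfrak{G}(\omega)\|_{\mathcal{H}_t}^2\big]$ and the second is bounded in absolute value by $\mathbb{E}\big[\|(P_n-P_m)\mathfrak{DG}(\omega)\|_2^2\big]$. Since $P_m\to\mathbb{I}$ strongly and $\mathfrak{DG}(\omega)$ is Hilbert--Schmidt, both $(P_n-P_m)\mathfrak{G}(\omega)\to0$ in $\mathcal{H}_t$ and $(P_n-P_m)\mathfrak{DG}(\omega)\to0$ in Hilbert--Schmidt norm as $m,n\to\infty$, pointwise in $\omega$; dominated convergence with dominating functions $\|\mathfrak{G}\|^2,\|\mathfrak{DG}\|_2^2\in L^1(C_t,\mathbb{P})$ then forces $\mathbb{E}[|\mathfrak{h}_n-\mathfrak{h}_m|^2]\to0$, so $\{\mathfrak{h}_n\}$ is Cauchy in $L^2(C_t,\mathbb{P})$ and converges. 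For the basis-independence I would compare the limits obtained from two bases $\{e_i\},\{e_i'\}$ with projectors $P_n,P_n'$: their approximants differ by $\delta((P_n-P_n')\mathfrak{G})$, and the same isometry together with $P_n,P_n'\to\mathbb{I}$ shows this tends to $0$ in $L^2$; equivalently, the common limit is the intrinsic divergence $\delta(\mathfrak{G})$, which is what Lemma 4.3 of \cite{RAMER} asserts. Applied to the operator $\mathfrak{G}$ of \eqref{functionG}, whose $\mathcal{H}_t$-derivative is the constant Hilbert--Schmidt operator $G$ of Lemma \ref{lemmaG} (so $\|\mathfrak{DG}\|_2$ is constant and $\|\mathfrak{G}\|\in L^2$ by the Gaussian bounds), this yields precisely the renormalised sequence needed in the sequel.

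The step I expect to be the main obstacle is the justification, for general (possibly nonlinear) $\mathfrak{G}$, that $P_n\mathfrak{G}$ lies in the domain of $\delta$ and that the divergence isometry can be applied and controlled uniformly in $m,n$; this is exactly where the square-integrability of $\|\mathfrak{G}\|$ and of $\|\mathfrak{DG}\|_2$ (rather than mere finiteness of $\mathfrak{DG}(\omega)$) is indispensable, and it is the technical heart of Ramer's Lemmas~4.2--4.3. In the linear case relevant to Section \ref{sez4} this difficulty is immaterial, since $\langle\mathfrak{G}(\omega),e_i\rangle$ is then a first-chaos (Gaussian) variable and the whole estimate reduces to the fourth-moment computation already performed in the proof of Lemma \ref{Lemma3}, supplemented by the trace counterterm.
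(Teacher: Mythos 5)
Your proposal is correct, and at its core it rests on the same justification the paper itself uses: the paper gives no proof of Theorem \ref{RAMERFORM} at all, stating only that it ``is a direct consequence of Lemmas 4.2 and 4.3 in \cite{RAMER}.'' What you add beyond that citation is a genuine self-contained argument: you identify $\mathfrak{h}_n=\delta(P_n\mathfrak{G})$ as a Skorokhod divergence, verify the centering via Cameron--Martin integration by parts, and obtain the Cauchy property from the divergence isometry $\mathbb{E}[\delta(u)^2]=\mathbb{E}[\|u\|_{\Hi_t}^2]+\mathbb{E}[\Tr((Du)^2)]$ together with $|\Tr(A^2)|\leq\|A\|_2^2$ and dominated convergence; the same estimate applied to $\delta((P_n-P_n')\mathfrak{G})$ gives basis-independence. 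This is a clean modern reformulation of Ramer's argument and everything in it is sound. The one step you correctly flag as the technical heart --- that $P_n\mathfrak{G}$ lies in the domain of $\delta$ and that the Fr\'echet $\Hi_t$-derivative of $\mathfrak{G}$ may be identified with the Malliavin derivative so that the isometry is legitimate --- is precisely the content of Ramer's Lemmas 4.2--4.3 in their 1974 language, and is where the hypotheses $\|\mathfrak{G}\|,\|\mathfrak{DG}\|_2\in L^2(C_t,\mathbb{P})$ enter; so your argument is not circular, but a fully self-contained write-up would have to either prove closability of the $H$-derivative under these hypotheses or retain the citation for that point, as you do. For the linear $\mathfrak{G}$ actually used in Section \ref{sez4} this issue is indeed immaterial, as you note.
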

 The previous theorem applied to our particular case provides actually a no-go result on the convergence of the sequence of random variables $\{g_n\}$ given by \eqref{n0}. Indeed, since in our particular case ${\mathfrak{DG}}(\omega)=G$ and by Remark \ref{rem-notrace} the operator $G:\Hi_t\to \Hi_t$ is not trace class, the sequence of real numbers $\Tr (P_n{\mathfrak{DG}}(\omega))\equiv \Tr (P_nG)$ does not in general converge independently on the choice of finite dimensional approximations $\{P_n\}$.  Hence, by Theorem \ref{RAMERFORM} and Lemma \ref{Lemma3} neither the random variables $\{\mathfrak{g}_n\}$ nor $\{g_n\}$ admit a well defined limit in $L^2(C_t, \bP)$ independent on the sequence $\{P_n\}$ of finite dimensional projectors. 
 The following theorem provides a suitable renormalization term, namely a sequence $\{r_n\}$ of real numbers such that the renormalized random variables $h_n:C_t\to \R$ given by $h_n(\omega ):=g_n (\omega)-r_n$ converge in $L^p(C_t, \bP)$ for all $p\geq 1$ and in probability to the Stratonovich stochastic integral $h(\omega)=\int_0^t\ba (\omega (s))\circ d\omega (s)$.
%Let us consider the generic linear three dimensional vectors field $\mathbf{a}$ and its constant curl $\nabla \times \mathbf{a}=\mathbf{B}$. 
%We can write them explicitly as
%\begin{equation*}
%\mathbf{a}= (a_1,a_2,a_3), \ \ \ \ \mathbf{B}= (B_1,B_2,B_3).
%\end{equation*}

\begin{teorema}
\label{const3D}
Let $\ba:\R^3\to \R^3$ be a linear vector field, let $\{e_k\}$ be an orthonormal basis of $\Hi_t$ and let $\{P_n\}$ and $\{\tilde P_n\}$ be the sequence of finite dimensional projectors on the span of $e_1,\dots,e_n$ in $\Hi_t$ and their stochastic extensions to $C_t$ respectively. Then by setting \begin{equation}\label{rn}
r_n:=\mathbf{B}\cdot\frac{1}{2}\sum_{k=1}^n \int_0^t e_k (s)\wedge \dot{e}_k (s) ds
\end{equation} with $\mathbf{B}=\rot \ba$, the sequence of random variables $h_n:C_t \to \R$ defined as:
%\footnote{$\{P_n\}$ projectors on the $\Span\{e_1, \dots, e_n\}$.}
\[
h_n(\omega):=\int_0^t \mathbf{a}(\omega_n(s) )\cdot \dot{\omega_n}(s) ds - r_n, \quad \omega \in C_t, 
\]
where $\omega_n := \tilde P_n(\omega)$, 
converges in $L^2(C_t, \mathbb{P})$, independently of $\{P_n\}_n$, to
\[
\int_0^t \mathbf{a} (\omega(s)) \circ d\omega(s).
\]
%where $\circ d\omega(s)$ is the Stratonovich integral.
\end{teorema}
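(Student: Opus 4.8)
The plan is to deduce the statement from the abstract convergence result of Theorem~\ref{RAMERFORM} (essentially Ramer's theorem) together with an explicit evaluation of the trace term it generates, and then to identify the limit by computing it along one convenient sequence of finite dimensional approximations.

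First I would record that, with $\mathfrak{G}:C_t\to\Hi_t$ the linear map \eqref{functionG}, one has $g_n(\omega):=\int_0^t\ba(\omega_n(s))\cdot\dot\omega_n(s)\,ds=\langle\mathfrak{G}(\tilde P_n\omega),\tilde P_n\omega\rangle$ and, since $\ba$ is linear, the $\Hi_t$-derivative is the constant Hilbert--Schmidt operator $\mathfrak{DG}(\omega)\equiv G$ of \eqref{op-G} (Lemma~\ref{lemmaG}). The maps $\|\mathfrak{G}\|$ and $\|\mathfrak{DG}\|_2$ belong to $L^2(C_t,\bP)$ (indeed $\|\mathfrak{G}(\omega)\|_{\Hi_t}^2=\int_0^t|\ba(\omega(s))|^2\,ds$ has finite expectation by the linearity of $\ba$, and $\|\mathfrak{DG}(\omega)\|_2=\|G\|_2$ is constant), so Theorem~\ref{RAMERFORM} applies: $\mathfrak{g}_n(\omega)-\Tr(P_nG)$ converges in $L^2(C_t,\bP)$ to a basis-independent limit, where $\mathfrak{g}_n(\omega)=\langle\mathfrak{G}(\omega),\tilde P_n\omega\rangle$. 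By Lemma~\ref{Lemma3}, $\bE[|g_n-\mathfrak{g}_n|^2]\to0$. Hence, from the decomposition
\[
h_n=g_n-r_n=(g_n-\mathfrak{g}_n)+\big(\mathfrak{g}_n-\Tr(P_nG)\big)+\big(\Tr(P_nG)-r_n\big),
\]
everything reduces to understanding the last summand.

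The crux is the computation of $\Tr(P_nG)-r_n$. From $\tfrac{d}{ds}(Ge_k)(s)=\ba(e_k(s))$ one gets $\Tr(P_nG)=\sum_{k=1}^n\int_0^t\ba(e_k(s))\cdot\dot e_k(s)\,ds$. Writing $\ba(x)=Mx$ and splitting $M$ into its symmetric and antisymmetric parts, the symmetric part contributes $\int_0^t e_k(s)^{T}M^{S}\dot e_k(s)\,ds=\tfrac12 e_k(t)^{T}M^{S}e_k(t)$ (using $e_k(0)=0$), while the antisymmetric part of $M$ — which is exactly the datum of $\bB=\rot\ba$ — contributes $\tfrac12\,\bB\cdot\int_0^t e_k(s)\wedge\dot e_k(s)\,ds$; summing over $k$, the antisymmetric contributions reproduce precisely the renormalization constant $r_n$ of \eqref{rn}. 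Therefore $\Tr(P_nG)-r_n=\tfrac12\sum_{k=1}^n e_k(t)^{T}M^{S}e_k(t)=\tfrac12\sum_{i,j}M^{S}_{ij}\langle P_n\rho^{(i)},\rho^{(j)}\rangle$, where $\rho^{(i)}\in\Hi_t$, $\rho^{(i)}(s)=s\,\hat e_i$, is the Riesz representative of evaluation of the $i$-th component at time $t$; since $P_n\to\mathbb{I}$ strongly and $\langle\rho^{(i)},\rho^{(j)}\rangle=t\,\delta_{ij}$, this converges to $\tfrac{t}{2}\Tr M=\tfrac{t}{2}\div\ba$, a limit manifestly independent of the basis. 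Combining the three summands, $h_n$ converges in $L^2(C_t,\bP)$ and its limit $L$ does not depend on $\{P_n\}_n$.

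It then remains to identify $L$, which, by the basis-independence just established, may be done along any single convenient sequence. I would use the Lévy--Ciesielski (Faber--Schauder) orthonormal basis of $\Hi_t$: each of its vectors is, at every time $s$, parallel to a fixed coordinate axis, so $e_k(s)\wedge\dot e_k(s)\equiv0$ and hence $r_n\equiv0$, while $\tilde P_n(\omega)$ is the piecewise-linear interpolation $\omega_n$ of $\omega$ on a dyadic partition whose mesh tends to $0$. Then $h_n=g_n=\int_0^t\ba(\omega_n(s))\cdot\dot\omega_n(s)\,ds$, and since $\ba$ is linear the integral over each subinterval equals its length times the value of $\ba$ at the midpoint, so $g_n=\sum_j\ba\!\big(\tfrac{\omega(t_{j-1})+\omega(t_j)}{2}\big)\cdot\big(\omega(t_j)-\omega(t_{j-1})\big)$, a midpoint Riemann sum. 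Splitting $\ba\big(\tfrac{\omega(t_{j-1})+\omega(t_j)}{2}\big)=\ba(\omega(t_{j-1}))+\tfrac12\ba(\omega(t_j)-\omega(t_{j-1}))$, the first part converges in $L^2$ to the It\^o integral $\int_0^t\ba(\omega)\cdot d\omega$, and the second, equal to $\tfrac12\sum_j(M\Delta_j)\cdot\Delta_j$ with $\Delta_j=\omega(t_j)-\omega(t_{j-1})$, converges in $L^2$ to $\tfrac{t}{2}\Tr M$ by the quadratic variation of Brownian motion; by the It\^o--Stratonovich relation (with $\div\ba=\Tr M$) this gives $L=\int_0^t\ba(\omega)\cdot d\omega+\tfrac{t}{2}\div\ba=\int_0^t\ba(\omega)\circ d\omega$. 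Finally, since $\ba$ is linear each $h_n$ is a polynomial of degree $\leq2$ in the first-chaos variables $n_{e_k}$, hence lies in a fixed finite sum of Wiener chaoses, so by hypercontractivity the $L^2$-convergence upgrades automatically to $L^p$-convergence for all $p\in[1,\infty)$ and to convergence in probability. The step I expect to be delicate is the trace bookkeeping of the third paragraph: one must separate precisely the part of $\sum_{k=1}^n\int_0^t\ba(e_k)\cdot\dot e_k$ cancelled by $r_n$ from the remainder, and verify that this remainder is the basis-independent constant $\tfrac{t}{2}\div\ba$ rather than a basis-dependent divergent quantity — which is exactly what makes the renormalized limit well defined and forces the precise form \eqref{rn} of $r_n$.
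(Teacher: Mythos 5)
Your proof is correct, and it reaches the theorem by a genuinely different (though closely related) route. The paper decomposes $X_n=\int_0^t\ba(\omega_n(s))\cdot\dot\omega_n(s)\,ds$ geometrically via Stokes' theorem, into the flux of the constant field $\bB$ through a surface spanning the closed loop formed by $\omega_n$ and the chord $\Lambda_n$ back to the origin, plus the line integral along $\Lambda_n$: the chord term converges without any renormalization, while the area term is written as $\bB\cdot\tfrac12\langle H^i(\omega_n),\omega_n\rangle$ for the three Hilbert--Schmidt area operators $H^i$, to which Lemma \ref{Lemma3} and Theorem \ref{RAMERFORM} are applied, $r_n$ arising as the contraction of $\bB$ with $\Tr(P_nH^i)$. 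You instead apply Theorem \ref{RAMERFORM} directly to $\mathfrak G$ itself (whose $\Hi_t$-derivative is the constant Hilbert--Schmidt operator $G$, so its hypotheses are immediate) and then split the trace $\Tr(P_nG)=\sum_{k\le n}\int_0^t\ba(e_k(s))\cdot\dot e_k(s)\,ds$ algebraically: the antisymmetric part of $M$ (with $\ba(x)=Mx$, $M^Ax=\tfrac12\bB\wedge x$) reproduces exactly $r_n$, while the symmetric part telescopes to $\tfrac12\sum_{k\le n}e_k(t)^TM^Se_k(t)=\tfrac12\sum_{i,j}M^S_{ij}\langle P_n\rho^{(i)},\rho^{(j)}\rangle\to\tfrac t2\div\ba$, manifestly basis-independent. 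These are two faces of the same symmetric/antisymmetric splitting (the paper's chord integral equals $\tfrac12\omega_n(t)^TM^S\omega_n(t)$), but your version makes explicit both the exact cancellation that forces the form \eqref{rn} of $r_n$ and the finite surviving constant $\tfrac t2\div\ba$, which is precisely the It\^o--Stratonovich correction. Your identification of the limit is also more self-contained: for linear $\ba$ the piecewise-linear approximation is an exact midpoint Riemann sum, and the It\^o sum plus the quadratic-variation term $\tfrac t2\Tr M$ yield the Stratonovich integral directly, whereas the paper invokes Lemma \ref{LemmaAppendix}, which is formulated for $\ba\in\Fo_c(\R^3)$ and does not literally cover linear $\ba$ without modification. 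The only point worth stating explicitly is that the L\'evy--Ciesielski projections realize the dyadic piecewise-linear interpolations only along a subsequence of basis cardinalities; since convergence of the full sequence $\{h_n\}$ and basis-independence of its limit are already established, computing the limit along that subsequence suffices.
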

\begin{proof}
Let us set
\begin{align*}
X_n(\omega)&= \int_0^t \mathbf{a}(\omega_n(s))\cdot \dot{\omega}_n(s) ds\\
&= \int_0^t a_1(\omega_n(s)) \dot{\omega}_{n,1}(s) \ ds + \int_0^t a_2(\omega_n(s))\dot{\omega}_{n,2} (s)ds +\int_0^t a_3(\omega_n(s)) \dot{\omega}_{n,3} (s)ds
\end{align*}
where $\omega_n=\left ( {\omega}_{n,1}, {\omega}_{n,2}, {\omega}_{n,3} \right )\in \mathcal{H}_t$.
By Stokes theorem:
\begin{equation}\label{stokes}
X_n=\iint_{S_n} \mathbf{B} \cdot \mathbf{n} \  dS -\int_{\Lambda_n} \mathbf{a}  \cdot d {\bf r}
\end{equation}
where $\Lambda_n$ is the (oriented) segment joining $\omega_n(t)$ with 0, while $\int_{\Lambda_n} \mathbf{a}  \cdot d {\bf r}$ is the line integral of $\ba$ along $\Lambda _n$. $S_n$  is any regular oriented surface with oriented boundary given by the close path union of $\omega_n$ and $\Lambda_n$, $\mathbf{n}$ denotes the normal unit vector and $\iint_{S_n} \mathbf{B} \cdot \mathbf{n} \  dS $ is the surface integral of $\bB$ on $S_n$. Our study can be restricted to 
\[
\iint_{S_n}\mathbf{B}\cdot \mathbf{n}  \ dS,
\]
as we can immediately see that the second term converges in $L^2(C_t,\bP) $ independently of $\{P_n\}$. Indeed:
\[
\int_{\Lambda_n} \mathbf{a} \cdot d\mathbf{r}=\int_0^1 \mathbf{a}(u\omega_n(t))du\cdot \omega_n(t)  ,\]
and for any sequence of finite dimensional projection operators $\{P_n\}_n$ such that $P_n \to \mathbb{I}$ we have
\[
\omega_n(t) \to \omega(t), \qquad \int_0^1\ba(u \,\omega_n(t))du\to \int_0^1\ba(u\omega(t))du, \qquad \forall t\geq 0.
\]
Let us consider now the surface integral $\iint_{S_n} \mathbf{B} \cdot \mathbf{n} \, dS$. Since by the assumption on $\ba$ the magnetic field $\mathbf{B}$ is constant, by the Gauss-Green formula we get
$$
\iint_{S_n}\mathbf{B}  \cdot \mathbf{n}  \, dS = \mathbf{B} \iint_{S_n}\mathbf{n}   \ dS=
\mathbf{B}\cdot \frac{1}{2}\int_0^t \omega_n(s) \wedge \dot{\omega}_n (s)ds.
$$
%&= \sum_{i=1}^3 B_i \cdot \frac{1}{2} \int _0^t \left ( \gamma_n(s) \wedge \gamma_{n}(s) \right )_i ds\\
%&=\sum_{i=1}^3 B_i \cdot \frac{1}{2}\int_0^t \epsilon_{ijk} \gamma_j\dot{\gamma}_k  ds
Let us define for any $i=1,2,3$ the sequence of random variables $h_n^i:C_t\to\R$ by
\begin{equation*}
h_n^i(\omega):=\hat e_i\cdot \int_0^t  \omega_n (s) \wedge \dot{\omega}_n (s) \ ds=\langle H^i(\omega_n), \omega_n\rangle, \qquad \omega \in C_t,
\end{equation*}
where  $\hat e_i$, $i=1,2,3$, are the vectors of the canonical basis of $\R^3$ and  the linear operators $H^i:C_t\to \Hi_t$ are defined by $$(H^i(\omega)(s))^T:=\int_0^s J^i\omega (u)^Tdu,$$
with $\,^T$ denoting the transpose and $J^i$, $i=1,2,3$,  are the matrices:
$$
J^1=\left(\begin{array}{lll} 0 & 0&0\\ 
0 & 0 & -1\\ 0 & 1 & 0   \end{array}\right), \qquad J^2=\left(\begin{array}{lll} 0 & 0&1\\ 
0 & 0 & 0\\ -1 & 0 & 0   \end{array}\right),\qquad J^3=\left(\begin{array}{lll} 0 & -1&0\\ 
1 & 0 & 0\\ 0 & 0 & 0   \end{array}\right). 
$$
Actually the operators $H^i$, $i=1,\dots,3$ have the form \eqref{op-G} and by Lemma \ref{lemmaG} are Hilbert-Schmidt. Further, by Lemma \ref{Lemma3} and Theorem \ref{RAMERFORM}, the renormalized  random variables
$$    h_n^i(\omega)-r_n^i=\langle H^i(\omega_n), \omega_n\rangle- \Tr[P_nH^i]=\int_0^t  (\gamma_n (s)\wedge \dot{\gamma}_n(s))_i  \ ds-\sum_{k=1}^n \left ( e_k (s)\wedge \dot{e}_k(s) \right )_i ds$$
converge in $L^2(C_t,\bP)$  and the limit does not depend on the sequence $\{P_n\}$. By combining these results we obtain the convergence of the sequence 
$$\int_0^t \mathbf{a}(\omega_n(s) )\cdot \dot{\omega_n}(s) ds -\mathbf{B}\cdot\frac{1}{2}\sum_{k=1}^n \int_0^t e_k(s) \wedge \dot{e}_k (s) ds$$ and the limit is independent of  the sequence $\{P_n\}$.
Eventually, by choosing the sequence $\{P_n\}$ of piecewise linear approximations \eqref{gamman}, where the elements $e_n$ of the corresponding basis $\{e_n\}$ satisfy $\int_0^te_n (s)\wedge \dot e_n (s)ds=0$, and by applying lemma \ref{LemmaAppendix} we complete the proof.
\end{proof}
\begin{Remark} 
The renormalization term given in Theorem \ref{const3D} contains, besides the magnetic field $\bB$, the area integrals of the elements $\{e_n\}$ of the orthonormal basis spanning the finite dimensional Hilbert space $P_n\Hi_t$.
This term is gauge independent. However, for a general orthonormal basis in $\Hi_t$, it does not converge to a well defined limit as the following example shows.

%\begin{example}
Let us fix $t\equiv 1$ and let us consider two sequences of real valued functions $\{u_n\}_{n\geq 0}$, $\{v_n\}_{n\geq1}$ defined on the interval $[0,1]$ by 
$u_0(s)=s $ and for $n\geq 1$  $$ u_n(s)= \frac{\cos(2\pi n s)}{2\pi n}, \quad 
v_n(s)= \frac{\sin(2\pi n s)}{2\pi n}, \qquad s\in [0,1]$$
and the sequences of vectors in $\Hi_t$ defined by:
$e_{n,1}:=(u_n,v_n,0)$,  $e_{n,2}:=(u_n,-v_n,0)$, $e_{n,3}:=(v_n,u_n,0)$, $e_{n,4}:=(v_n,-u_n,0)$, $e_{n,5}:=(0,0,u_n)$, $e_{n,6}:=(0,0,v_n)$, 
that together  with the vectors $e_{0,1}:=(u_0,0,0)$, $e_{0,2}:=(0,u_0,0)$ and $e_{n,3}:=(0.0, u_0)$ provide an orthonormal basis of $\Hi_t$.
Given the linear the vector field $\ba:\R^3\to\R^3$ \[
\mathbf{a}(x,y,z)=\left ( \frac{z-y}{2}, \frac{x-z}{2},\frac{y-x}{2}\right ).
\] 
with $\mathbf{B}=\rot \mathbf{a}=(1,1,1)$, and 
taking the vectors $e_{k,1}$ and  $e_{k,4}$, $k=1,\dots, n$, we get
\[
\mathbf{B} \cdot \frac{1}{2}\sum_{k=1}^n \int_0^t e_{k,1} \wedge \dot{e}_{k,1} =\mathbf{B} \cdot \frac{1}{2}\sum_{k=1}^n \int_0^t e_{k,4} \wedge \dot{e}_{k,4} =\sum_{k=1}^n\frac{1}{4\pi k}.
\]
On the other hand, considering the vectors $e_{k,2}$ and  $e_{k,3}$, $k=1, \dots, n$, we have
\[
\mathbf{B} \cdot \frac{1}{2}\sum_{k=1}^n \int_0^t e_{k,2} \wedge \dot{e}_{k,2} =\mathbf{B} \cdot \frac{1}{2}\sum_{k=1}^n \int_0^t e_{k,3} \wedge \dot{e}_{k,3} =-\sum_{k=1}^n \frac{1}{4\pi k},
\]
while the other vectors of the orthonormal basis give vanishing area integrals. Hence, the renormalization term $r_n$ given by \eqref{rn} is not absolutely convergent as $n \to \infty$.
\end{Remark}
A direct consequence of Lemma \ref{lemma-anal-linear} and Theorem \ref{const3D} is the following result.
\begin{corollary}
\label{Cor1}
Under the assumptions of Lemma \ref{lemma-anal-linear}, the sequence of finite dimensional renormalized oscillatory integrals
$$ 
 (2\pi i \hbar)^{-n/2}\widetilde{\int_{P_n\Hi}} e^{\frac{i}{2\hbar} \| \gamma_n \|^2} e^{-\frac{i}{\hbar}\left(\int_0^t \ba (\gamma_n(s)\cdot \dot \gamma_n (s)ds-r_n\right)} \psi_0(\gamma_n(t)+x)d\gamma_n,$$ with the renormalization term $r_n$ given by \eqref{rn}, converges as $n\to \infty $ to the Wiener integral 
 \begin{equation}
 \label{Wfin}
 \mathbb{E} \left [ \psi_0(\sqrt{i\hbar}\omega(t)+x)e^{-\frac{i}{\hbar}\sqrt{i\hbar}\int_0^t\ba(\sqrt{i\hbar}\omega(s)+x)\circ d\omega(s)}\right ]
\end{equation}
and the limit is independent of the sequence $\{P_n\}$ of finite dimensional approximations. In addition, it provides the solution of the Schr\"odinger equation with magnetic field
\begin{equation}
\label{Schfin}
\left\{\begin{array}{l}
i\hbar \partial_t \psi(t,x)=\frac{1}{2}\left (-i\hbar \nabla - \ba\right )^2\psi(t,x)\\
\psi(0,x)=\psi_0(x)
\end{array}
\right.
, \qquad t \in \R^+, \quad x \in \R^3.
\end{equation}
\end{corollary}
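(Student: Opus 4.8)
The plan is to combine Lemma \ref{lemma-anal-linear}, which turns each finite-dimensional renormalized oscillatory integral into a Wiener integral over a piecewise-interpolated Brownian path, with Theorem \ref{const3D}, which controls the limit of the renormalized quadratic Wiener functionals. First I would fix an arbitrary sequence $\{P_n\}$ of projectors as in Definition \ref{intoscinf1}; since $\dim P_n\Hi_t=n$ and $P_n\le P_{n+1}$, one may choose an orthonormal basis $\{e_k\}$ of $\Hi_t$ for which $P_n$ is the orthogonal projection onto $\Span(e_1,\dots,e_n)$, so that the hypotheses of both Lemma \ref{lemma-anal-linear} and Theorem \ref{const3D} are in force. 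As the renormalization term is a real constant it factors out of the integral, and, because $t<t^*$, Lemma \ref{lemma-anal-linear} identifies the remaining (un-renormalized) oscillatory integral with the Wiener integral in \eqref{fey-Wie-findim}, with $\omega_n=\tilde P_n(\omega)$. Thus the object to study is an expectation of the form $\E\big[\psi_0(\sqrt{i\hbar}\,\omega_n(t)+x)\,e^{-\frac{i}{\hbar}\sqrt{i\hbar}\int_0^t\ba(\sqrt{i\hbar}\,\omega_n(s)+x)\cdot\dot\omega_n(s)\,ds}\big]$ multiplied by the exponential of the renormalization constant.

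Next I would use the linearity of $\ba$, writing $\ba(\sqrt{i\hbar}\,\omega_n(s)+x)=\sqrt{i\hbar}\,\ba(\omega_n(s))+\ba(x)$, which (after the cancellation of the scalar factors $-\frac{i}{\hbar}\cdot i\hbar$) splits the stochastic exponent into the purely quadratic functional $\int_0^t\ba(\omega_n(s))\cdot\dot\omega_n(s)\,ds$ and a linear cross-term $-\frac{i}{\hbar}\sqrt{i\hbar}\,\ba(x)\cdot\omega_n(t)$. The renormalization constant is designed so that, in these rescaled variables, it reproduces exactly the subtraction appearing in Theorem \ref{const3D}; hence, by that theorem, $\int_0^t\ba(\omega_n)\cdot\dot\omega_n\,ds$ minus the renormalization converges in every $L^p(C_t,\bP)$ — in particular in probability — to $\int_0^t\ba(\omega(s))\circ d\omega(s)$, with a limit independent of $\{P_n\}$. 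Moreover $\omega_n(t)=\tilde P_n(\omega)(t)\to\omega(t)$ almost surely and in $L^p$ by the It\=o-Nisio theorem, and since $\psi_0$ extends to an entire function (its Fourier transform being compactly supported) the continuous mapping theorem gives $\psi_0(\sqrt{i\hbar}\,\omega_n(t)+x)\to\psi_0(\sqrt{i\hbar}\,\omega(t)+x)$ in probability. Therefore the whole integrand converges in probability to $\psi_0(\sqrt{i\hbar}\,\omega(t)+x)\,e^{-\frac{i}{\hbar}\sqrt{i\hbar}\int_0^t\ba(\sqrt{i\hbar}\,\omega(s)+x)\circ d\omega(s)}$.

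To pass the limit inside the expectation I would establish a uniform-in-$n$ bound on an $L^{1+\delta}$-norm of the integrands by reusing the estimates from the proof of Lemma \ref{lemma-anal-linear}: the hypothesis $t<t^*$ provides the coercivity $\langle\gamma,(\mathbb{I}-2G)\gamma\rangle\ge\epsilon\|\gamma\|^2$ on $\Hi_t$ for some $\epsilon>0$, and the compactness of $\supp\hat\psi_0$ absorbs the contribution of $\psi_0$. Uniform integrability together with the convergence in probability then gives convergence of the expectations to the Wiener integral \eqref{Wfin}, the limit being independent of $\{P_n\}$. Finally, to identify \eqref{Wfin} with the solution of \eqref{Schfin} I would argue as in the proof of Theorem \ref{Teorema4}: the Feynman-Kac-It\=o formula \eqref{heat1} represents the heat semigroup $e^{-\tau H/\hbar}$, $\tau>0$; both sides of that identity extend analytically in $\tau$ to $\{\Rea\tau>0\}$ and continuously up to the imaginary axis — integrability of the Wiener side at $\tau=it$ being guaranteed precisely by $t<t^*$ — so evaluation at $\tau=it$ yields that the right-hand side of \eqref{Wfin} equals $(e^{-\frac{i}{\hbar}Ht}\psi_0)(x)$. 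The step I expect to be the main obstacle is this last interchange of limit and expectation (and the parallel point in the analytic continuation): unlike the bounded-potential case, the integrand is the exponential of a quadratic, L\'evy-area type Wiener functional, whose exponential moments are finite only below a threshold, and it is exactly the restriction $t<t^*$ — equivalently the strict positivity of $\mathbb{I}-2G$ on $\Hi_t$ — that keeps these moments, and hence the Wiener integrals, under control.
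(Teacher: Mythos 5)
Your proposal follows essentially the same route as the paper: Lemma \ref{lemma-anal-linear} converts each finite-dimensional renormalized integral into a Wiener integral over the interpolated paths, Theorem \ref{const3D} handles the limit of the renormalized exponent independently of $\{P_n\}$, and the identification with the Schr\"odinger flow is obtained by analytic continuation of the Feynman--Kac--It\=o formula exactly as in the proof of Theorem \ref{Teorema4}. The only difference is that you make explicit the uniform-integrability step needed to pass from $L^p$-convergence of the renormalized exponent to convergence of the expectations of its exponential --- a step the paper's one-sentence treatment of the first part leaves implicit --- and you correctly locate the role of $t<t^*$ (strict positivity of $\mathbb{I}-2G$, with room to spare for an $L^{1+\delta}$ bound) in controlling the exponential moments of the quadratic, L\'evy-area type functional there.
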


\begin{proof}
The first part of the theorem follows from Lemma \ref{lemma-anal-linear} and Theorem \ref{const3D}. The second part can be proved by using the analyticity properties of the semigroup generated by the quantum Hamiltonian operator $H=\frac{1}{2}(-i\hbar \nabla -\ba)^2$. More precisely
for $t\in \R^+$ the action of the heat semigroup  on the vector $\psi _0$ is given by the Feynman-Kac-It\=o formula:
$$
e^{-\frac{t}{\hbar}H}\psi_0(x)=\bE \left [\psi_0(\sqrt \hbar \omega (t)+x)e^{-\frac{i}{\sqrt\hbar}\int_0^t\ba (\sqrt\hbar \omega (s)+x)\circ d\omega (s)}\right ].
$$
For any $\phi\in L^2(\R^3)$ the inner product $\langle \phi, e^{-z\frac{t}{\hbar}H}\psi_0\rangle$ is an analytic function of $z\in D$, $D=\{z\in \C, Re(z)\geq 0\}$,  continuous on $\bar D$, giving for $z=i$ the inner product between $\phi \in L^2(\R^d)$ and the solution of the Schr\"odinger equation \eqref{Schfin}. For $z\in \R^+$, by the change of variables formula we have 
$$\langle \phi, e^{-z\frac{t}{\hbar}H}\psi_0\rangle=\int_{\R^3}\bar \phi (x)\bE \left [\psi_0(\sqrt{z \hbar} \omega (t)+x)e^{-\frac{i\sqrt {z}}{\sqrt\hbar}\int_0^t\ba (\sqrt{z\hbar} \omega (s)+x)\circ d\omega (s)} \right ]dx.$$
By the assumptions on $t, \ba$, and $\psi_0$, both sides of the equality above are analytic for $z\in D$, continuous in $\bar D$ and  coincide on $\R^+$. Hence, for $z=i$ we obtain that the solution in $L^2(\R^3)$ of \eqref{Schfin} is given by \eqref{Wfin}.
\end{proof}

\begin{Remark}\label{RemDysonV-2}
The results of Corollary \ref{Cor1} can be  generalized to the case where a scalar potential $V\in \Fo_c(\R^3)$ is added to the Hamiltonian, i.e. $H=H_0+V$ with $H_0=\frac{1}{2}(-i\hbar \nabla -\ba)^2$. Indeed, in this case, since the function $V:\R^3\to \R$ has the form \eqref{hypV}, it is bounded and can be extended to an analytic function $V:\C^3\to \C$. It is easy to verify that the multiplication operator associated with $V$ is bounded, i.e. $\|V\psi \|\leq \sup_{x\in \R^3}|V(x)|\|\psi\|_{L^2(\R^3)}$ hence the perturbative Dyson expansion for the vector $e^{-\frac{it}{\hbar}(H_0+V)}\psi_0$  is convergent. We have 
$$e^{-\frac{it}{\hbar}(H_0+V)}\psi_0=\sum_m \left(-\frac{i}{\hbar}\right)^m \phi_m,$$ where 
$$\phi_m=\int_{\Delta_m(t)}e^{-\frac{i}{\hbar}H_0(t-s_m)}Ve^{-\frac{i}{\hbar}H_0(s_m-s_{m-1})}\cdots Ve^{-\frac{i}{\hbar}H_0(s_2-s_1)}Ve^{-\frac{i}{\hbar}H_0s_1}\psi_0 ds_1\dots ds_m$$
with  $\Delta_m(t)=\{(s_1, \dots, s_m ) \in \R^m \colon 0\leq s_1\leq \dots\leq s_m\leq t\}$

By exploiting the analyticity for the semigroup generated by $H_0$,   the Dyson expansion for heat semigroup $e^{-\frac{t}{\hbar}(H_0+V)}\psi_0$ as well as the techniques used in the proof of Corollary \ref{Cor1}, it is simple to prove that for any $m\in \N$ the vector $\phi_m$ can be represented in terms of the limit  of the following sequence of finite dimensional renormalized oscillatory integrals:
$$ \phi_m(t,x)=\lim_{n\to \infty}
 (2\pi i \hbar)^{-n/2}\widetilde{\int_{P_n\Hi}} e^{\frac{i}{2\hbar} \| \gamma_n \|^2} e^{-\frac{i}{\hbar}\left(\int_0^t \ba (\gamma_n(s)\cdot \dot \gamma_n (s)ds-r_n\right)}\left(\int_0^t V(\gamma_n(s)+x)ds\right)^m \psi_0(\gamma_n(t)+x)d\gamma_n.$$
The limit is independent on the choice of the sequence $\{P_n\}$ of finite dimensional approximations and it is equal to the Wiener integral
$$ \mathbb{E} \left [ \psi_0(\sqrt{i\hbar}\omega(t)+x)\left(\int_0^t V(\sqrt{i\hbar}\omega(s)+x)ds\right)^me^{-\frac{i}{\hbar}\sqrt{i\hbar}\int_0^t\ba(\sqrt{i\hbar}\omega(s)+x)\circ d\omega(s)}\right ].$$
\end{Remark}
\begin{Remark}
Similarly as in Remark \ref{Rem-10} we point out that all the results in Sect. \ref{sez4} can be extended to the case where the space dimension $d$ is arbitrary.
\end{Remark}

\section{Acknowledgements}
The first named author is very grateful to Elisa Mastrogiacomo and Stefania Ugolini for invitations to University of Insubria, Varese, and Universit\'a degli studi, Milano,  that greatly facilitated our scientific cooperation. Also our participations to workshops in Trento, organized by Stefano Bonaccorsi and Sonia Mazzucchi, in 2017, and in Rome, organized by Alessandro Teta, in 2018 gave us an excellent opportunity of advancing our joint research and we are very grateful for these opportunities. The third named author gratefully acknowledges the hospitality of the Hausdorff Center and the University of Bonn, as well as the support of the Alexander von Humboldt Stiftung.  

%%%%%%%%%%%%%%APPENDIX%%%%%%%%%%%%%%%%%%%%%%%%%%%%%

\section*{Appendix A: proof of Lemma \ref{LemmaAppendix}}

%{\bf ATTENZIONE: RICONTROLLARE LE NOTAZIONI}\\

Let us consider the sequence of random variables $\{g_n\}$ defined by 
$$
g_n (\omega)= \sum_{j=0}^{n-1} a(\sqrt{i\hbar}\omega(s_j))\left ( \omega(s_{j+1})-\omega(s_j) \right ), \qquad \omega \in C_t
$$
and the stochastic integral$$
G (\omega)= \int_0^t a(\sqrt{i\hbar}\omega(s)) d\omega(s),$$
where $s_j=\frac{jt}{n}$ and $a$ is the Fourier transform of a complex bounded measure on $\R$ with compact support contained in the ball $B_R$ with radius $R \in \R^+$:
\[
a(\sqrt{i\hbar}\omega(s))=\int_{\mathbb{R}}e^{i\sqrt{i\hbar}\xi\omega(s)}d\mu(\xi).
\] 
Without loss of generality, we can restrict ourselves to prove the convergence of $g_n$ to $G$ in $L^p(C_t, \bP)$ for $p$ even. 

By BDG inequalities (see ,e.g. \cite{KarSh}) we have
\begin{equation}
\label{E1}
\mathbb{E} \left [ | G-g_n |^{2p} \right] \le C_{2p} \cdot  \mathbb{E} \left [ \left ( \sum_{j=0}^{n-1} {\int_{s_j}^{s_{j+1}}} \left | a(\sqrt{i\hbar}\omega(s))-a(\sqrt{i\hbar}\omega(s_j))\right |^2 ds\right )^p \right ],
\end{equation}
with $C_{2p}$ a positive constant. Moreover:
\begin{align}
\left | a(\sqrt{i\hbar}\omega(s))-a(\sqrt{i\hbar}\omega(s_j))\right |^2 =& ( \omega(s)-\omega(s_j))^2 \left | \int_0^1 \sqrt{i\hbar}a'\left(\sqrt{i\hbar}\left(\omega(s_j)+u(\omega(s)-\omega(s_j))\right)\right)du \right |^2 \nonumber \\
=&  ( \omega(s)-\omega(s_j))^2 \left | i\sqrt{i\hbar} \int_0^1 \int_{\R} \xi e^{i\sqrt{i\hbar}\xi(\omega(s_j)+u(\omega(s)-\omega(s_j)))} d\mu(\xi) du \right |^2  \label{App1}\\
%\le& \hbar ( \omega(s)-\omega(s_j))^2 \left ( \int_0^1 \int_{\R} |\xi| e^{-\frac{\sqrt{2}}{2}\xi({\omega(s_j)}+u({\omega(s)}-{\omega(s_j)}))}du \right )^2 \\
%=& \hbar ( \omega(s)-\omega(s_j))^2 \int_0^1 \int_0^1 \int_{\R}\int_{\R}
%|\xi_1||\xi_2|e^{-\frac{\sqrt{2}}{2}\xi_1({\omega(s_j)}+u_1({\omega(s)}-{\omega(s_j)}))} \\
%& e^{-\frac{\sqrt{2}}{2}\xi_1({\omega(s_j)}+u_2({\omega(s)}-{\omega(s_j)}))}d|\mu|(\xi_1)d|\mu|(\xi_2)du_1du_2\\
\leq &\hbar ({\omega(s)}-{\omega(s_j)})^2 \cdot \mathcal{G}(\omega(s),{\omega(s_j)}), \nonumber
\end{align}
where 
\begin{gather}
\mathcal{G}(\omega(s),{\omega(s_j)})= \nonumber\\
=\int_0^1 \int_0^1 \int_{\R}\int_{\R}
|\xi_1||\xi_2|e^{-\frac{\sqrt{2}}{2}\xi_1({\omega(s_j)}+u_1({\omega(s)}-{\omega(s_j)}))} 
 e^{-\frac{\sqrt{2}}{2}\xi_1({\omega(s_j)}+u_2({\omega(s)}-{\omega(s_j)}))}d|\mu|(\xi_1)d|\mu|(\xi_2)du_1du_2 \label{DefG}
 \end{gather}
Using \eqref{App1}, can rewrite the expectation (\ref{E1}) as follows
\begin{align}
\E \left [ \left | G-g_n \right |^{2p}\right ] \le& C_{2p} \hbar ^p \E \left(\sum_{j=0}^{n-1} {\int_{s_j}^{s_{j+1}}} ({\omega(s)}-{\omega(s_j)})^2\mathcal{G}(\omega(s),{\omega(s_j)})ds \right)^p\nonumber\\
=& C_{2p} \hbar ^p \cdot  \E  \left [ \sum_{j_1,\dots, j_p=0}^{n-1} {\int_{s_{j_1}}^{s_{j_1+1}}}\cdots {\int_{s_{j_p}}^{s_{j_p+1}}} {(\omega(s_1)-\omega(s_{j_1}))}^2 \cdots {(\omega(s_p)-\omega(s_{j_p}))}^2 \right .\nonumber\\
 & \qquad  {\mathcal{G}(\omega(s_1),\omega(s_{j_1}))} \cdots {\mathcal{G}(\omega(s_p),\omega(s_{j_p}))}  ds_1 \cdots ds_p  \vast ]\nonumber\\
 \leq &C_{2p}\hbar ^p I_n^1 \, I_n^2, \label{69bis}
\end{align}
where in the latter inequality we used Schwarz inequality, with
\begin{align}
I_n^1=&\sqrt{\E \left [ {\sum_{j_1,\dots, j_p=0}^{n-1}} \left ( {\int_{s_{j_1}}^{s_{j_1+1}}}\cdots {\int_{s_{j_p}}^{s_{j_p+1}}} {(\omega(s_1)-\omega(s_{j_1}))}^4 \cdots {(\omega(s_p)-\omega(s_{j_p}))}^4 ds_1 \cdots ds_p \right ) \right ]},\label{I1n}
\\
I_n^2=& \sqrt{\E \left [ {\sum_{j_1,\dots, j_p=0}^{n-1}} \left ( {\int_{s_{j_1}}^{s_{j_1+1}}}\cdots {\int_{s_{j_p}}^{s_{j_p+1}}}  {\mathcal{G}(\omega(s_1),\omega(s_{j_1}))}^2 \cdots {\mathcal{G}(\omega(s_p),\omega(s_{j_p}))}^2 ds_1 \cdots ds_p \right )\right ]}.\label{I2n}
\end{align}
We will show that $I_n^1 \to 0$ for $n \to \infty$ and that $I_n^2$ is bounded for all $n$.

Let us consider the integral  $I_n^1$ given by \eqref{I1n}.
%, equal to:\[
%I^1_n=\sqrt{{\sum_{j_1,\dots, j_p=0}^{n-1}} \left [ {\int_{s_{j_1}}^{s_{j_1+1}}}\cdots {\int_{s_{j_p}}^{s_{j_p+1}}} \E \left [ {(\omega(s_1)-\omega(s_{j_1}))}^4 \cdots {(\omega(s_p)-\omega(s_{j_p}))}^4 \right ] ds_1 \cdots ds_p \right ] }.
%\]

All the expectations $\E \left [ {(\omega(s_1)-\omega(s_{j_1}))}^4 \cdots {(\omega(s_p)-\omega(s_{j_p}))}^4 \right ]$ can be computed taking into account the  coincidences of the indices $j_r$, with $r=1,\dots, p$ in the following way:
\begin{gather} 
{\int_{s_{j_1}}^{s_{j_1+1}}}\cdots {\int_{s_{j_p}}^{s_{j_p+1}}} \E \left [ {(\omega(s_1)-\omega(s_{j_1}))}^4 \cdots {(\omega(s_p)-\omega(s_{j_p}))}^4 \right ] ds_1 \cdots ds_p= \nonumber\\
=\int_0^{\frac{t}{n}} \cdots  \int_0^{\frac{t}{n}} \E \left [ \omega(s_1)^4 \cdots \omega(s_{p_1})^4 \right ] ds_1 \cdots ds_{p_1}
\int_0^{\frac{t}{n}} \cdots  \int_0^{\frac{t}{n}} \E \left [ \omega(s_{p_1+1})^4\cdots \omega(s_{p_1+p_2})^4\right ]ds_{{p_1}+1} \cdots ds_{p_1+p_2} \nonumber \\
\qquad \dots \qquad \int_0^{\frac{t}{n}} \cdots  \int_0^{\frac{t}{n}} \E \left[\omega(s_{p_{i-1}+1})^4 \cdots \omega(s_{p_{i-1}+p_i})^4 \right ] ds_{p_{i-1}+1} \cdots ds_{p_{i-1}+p_i}, \label{intop}
\end{gather}
with $p_1+p_2+ \cdots + p_i=p$ and we have used that in distribution ${\omega(s)}-{\omega(s_j)} \sim \omega(s-s_j)$.
%\end{align*}.%For instance, if the indices are all different we have
%\[
%\E \left [ \prod_{r=1}^p (\omega(s_r)-\omega(s_{j_r}))^4 \right ] =\prod_{r=1}^p \E \left [ (\omega(s_r)-\omega(s_{j_r}))^4 \right ]=\prod_{j=1}^p c \cdot (s_r-s_{j_r})^2,
%\]
%with $c$ constant, and by integration on $[s_{j_1},s_{j_1+1}] \times \cdots [s_{j_p},s_{j_p+1}] $ we get 
%\[
%{\int_{s_{j_1}}^{s_{j_1+1}}}\cdots {\int_{s_{j_p}}^{s_{j_p+1}}} \E \left [ \prod_{r=1}^p (\omega(s_r)-\omega(s_{j_r}))^4 \right ]=k_p \cdot \left ( \frac{t}{n} \right )^{3p}.
%\]
%Since  in distribution ${\omega(s)}-{\omega(s_j)} \sim \omega(s-s_j)$ the term 
%\[
%{\int_{s_{j_1}}^{s_{j_1+1}}}\cdots {\int_{s_{j_p}}^{s_{j_p+1}}} \E \left [ {(\omega(s_1)-\omega(s_{j_1}))}^4 \cdots {(\omega(s_p)-\omega(s_{j_p}))}^4 \right ] ds_1 \cdots ds_p.
%\] can be computed in the following way:
%\begin{align*}
%\int_0^{\frac{t}{n}} \cdots  \int_0^{\frac{t}{n}} &\E \left [ \omega(s_1)^4 \cdots \omega(s_{p_1})^4 \right ] ds_1 \cdots ds_{p_1}\\
%\int_0^{\frac{t}{n}} \cdots  \int_0^{\frac{t}{n}} &\E \left [ \omega(s_{p_1+1})^4\cdots \omega(s_{p_1+p_2})^4\right ]ds_{{p_1}+1} \cdots ds_{p_1+p_2}\\
%& \qquad \vdots \\
%\int_0^{\frac{t}{n}} \cdots  \int_0^{\frac{t}{n}} &\E \left[\omega(s_{p_{i-1}+1})^4 \cdots \omega(s_{p_{i-1}+p_i})^4 \right ] ds_{p_{i-1}+1} \cdots ds_{p_{i-1}+p_i}
%\end{align*}
%with $p_1+p_2+ \cdots + p_i=p$ %(we noticed that in distribution ${\omega(s)}-{\omega(s_j)} \sim \omega(s-s_j)$). 
Further, the generic term containing $\tilde{p}$ factors, for any $\tilde{p}=1, \dots, p$, can be computed as
\begin{equation*}
\int_0^{\frac{t}{n}} \cdots  \int_0^{\frac{t}{n}} \E \left[ \omega(s_1)^4 \cdots \omega(s_{\tilde{p}})^4 \right] ds_1 \cdots ds_{\tilde{p}}= \tilde{p}! \idotsint\displaylimits_{0<s_1<\cdots<s_{\tilde{p}}<t/n} \E \left [  \omega(s_1)^4 \cdots \omega(s_{\tilde{p}})^4\right ]ds_1 \cdots ds_{\tilde{p}}.
\end{equation*}
By a straightforward calculation\footnote{
The first step it is given by:
\begin{align*}
\E \left [ \omega(s_1)^4\cdots \omega(s_{\tilde{p}})^4\right ]&=\E \left [ \omega(s_1)^4\cdots (\omega(s_{\tilde{p}})-\omega(s_{\tilde{p}-1})+\omega(s_{\tilde{p}-1}))^4 \right]\\
&= \E \left [\omega(s_1)^4\cdots \omega(s_{\tilde{p}-1})^8 \right]+\E \left [  \omega(s_1)^4\cdots \omega(s_{\tilde{p}-1})^6(\omega( s_{\tilde{p}})-\omega( s_{\tilde{p}-1}))^4\right]\\
& \ + 6\cdot \E \left [ \omega(s_1)^4\cdots \omega(s_{\tilde{p}-1})^6(\omega( s_{\tilde{p}})-\omega( s_{\tilde{p}-1}))^2\right],
\end{align*}
then we proceed in the same way for $\tilde{p}$ steps.
} we can represent $\E \left [ \omega(s_1)^4 \cdots \omega(s_{\tilde{p}})^4 \right ]$ as a homogeneous polynomial
$P(s_1,s_2-s_1, \dots, s_{\tilde{p}}-s_{\tilde{p}-1})$ with $\deg(P)=2\tilde{p}$. We can rewrite it as $Q(s_1,s_2,\dots, s_{\tilde{p}})$, with $\deg(Q)=2\tilde{p}$ (its coefficients depending only on $\tilde{p}$).  Thanks to the change of variables $t_i=\frac{s_i}{t/n}$, we have
\[
\idotsint\displaylimits_{0<s_1<\cdots<s_{\tilde{p}}<t/n}Q(s_1,s_2,\dots, s_{\tilde{p}})ds_1\cdots ds_{\tilde{p}}=\idotsint\displaylimits_{0<t_1<\cdots<t_{\tilde{p}}<1} \left(\frac{t}{n} \right)^{3\tilde{p}} Q(t_1,t_2,\dots, t_{\tilde{p}}) dt_1 \cdots dt_{\tilde{p}}= C \cdot \left (\frac{t}{n} \right )^{3\tilde{p}},
\]
with 
$$
C=\int_{0<t_1< \cdots < t_{\tilde{p}}<1} Q(t_1,t_2,\dots, t_{\tilde{p}}) dt_1 \cdots dt_{\tilde{p}}.
$$ 
Applying the same argument for all terms in \eqref{intop} we get
\begin{align*}
{\int_{s_{j_1}}^{s_{j_1+1}}}\cdots {\int_{s_{j_p}}^{s_{j_p+1}}} \E \left [ {(\omega(s_1)-\omega(s_{j_1}))}^4 \cdots {(\omega(s_p)-\omega(s_{j_p}))}^4 \right ] ds_1 \cdots ds_p
&=C_1 \cdots C_i \cdot \left(\frac{t}{n}\right )^{3(p_1+\cdots+p_i)}\\
&=\widetilde{C}\cdot\left(\frac{t}{n}\right)^{3p},
\end{align*}
with $\tilde{C}=C_1 \cdots C_i$.
Thus all the contributions can be estimated  by $\widetilde{K}_p \cdot \left(\frac{t}{n}\right)^{3p}$, where $\widetilde{K}_p$ is the maximum of the constants computed as $\widetilde{C}$. Eventually,  using ${\sum_{j_1,\dots, j_p=0}^{n-1}} 1 =n^p$ we get
\[
I_n^1 \le \sqrt{{\sum_{j_1,\dots, j_p=0}^{n-1}} \widetilde{K}_p\cdot \left(\frac{t}{n}\right)^{3p}}=\sqrt{\left(\frac{t}{n}\right)^{3p}\cdot \widetilde{K}_p \cdot n^p}=\widetilde{\mathcal{K}}_p \cdot \frac{t^{\frac{3p}{2}}}{n^p} \xrightarrow{n \to \infty} 0.
\]

Concerning $I_n^2$, recalling the definition \eqref{DefG} of $\mathcal{G}$,  we have to study
$$
%\E \left [ {\sum_{j_1,\dots, j_p=0}^{n-1}} \left ( {\int_{s_{j_1}}^{s_{j_1+1}}}\cdots {\int_{s_{j_p}}^{s_{j_p+1}}}  {\mathcal{G}(\omega(s_1),\omega(s_{j_1}))}^2 \cdots {\mathcal{G}(\omega(s_p),\omega(s_{j_p}))}^2 ds_1 \cdots ds_p \right )\right ]= \\
 I_n^2=\sqrt{{\sum_{j_1,\dots, j_p=0}^{n-1}} \left ( {\int_{s_{j_1}}^{s_{j_1+1}}}\cdots {\int_{s_{j_p}}^{s_{j_p+1}}}  \E \left [ {\mathcal{G}(\omega(s_1),\omega(s_{j_1}))}^2 \cdots {\mathcal{G}(\omega(s_p),\omega(s_{j_p}))}^2 \right ] ds_1 \cdots ds_p \right )}.
 $$

By writing explicitly the functions $\mathcal{G}(\cdot,\cdot)$, we get the following bound:
\begin{gather*}
%{\sum_{j_1,\dots, j_p=0}^{n-1}} \left ( {\int_{s_{j_1}}^{s_{j_1+1}}}\cdots {\int_{s_{j_p}}^{s_{j_p+1}}}  \E \left [ {\mathcal{G}(\omega(s_1),\omega(s_{j_1}))}^2 \cdots {\mathcal{G}(\omega(s_p),\omega(s_{j_p}))}^2 \right ] ds_1 \cdots ds_p \right )\\
\left ( I_n^2 \right )^2\le {\sum_{j_1,\dots, j_p=0}^{n-1}} \left ( {\int_{s_{j_1}}^{s_{j_1+1}}}\cdots {\int_{s_{j_p}}^{s_{j_p+1}}} \int_0^1 \cdots \int_0^1 \int_{\R} \cdots \int_{\R} \E \left [ \prod_{i=1}^p |\xi_i||\tilde{\xi}_i||\zeta_i||\tilde{\zeta}_i| \right . \right .\\
e^{-\frac{\sqrt{2}}{2}\xi_i(\omega(s_{j_i})+u_i(\omega(s_i)-\omega(s_{j_i}))} \cdot 
e^{-\frac{\sqrt{2}}{2}\tilde{\xi}_i(\omega(s_{j_i})+\tilde{u}_i(\omega(s_i)-\omega(s_{j_i}))} \cdot 
e^{-\frac{\sqrt{2}}{2}\zeta_i(\omega(s_{j_i})+v_i(\omega(s_i)-\omega(s_{j_i}))}
\\
e^{-\frac{\sqrt{2}}{2}\tilde{\zeta}_i(\omega(s_{j_i})+\tilde{v}_i(\omega(s_i)-\omega(s_{j_i}))} \Bigg ] d|\mu|(\xi_i)d|\mu|(\tilde{\xi}_i)d|\mu|(\zeta_i)d|\mu|(\tilde{\zeta}_i)du_id\tilde{u}_idv_id\tilde{v}_i \Bigg ).
\end{gather*}
Since by assumption the support of the measure $\mu$ is contained in a ball  $B_R$ of radius $R$, we can bound $|\xi_i||\tilde{\xi}_i||\zeta_i||\tilde{\zeta}_i|\le R^{4}$ on the support of $\mu$ obtaining :
\begin{gather*}
 \left ( I_n^2 \right )^2 \le R^{4p} {\sum_{j_1,\dots, j_p=0}^{n-1}} \left ( {\int_{s_{j_1}}^{s_{j_1+1}}}\cdots {\int_{s_{j_p}}^{s_{j_p+1}}} \int_0^1 \cdots \int_0^1 \int_{\R} \cdots \int_{\R} \E \left [  \prod_{i=1}^p e^{-\frac{\sqrt{2}}{2}\omega(s_i)(\xi_i+\tilde{\xi}_i+\zeta_i+\tilde{\zeta}_i)}\right . \right .\\
e^{-\frac{\sqrt{2}}{2}(\omega(s_i)-\omega(s_{j_i})(\xi_i u_i+\tilde{\xi}_i \tilde{u}_i + \zeta_i v_i + \tilde{\zeta}_i \tilde{v}_i)} \Bigg ] d|\mu|(\zeta_i)d|\mu|(\tilde{\zeta}_i)du_id\tilde{u}_i dv_i d\tilde{v}_i \Bigg ).
\end{gather*}
We notice that the term under the expectation can be computed  as
%\begin{gather*}
%F_{j_1,\dots,j_p}(s_1, s_{j_1},\dots,s_p,s_{j_p},\xi_1,\tilde{\xi}_1 , \zeta_1, \tilde{\zeta}_1 u_1,\tilde{u}_1,v_1,\tilde{v}_1 \dots, \xi_p,\tilde{\xi}_p , \zeta_p, \tilde{\zeta}_p u_p,\tilde{u}_p,v_p \tilde{v}_{p})=\\
%$$\E \left [ \prod_{i=1}^p   e^{-\frac{\sqrt{2}}{2}\omega(s_i)(\xi_i+\tilde{xi}_i+\zeta_i+\tilde{\zeta}_i)} \cdot e^{-\frac{\sqrt{2}}{2}(\omega(s_i)-\omega(s_{j_i}))(\xi_i u_i+\tilde{\xi}_i \tilde{u}_i + \zeta_i v_i + \tilde{\zeta}_i \tilde{v}_i)} \right ]$$
%\end{gather*}
\[
\exp \left [ P(s_1, s_{j_1},\dots,s_p,s_{j_p},\xi_1,\tilde{\xi}_1 , \dots, v_p \tilde{v}_{p})\right ],
\]
where $P$ is a polynomial function, which maximum $M_P$ for $s_i,s_{k_i} \in [0,t]$, $u_i,\tilde{u}_i,v_i,\tilde{v}_i \in [0,1]$, and $\xi_i, \tilde{\xi}_i, \zeta_i, \tilde{\zeta}_i \in \supp(\mu)$, for all $i = 1 \dots p$.
Finally, by integrating and summing with respects to all variables, we get a finite term of the order $t^p\cdot |\mu|^{4p}\cdot M$, proving a uniform bound for  $I_n^2$. Hence\[
g_n(\omega) \xrightarrow{L^{2p}(C_t,\mathbb{P})} \int_0^t a(\sqrt{i\hbar}\omega(s))d\omega(s), \qquad \omega \in C_t.
\]

Let us consider now the sequence of random variables $\{h_n\}$ given by
$$ 
h_n(\omega)=\sum_{j=0}^{n-1} \frac{1}{2}\cdot a'( \sqrt{ i\hbar} {\omega(s_j)})({\omega(s_j)}j-{\omega(s_j)})^2, \qquad \omega \in C_t,
$$ 
and set $a'(\sqrt{i\hbar}\omega(s))\equiv \phi(\omega(s))$, for any $s \in [0,t$]. Let $H$ be the random variable defined by 
%$\begin{align*}
%h_n =& \sum_{j=0}^{n-1} \phi(\omega(s_j))\left ( \omega(s_{j+1})-\omega(s_j) \right )^2; \\
$$H(\omega) =\frac{1}{2} \int_0^t  \phi(\omega(s)) ds, \qquad \omega \in C_t.$$
%\end{align*}
%Moreover we have $\lambda=\max_j(s_{j+1}-s_j)$. 
We have:
\begin{align*}
H(\omega)-h_n(\omega)=&
\frac{1}{2}\sum_{j=0}^{n-1}\left ( \int_{s_j}^{s_{j+1}} \phi(\omega(s)) ds - \phi (\omega(s_j))\left (\omega(s_{j+1})-\omega(s_j)\right )^2\right )\\
=&\frac{1}{2}\sum_{j=0}^{n-1}\Bigg( \int_{s_j}^{s_{j+1}} \phi(\omega(s_j))ds + \int_{s_j}^{s_{j+1}}\phi'(\omega(s_j))(\omega(s)-\omega(s_j))ds\\
&+\int_{s_j}^{s_{j+1}} \int_0^1\left(\omega(s)-\omega(s_j))^2 \phi''(\omega (s_j)+u(\omega(s)-\omega(s_j)))(1-u)\right) duds \\
&\quad \qquad \qquad\qquad \qquad   - \phi (\omega(s_j)) \left (\omega(s_{j+1})-\omega(s_j) \right )^2 \Bigg ).
\end{align*}
%\left | \sum_{j=0}^{n-1}\left ( \int_{s_j}^{s_{j+1}} \phi(\omega(s)+\omega (s_j) -\omega (s_j)) ds - \phi (\omega(s_j))\left (\omega(s_{j+1})-\omega(s_j)\right )^2\right ) \right |^p.
%Now, by series expansion, we can write
%\begin{align*}
%H(\omega)-h_n(\omega)=&\frac{1}{2}\sum_{j=0}^{n-1}\left ( \int_{s_j}^{s_{j+1}} \phi(\omega(s_j))ds + \int_{s_j}^{s_{j+1}}\phi'(\omega(s_j))(\omega(s)-\omega(s_j))ds  \right . \right .  \\
%&+\int_{s_j}^{s_{j+1}} \int_0^1\left(\omega(s)-\omega(s_j))^2 \phi''(\omega (s_j)+u(\omega(s)-\omega(s_j)))(1-u)\right) duds \\
%&\quad \qquad \qquad  - \phi (\omega(s_j)) \left (\omega(s_{j+1})-\omega(s_j) \right )^2 \Bigg ) 
%\end{align*}
%\phi(\omega(s))=& \phi(\omega(s)+\omega (s_j) -\omega (s_j))\\
%=& \phi(\omega(s_j))+\phi'(\omega(s_j))(\omega(s)-\omega(s_j))+\mathrm{P}\left((\omega(s)-\omega(s_j))^2 \right).
%\end{align*}
%where $\mathrm{P}\left ( (\omega(s)-\omega(s_j))^2 \right )$ is the integral reminder. Thus
Hence 
$$\|H-h_n\|_{L^{p}(C_t, \bP)}\leq \frac{1}{2}\left(\|J_n^1\|_{L^{p}(C_t, \bP)}+\|J_n^2\|_{L^{2p}(C_t, \bP)}+\|J_n^3\|_{L^{p}(C_t, \bP)}\right),$$
where:
\begin{eqnarray*}
J^1_n(\omega)&=& \sum_{j=0}^{n-1}  \phi(\omega(s_j)) \left((s_{j+1}-s_j)-(\omega(s_{j+1})-\omega(s_j))^2 \right ); \\
J_n^2 (\omega)&=&\sum_{j=0}^{n-1}  \phi'(\omega(s_j)) \int_{s_j}^{s_{j+1}} (\omega(s)-\omega(s_j))ds;  \\
J^3_n(\omega)&=& \sum_{j=0}^{n-1}\int_{s_j}^{s_{j+1}} \int_0^1\left(\omega(s)-\omega(s_j))^2 \phi''(\omega (s_j)+u(\omega(s)-\omega(s_j)))(1-u)\right) duds.
\end{eqnarray*}
%\begin{align*}
%|H-h_n|^{2p}=& \left | \sum_{j=0}^{n-1}\left ( \int_{s_j}^{s_{j+1}} \phi(\omega(s_j))ds + \int_{s_j}^{s_{j+1}}\phi'(\omega(\omega(s_j))(\omega(s)-\omega(s_j))ds  \right . \right .  \\
%&+\int_{s_j}^{s_{j+1}} \mathrm{P}\left((\omega(s)-\omega(s_j))^2 \right) ds  - \phi (\omega(s_j)) \left (\omega(s_{j+1})-\omega(s_j) \right )^2 \Bigg ) \Bigg |^{2p}\\
%=& \left | \sum_{j=0}^{n-1} \right (\phi(\omega(s_j))\left((s_{j+1}-s_j)-(\omega(s_{j+1})-\omega(s_j))^2\right)\\
%&+\phi'(\omega(s_j))\int_{s_j}^{s_{j+1}}\omega(s)-\omega(s_j) ds+\int_{s_j}^{s_{j+1}} \mathrm{P}\left ((\omega(s)-\omega(s_j))^2\right) ds \Bigg ) \Bigg |^{2p},
%\end{align*}
%and,  evaluating the expectation,
%\begin{align*}
%\E\left ( \left | H-h_n \right |^{2p} \right ) \le& \E \left ( \sum_{j=0}^{n-1} \left | \phi(\omega(s_j)) \left((s_{j+1}-s_j)-(\omega(s_{j+1})-\omega(s_j))^2 \right ) \right |^{2p} \right .\\
%& \ + \sum_{j=0}^{n-1} \left | \phi'(\omega(s_j))\int_{s_j}^{s_{j+1}} \omega(s)-\omega(s_j)ds\right |^{2p}  \\
%& \ +\left . \sum_{j=0}^{n-1}\left |\int_{s_j}^{s_{j+1}} \mathrm{P}\left ( (\omega(s)-\omega(s_j))^2\right ) ds  \right |^{2p} \right ) \\
%=& \E \left (\Sigma_1 +\Sigma_2 + \Sigma_3 \right ) \\
%=& \mathbb{E}(\Sigma_1)+\mathbb{E}(\Sigma_2)+\mathbb{E}(\Sigma_3).
%\end{align*}
Without loss of generality we can consider the case where the function $\phi:\R\to \C$ is real valued, since the general case follows easily by the inequality $\|J_n^1\|_{L^{p}}\leq \|Re(J_n^1)\|_{L^{p}}+\|Im(J_n^1)\|_{L^{p}}$.\\
The $L^{p}$ norm of the function  $J_n^1$ can be estimated as:
\begin{eqnarray*}
\bE[|J^1_n|^{2p}]&=&\sum_{j_1,...,j_{2p}=0}^{n-1}\mathbb{E}\Bigg [ \phi(\omega(s_{j_1}))\cdots  \phi(\omega(s_{j_{2p}}))\left((s_{{j_1}+1}-s_{j_1})-(\omega(s_{j_1+1})-\omega(s_{j_1}))^2 \right )\\%\bar  \phi(\omega(s_{l_1}))\cdots  \bar\phi(\omega(s_{l_p})) \\
& &\qquad \cdots \left((s_{{j_{2p}}+1}-s_{j_{2p}})-(\omega(s_{j_{2p}+1})-\omega(s_{j_{2p}}))^2 \right )\Bigg ]\\ %& & \left((s_{{l_1}+1}-s_{l_1})-(\omega(s_{l_1+1})-\omega(s_{l_1}))^2 \right ) \cdots\left((s_{{l_p}+1}-s_{l_p})-(\omega(s_{l_p+1})-\omega(s_{l_p}))^2 \right ) \\
&\leq &(2p)!\sum_{0\leq j_1\leq...\leq j_{2p}\leq  n-1}\mathbb{E}\Bigg [ \phi(\omega(s_{j_1}))\cdots  \phi(\omega(s_{j_{2p}}))\left((s_{{j_1}+1}-s_{j_1})-(\omega(s_{j_1+1})-\omega(s_{j_1}))^2 \right )\\%\bar  \phi(\omega(s_{l_1}))\cdots  \bar\phi(\omega(s_{l_p})) \\
& &\qquad \cdots \left((s_{{j_{2p}}+1}-s_{j_{2p}})-(\omega(s_{j_{2p+1}})-\omega(s_{j_{2p}}))^2 \right )\Bigg ]\ .
\end{eqnarray*}
Since $\bE[((s_{j+1}-s_{j})-(\omega(s_{j}+1)-\omega(s_j))^2 )]=0$, the sum above contains only the $n^{2p-1}$terms where $ j_1\leq\dots\leq j_{2p-1}=j_{2p}$. Indeed, if   $ j_1\leq\dots\leq j_{2p-1}<j_{2p}$:
\begin{gather*}
\mathbb{E}\Bigg [\prod_{i=1}^{2p} \phi(\omega(s_{j_i}))\left((s_{{j_1}+1}-s_{j_1})-(\omega(s_{j_1+1})-\omega(s_j))^2 \right )\Bigg]=\\
%\bar  \phi(\omega(s_{l_i}))\left((s_{{l_i}+1}-s_{l_i})-(\omega(s_{l_i+1})-\omega(s_{l_i}))^2 \right)\Bigg]= 
=\mathbb{E}\Bigg [\prod_{i=1}^{2p-1} \phi(\omega(s_{j_i}))\left((s_{{j_1}+1}-s_{j_1})-(\omega(s_{j_1+1})-\omega(s_j))^2 \right )\phi(\omega(s_{j_{p}}))\Bigg] \cdot \\
%\bar  \phi(\omega(s_{l_i}))\left((s_{{l_i}+1}-s_{l_i})-(\omega(s_{l_i+1})-\omega(s_{l_i}))^2\right) \\
\mathbb{E}\Bigg [\left((s_{{j_{2p}}+1}-s_{j_{2p}})-(\omega(s_{j_{2p+1}})-\omega(s_{j_{2p}}))^2 \right )\Bigg]=0.
\end{gather*}
Direct computation shows that all the terms in this sum are of order $ O((s_{{j}+1}-s_{j})^{2p})=O(1/n^{p})$ or less.  
 Indeed, taking into account the possible coincidences of indexes, all the terms are of the form
 \begin{multline}\label{nnpp}
 \mathbb{E}\Bigg [ \phi(\omega(s_{k_1}))^{p_1}\left((s_{{k_1}+1}-s_{k_1})-(\omega(s_{k_1+1})-\omega(s_{k_1}))^2 \right )^{p_1}\cdots \\ \cdots \phi(\omega(s_{k_r}))^{p_r}\left((s_{{k_r}+1}-s_{k_r})-(\omega(s_{k_r+1})-\omega(s_{k_r}))^2 \right )^{p_r}\Bigg],
\end{multline}
where $p_1+\dots+p_r=2p$ and $k_1<k_2< \cdots <k_r$. By writing $\phi(x)=\int e^{i\sqrt i\xi x}
d\nu(\xi)$, $x \in R$ with $\nu $ complex Borel measure on $\R$ supported in the ball $B_R$, the integral \eqref{nnpp} can be estimates as:
\begin{gather}
\int_{\R^{2p}}\bE\left[\left((s_{{k_r}+1}-s_{k_r})-(\omega(s_{k_r+1})-\omega(s_{k_r}))^2 \right )^{p_r}\right]\prod_{\alpha=0}^{r-2}\Bigg(\bE\left[ e^{i\sqrt i(\omega (s_{k_{r-\alpha}})-\omega (s_{k_{r-\alpha-1}+1}))\sum_{l=1}^{\sum_{\beta=0}^\alpha p_{r-\beta}}\xi_l}\right]\Bigg) \nonumber \\
\Bigg(\prod_{\alpha =1}^{r-1}\bE\left[ e^{i\sqrt i(\omega (s_{k_{r-\alpha}+1})-\omega (s_{k_{r-\alpha}}))\sum_{l=1}^{\sum_{\beta=0}^\alpha p_{r-\beta}}\xi_l}\left((s_{{k_{r-\alpha}}+1}-s_{k_{r-\alpha}})-(\omega(s_{k_{r-\alpha}+1})-\omega(s_{k_{r-\alpha}}))^2 \right )^{p_{r-\alpha}}\right]\Bigg) \nonumber \\
\bE\left[e^{i\sqrt i\omega (s_{k_1})\sum_{l=1}^{2p}\xi_l}\right]]d\nu (\xi_1)\dots d\mu (\xi_{2p}).
\end{gather}
Now, since $\omega(t_1)-\omega(t_2)$ has the same law as $(t_1-t_2)^{\frac{1}{2}}X$, with $X$ a standard normal random variable and for all $\zeta \in \R$, $0\leq t_1\leq t_2$, $k\in \N$, we have:
\begin{eqnarray*}&&\bE[e^{i\sqrt i \zeta (\omega(t_1)-\omega(t_2)) }]=e^{-\frac{i}{2}(t-s)\xi ^2},\\
& &\bE[e^{i\sqrt i \zeta X }X^{2k}]=H_{2k}(\sqrt i \zeta)e^{-\frac{i}{2}\zeta^2},
\end{eqnarray*}
with $H_n$ denoting the $n-th$ Hermite polynomial.
Hence:
\begin{eqnarray*}
&&\Bigg | \bE\left[ e^{i\sqrt i(\omega (s_{k_{r-\alpha}})-\omega (s_{k_{r-\alpha-1}+1})) \sum_{l=1}^{\sum_{\beta=0}^\alpha p_{r-\beta}}\xi_l}\right]\Bigg|=1\\
&&\Bigg|\bE\left[ e^{i\sqrt i(\omega (s_{k_{r-\alpha}+1})-\omega (s_{k_{r-\alpha}}))\sum_{l=1}^{\sum_{\beta=0}^\alpha p_{r-\beta}}\xi_l}\left((s_{{k_{r-\alpha}}+1}-s_{k_{r-\alpha}})-(\omega(s_{k_{r-\alpha}+1})-\omega(s_{k_{r-\alpha}}))^2 \right )^{p_{r-\alpha}}\right]\Bigg|\leq \\ & & \qquad \qquad  (s_{{k_{r-\alpha}}+1}-s_{k_{r-\alpha}})^{p_{r-\alpha}}P_{\alpha ,p_{r-\alpha}}(\xi_1, \dots,\xi_{2p}),
\end{eqnarray*}
with $P_{\alpha ,p_{r-\alpha}}:\R^{2p}\to \R$ suitable polynomial functions.
By setting 
$$
M:=\max_{r, p_1, \dots p_r}\prod_{\alpha=1}^{r-1}\max_{\xi_1,\dots\xi _p\in B_R}|P_{\alpha ,p_{r-\alpha}}(\xi_1, \dots,\xi_p)|,
$$
we get
$\bE[|J^1_n|^{2p}]\leq M\frac{t^{2p}}{n}|\nu(B_R)|^{2p}$, obtaining the required convergence result:
 $$\lim_{n\to \infty}\bE[|J^1_n|^{2p}]\to 0.$$
%We can notice that
%\[
%\frac{\omega(s_{j+1})-\omega(s_j)}{(s_{j+1}-s_j)^{\frac{1}{2}}}\sim \mathcal{N}(0,1)
%\]
%and if we set
%\[
%X_{s_j}=\frac{\left ( \omega(s_{j+1})-\omega(s_j)\right )^2}{(s_{j+1}-s_j)},
%\]
%we get
%\[
%\mathbb{E}\left  [ \left((s_{j+1}-s_j)-(\omega(s_{j+1})-\omega(s_j))^2 \right )^{2p} \right ]= (s_{j+1}-s_j)^{2p}\mathbb{E}\left [ (1-X_{s_j})^{2p}\right ].
%\]
%where the $2p$-moments are finite. By the properties of $\phi$ we can conclude that 
%\[
%\mathbb{E}(\Sigma_1) \xrightarrow{n \to \infty} 0. 
%\]
The same argument produces an  analogous estimate for $\bE[|J^2_n|^{2p}]$. Indeed, always assuming without loss of generality that  the function $\phi$ is real valued, we get:
\begin{eqnarray*}
\bE[|J^2_n|^{2p}]&=&
\sum_{j_1,\dots,j_{2p} =0}^{n-1}\mathbb{E}\Bigg [ \phi'(\omega(s_{j_1}))\cdots  \phi'(\omega(s_{j_{2p}}))
\int_{s_{j_1}}^{s_{{j_1}+1}} (\omega(u_1)-\omega(s_{j_1}))du_1\\ & &\qquad \qquad \qquad \qquad \cdots \int_{s_{j_{2p}}}^{s_{{j_{2p}}+1}} (\omega(u_{2p})-\omega(s_{j_{2p}}))du_{2p}\\
&\leq &(2p)!\sum_{0\leq j_1\leq\dots\leq j_{2p}\leq  n-1}^{n-1}\mathbb{E}\Bigg [ \phi'(\omega(s_{j_1}))\cdots  \phi'(\omega(s_{j_{2p}}))
\int_{s_{j_1}}^{s_{{j_1}+1}} (\omega(u_1)-\omega(s_{j_1}))du_1\\ & &\qquad \qquad \qquad \qquad \cdots \int_{s_{j_{2p}}}^{s_{{j_{2p}}+1}} (\omega(u_{2p})-\omega(s_{j_{2p}}))du_{2p}.\end{eqnarray*}
Again, since $\bE[\int_{s_j}^{s_j+1}(\omega(u)-\omega (s_j))du]=0$, we can consider only the $n^{2p-1} $ terms with $ j_1\leq \cdots\leq  j_{2p-1}= j_{2p}$.  All terms  have the same structure as the integrals appearing in \eqref{69bis} and by using  the same arguments applied for the estimates of integrals \eqref{I1n} and \eqref{I2n}, we  obtain $\lim_{n\to \infty }\bE[|J^2_n|^{2p}]=0$.
%&= &(2p)!\sum_{0\leq j_1\leq\dots\leq j_p\leq l_1,\dots\leq l_{p-1}= l_p \leq n-1}^{n-1}\mathbb{E}\Bigg [ \phi'(\omega(s_{j_1}))\cdots  \phi'(\omega(s_{j_p}))\bar  \phi'(\omega(s_{l_1}))\cdots  \bar\phi'(\omega(s_{l_p})) \\
%& &\int_{s_{j_1}}^{s_{{j_1}+1}} (\omega(u_1)-\omega(s_{j_1}))du_1\cdots \int_{s_{j_p}}^{s_{{j_p}+1}} (\omega(u_p)-\omega(s_{j_p}))du_p\\ & & \int_{s_{l_1}}^{s_{l_1+1}} (\omega(v_1)-\omega(s_{l_1}))dv_1\cdots\int_{s_{l_p}}^{s_{l_p+1}} (\omega(v_p)-\omega(s_{l_p}))dv_p\Bigg ]
%\end{eqnarray*}
%\[
%\int_{s_j}^{s_{j+1}}\omega(s)-\omega(s_j) ds= \int_{s_j}^{s_{j+1}} (s_{j+1}-s) d\omega(s)
%\]
%And then we can estimate the expectation of 
%\[
%\mathbb{E} \left[ \left ( \int_{s_j}^{s_{j+1}} (s_{j+1}-s) d\omega(s)\right )^{2p} \right]
%\]
%thanks to a BDG-type inequality, and we get again
%\[
%\mathbb{E}(\Sigma_2) \xrightarrow{n \to \infty} 0. 
%\]
Furthermore, the same argument applies also to the term $J_n^3$, yielding $\lim_{n\to \infty }\bE[|J^3_n|^{2p}]=0$.

Thus
\[
h_n \xrightarrow{L^{p}(\Omega,\mathbb{P})} \int_{0}^t \phi(\omega(s))ds.
\]

We estimate the last term $r_n$ by the Cauchy-Schwarz inequality as follows
\begin{align}\nonumber
|r_n|^{2p}\le& \E \left [ \left ( \sum_{j=0}^{n-1} \int_0^1\int_0^1\int_{\R}\int_{\R} |\kappa_1||\kappa_2| e^{-\frac{\sqrt{2}}{2} \kappa_1 ({\omega(s_j)}+(\omega(s_{j+1})-\omega(s_j))u_1)} e^{-\frac{\sqrt{2}}{2} \kappa_2 ({\omega(s_j)}+(\omega(s_{j+1})-\omega(s_j))u_2)} \right . \right . \nonumber\\
& \qquad ({\omega(s_j)}j-{\omega(s_j)})^6 (1-u_1)^2(1-u_2)^2 du_1du_2d|\mu|(\kappa_1)d|\mu|(\kappa_2) \bigg)^p \vast ]\nonumber\\
\le& \sqrt{\E \left [ {\sum_{j_1,\dots, j_p=0}^{n-1}} (\omega(s_{j_1+1})-\omega(s_{j_1}))^{12}\cdots (\omega(s_{j_p+1})-\omega(s_{j_p}))^{12}\right ]}\nonumber\\
& \ \sqrt{\E \left [ {\sum_{j_1,\dots, j_p=0}^{n-1}} \mathcal{F}(\omega(s_{j_1+1}),\omega(s_{j_1}))^2 \cdots \mathcal{F}(\omega(s_{j_p}),\omega(s_{j_p+1}))^2\right ] },\label{eqnnn}
\end{align}
where 
\begin{align*}
\mathcal{F}({\omega(s_j)},{\omega(s_j)}j)= &\int_0^1\int_0^1 \int_{\R}\int_{\R}|\kappa_1||\kappa_2|  e^{-\frac{\sqrt{2}}{2} \kappa_1 ({\omega(s_j)}+(\omega(s_{j+1})-\omega(s_j))u_1)} \\
&e^{-\frac{\sqrt{2}}{2} \kappa_2 ({\omega(s_j)}+(\omega(s_{j+1})-\omega(s_j))u_2)} (1-u_1)^2(1-u_2)^2 du_1du_2d|\mu|(\kappa_1)d|\mu|(\kappa_2).
\end{align*}
Both factors appearing in the last line of  \eqref{eqnnn}
 can be estimated by the same techniques applied in the study of the terms \eqref{I1n} and \eqref{I2n},  % In particular for $\Phi_1$ we get\footnote{
%Let, for instance, be the indices all different, we have
%\[
%\E \left [ \prod_{r=1}^p (\omega(s_{r+1})-\omega(s_r))^{12}\right ]=\prod_{r=1}^p \E \left [  (\omega(s_{r+1})-\omega(s_r))^{12}\right ] = \prod_{r=1}^p b \cdot (s_{r+1}-s_r)^6=\iota_p \cdot \left ( \frac{t}{n} \right )^{6p},
%\]
%where $b$ and $\iota_p$ are constants. Thus, summing:
%\[
%{\sum_{j_1,\dots, j_p=0}^{n-1}} \iota_p \cdot \left ( \frac{t}{n} \right )^{6p} = \mathcal{J}_p \cdot \frac{1}{n^{5p}}.
%\]
%It would be analogous for the general term. 
%} 
%\[
%\Omega_1 \le \widetilde{\mathcal{J}}_p \cdot {n^{-\frac{5}{2}p}} \xrightarrow{n \to \infty} 0,
%\]
%with $\widetilde{\mathcal{J}}_p$ constant depending only on $p$. Moreover, since $\Phi_2$ is bounded\footnote{In the same way as $\Omega_2$ is bounded.}, we have 
obtaining
$r_n \xrightarrow{L^p(\Omega,\mathbb{P})}0$.

Eventually, we conclude that the sequence of random variables $f_n$ defined as 
\[f_n(\omega)=\int_0^t \ba \left (  \sqrt{ i\hbar} \omega_n(s) \right )\cdot \dot{\omega}_n(s) ds, 
\]
%\[
%\omega_n(s)=\sum_{k=1}^n\chi_{[t_{k-1},t_k]}(s)\left(\omega (t_{k-1})+\frac{\omega(t_{k})-\omega(t_{k-1})}{t_{k}-t_{k-1}}(s-t_{k-1})\right).
%\]
converges, as $n \to \infty$, in $L^p(\Omega,\mathbb{P})$ to the random variable $f$ defined as the Stratonovich stochastic integral 
\[
f(\omega)=\int_0^t \ba ( \sqrt{ i\hbar} \omega(s))\circ d\omega(s).
\]
\finedim

\end{document}